\documentclass[nofootinbib, a4paper, aps, pra, reprint]{revtex4-1}

\usepackage[T1]{fontenc}
\usepackage[english]{babel}
\usepackage[dvipsnames]{xcolor}
\usepackage{amsfonts, amsmath, amssymb, amsthm} 
\usepackage{dsfont, braket, xfrac, fancyvrb} 
\usepackage{hyperref, setspace, enumitem, titlesec, tikz} 
\usepackage{bbold, bm, bbm}

\renewcommand\bra[1]{{\langle{#1}|}}
\makeatletter
\renewcommand\ket[1]{%
  \@ifnextchar\bra{\k@t{#1}\!}{\k@t{#1}}%
}
\newcommand\k@t[1]{{|{#1}\rangle}}
\makeatother

\addto\captionsenglish{}
\addto\captionsenglish{}

\newtheorem{lemma}{Lemma}
\newcommand{\FC}{\mathcal{F}}
\newcommand{\HC}{\mathcal{H}}
\newcommand{\OC}{\mathcal{O}}

\hypersetup{colorlinks,
            linkcolor = {blue!50!black},
            citecolor = {blue!50!black},
            urlcolor = {blue!80!black}
            }

\titleformat{\section}{\small\bfseries\uppercase}{\thesection.}{0.7em}{\centering}

\DeclareMathOperator{\tr}{tr}
 
\begin{document}
\title{Biased Random Access Codes}
\author{Gabriel Pereira Alves}
\affiliation{Faculty of Physics, University of Warsaw, Pasteura 5, 02-093 Warsaw, Poland}
\author{Nicolas Gigena}
\affiliation{Faculty of Physics, University of Warsaw, Pasteura 5, 02-093 Warsaw, Poland}
\author{J\k{e}drzej Kaniewski}
\affiliation{Faculty of Physics, University of Warsaw, Pasteura 5, 02-093 Warsaw, Poland}
\date{\today}

\begin{abstract}
A random access code (RAC) is a communication task in which the sender encodes a random message into a shorter one to be decoded by the receiver so that a randomly chosen character of the original message is recovered with some probability. Both the message and the character to be recovered are assumed to be uniformly distributed. In this paper, we extend this protocol by allowing more general distributions of these inputs, which alters the encoding and decoding strategies optimizing the protocol performance, with either classical or quantum resources. We approach the problem of optimizing the performance of these biased RACs with both numerical and analytical tools. On the numerical front, we present algorithms that allow a numerical evaluation of the optimal performance over both classical and quantum strategies and provide a Python package designed to implement them, called RAC-tools. We then use this numerical tool to investigate single-parameter families of biased RACs in the $n^2 \mapsto 1$ and $2^d \mapsto 1$ scenarios. For RACs in the $n^2 \mapsto 1$ scenario, we derive a general upper bound for the cases in which the inputs are not correlated, which coincides with the quantum value for $n=2$ and, in some cases for $n=3$. Moreover, it is shown that attaining this upper bound self-tests pairs or triples of rank-1 projective measurements, respectively. An analogous upper bound is derived for the value of RACs in the $2^d \mapsto 1$ scenario, which is shown to be always attainable using mutually unbiased measurements if the distribution of input strings is unbiased.
\end{abstract}

\maketitle

\clearpage

\section{Introduction}

In the past decades several instances have been found in which quantum resources provide an advantage in the performance of a given task. Quantum computing algorithms \cite{Deutsch92, Grover96, Bernstein97, Simon97, Montanaro16}, such as Shor's factorization algorithm \cite{Shor94}, are just one example of the power of quantum resources: Spatially separated parties can use entanglement \cite{Horodecki09} in a shared quantum state, for instance, to improve their performance in a nonlocal game \cite{Brunner14, Uola20, Budroni22}, to quantum teleport \cite{Bennett93, Bouwmeester97, Boschi98, Pirandola15} the state of a third system held by one of them, or to densely encode classical information to be sent via a quantum channel \cite{Bennett92, Mattle96, Nielsen11}. Quantum devices have also shown to be powerful resources for certain communication tasks in which a quantum state is prepared by one party and sent to another one, who performs a measurement to extract information. Such tasks are known as prepare-and-measure experiments, and they find application in quantum information processing protocols like quantum key distribution (QKD) \cite{Bennett84, Inoue02, Grosshans03, Stucki05, Xu20}, randomness certification \cite{Li12, Bowles14, Passaro15, Acin16}, and quantum random access codes \cite{Ambainis99, Nayak99, Ambainis02, Ambainis09}. A random access code (RAC) is a communication task in which a string of characters, chosen at random from a given alphabet, is encoded into a shorter string in such a way that any of the characters in the original string can be recovered, with some probability, by means of a decoding strategy. Both the string to be encoded and the character to be recovered are uniformly distributed, with the encoding party not knowing in advance which character should be retrieved by the decoding procedure. In that sense, the RAC can be understood as a form of nondeterministic data compression.

The implementation of the RAC protocol and its variations have been the subject of intense research, finding applications in cryptography \cite{Pawlowski11, Chaturvedi21}, self-testing of measurements \cite{Tavakoli18, Farkas19}, foundational aspects of no-signaling correlations \cite{Pawlowski09}, and quantum communication complexity \cite{Buhrman01, Gavinsky09}. In this work, we introduce a generalization of the RAC protocol in which neither the string to be encoded, nor the character to be recovered are uniformly distributed. Biasing the distribution of inputs has a nontrivial effect on the encoding and decoding strategies optimizing the performance of the protocol in both, quantum and classical realizations. We approach here the problem of finding the optimal performance of such biased RACS, and the strategies attaining it, with both numerical and analytical techniques. In the numerical front we present the RAC-tools Python package, built to implement algorithms providing the exact classical value and lower bounds to the quantum value of an arbitrary biased RAC. On the analytical side, we derive upper bounds for the optimal performance over projective measurements of RACs in which the character strings to be encoded consist either of two characters to be chosen from length $d$ alphabet, or of $n$ characters to be chosen from a length $2$ alphabet. In the cases in which these upper bounds are attainable, we study the optimal quantum strategies achieving the optimal performance, paying special attention to the dependence of the optimal measurements on the biasing parameters and the regions in parameter space in which quantum strategies provide an advantage over classical ones. These analytical results are then compared with those produced by the numerical package.


\section{Biased RACs \label{s:b-racs}}\label{s:b-racs}

A RAC scenario, as depicted in FIG.~\ref{RAC}, involves two parties, Alice and Bob. A RAC scenario is parameterized by integers $n, m, d$,  which we assume to be equal to or larger than 2. Alice is given an $n$-character string ${\bf x}\in S=\{0, ..., m-1\}^{\times n}$ and asked to encode it into a single character $\mu\in\{0, ..., d-1\}$, which she will later send to Bob. Bob, on the other hand, is asked to decode Alice's message in order to retrieve the value of the $y$-th character $x_y$ in the original string. In order to do so Bob evaluates the image $b\in\{0, ..., m-1\}$ of $\mu$ under a previously chosen decoding function, represented in the figure by Bob's box. We consider the task successful when Bob correctly guesses $x_y$ from his decoding function, i.e., when $b=x_y$. We denote a scenario like the one described above by $\smash{n^m\overset{d}{\mapsto} 1}$ RAC, where we allow the message and the characters in the string to belong to alphabets with different cardinalities, $m\neq d$. Whenever $m=d$, however, we will use the notation $n^d\mapsto 1$, since it is the usual notation in the literature.
\begin{figure}[t!]
\begin{center}
    \includegraphics{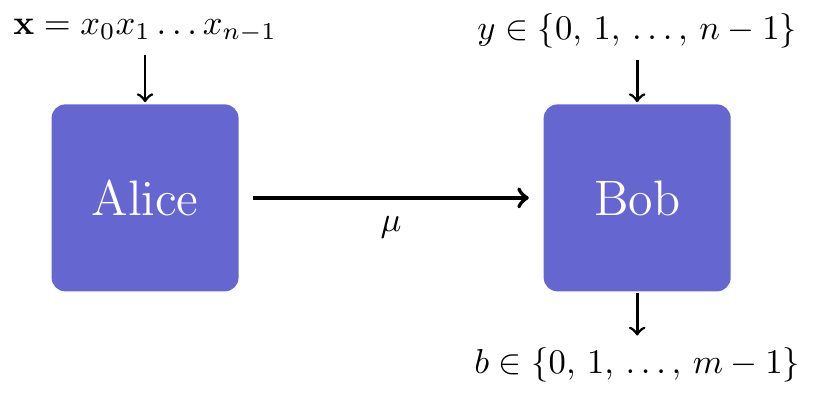}
\end{center}
\caption{The $n^m\overset{d}{\mapsto} 1$ RAC. Alice encodes her input $\mathbf{x} = x_0 \, x_1 \, \dots \, x_{n - 1}$ into a message $\mu$ which is sent to Bob. Based on the message $\mu$ and his input $y$, Bob tries to guess the $y$-th character of Alice. Each character $x_i$ of $\mathbf{x}$ ranges from 0 to $m - 1$, for $i = 0,~1,~ ...,~ n - 1$.} \label{RAC}
\end{figure}

The figure of merit that is commonly used to study a RAC is the \emph{average success probability} $\bar{P}$, which is simply an average of the winning probabilities over all combinations of $\mathbf{x}$ and $y$:
\begin{align}
\bar{P} = \frac{1}{nm^n} \sum_{\mathbf{x}, y} p (b = x_y \, | \, \mathbf{x}, ~ y), \label{ASP}
\end{align}
where $p (b=x_y \, | \, \mathbf{x}, ~ y)$ denotes the probability of a successful decoding when string ${\bf x}$ is encoded and character $x_y$ is to be recovered, and the factor $\frac{1}{n m^n}$ reflects the assumption that both $\bf x$ and $y$ are uniformly distributed. The optimal pairs of encoding-decoding strategies implemented by Alice and Bob will therefore be those maximizing $\bar P$. Note that the trivial strategy of outputting a fixed (or random) value of $b$ achieves average success probability of $\frac{1}{m}$, so we will be interested only in strategies that outperform this value.

In this work, we study a more general class of RACs in which the distribution of Alice's and Bob's inputs, $\bf x$ and $y$ respectively, is not necessarily uniform. We refer to these as \emph{biased} RACs, or $b$-RACs.
In fact, let us start with the most general real linear functional, i.e.~a tensor $\alpha_{\mathbf{x} y b}$ of order ($n + 2$), which attributes a specific weight to each combination of inputs $\mathbf{x}$ and $y$ and output $b$. The value of this functional on a probability distribution $p (b \, | \, \mathbf{x},\, y)$ equals:
\begin{align}
\mathcal{F} = \sum_{\mathbf{x}, y, b} \alpha_{\mathbf{x} y b} \, p (b \, | \, \mathbf{x},\, y). \label{general functional}
\end{align}
Since the probability distribution is normalized an additive shift in the coefficients of the tensor results in an additive shift in the value. Thus, we can focus on tensors which are non-negative. Similarly, by normalizing the coefficients we can, without loss of generality, consider tensors satisfying $\sum_{\mathbf{x}, y, b} \alpha_{\mathbf{x} y b} = 1$. Now, we would like to focus our attention on functionals that to some extent resemble the standard random access code, where the winning condition reads $b = x_{y}$. In this spirit, we will consider only tensors whose coefficients vanish whenever $b \neq x_{y}$. Due to non-negativity and normalization such linear functionals can be interpreted as probability distributions over ${\bf x}$ and $y$, and let us denote them by $\alpha_{{\bf x} y}$. Then, the value of the functional equals
\begin{align}
\mathcal{F} = \sum_{\mathbf{x}, y} \alpha_{\mathbf{x} y} \, p ( b=x_y \, | \, \mathbf{x},\, y), \label{RAC functional}
\end{align}
and it should be clear that this is precisely the same as the original RAC except that the distribution of inputs might be nonuniform (in the RAC we have $\alpha_{{\bf x} y}=\frac{1}{nm^n}$). Hence, biasing the input distribution of a RAC is analogous to biasing or tilting functionals in a Bell nonlocality scenario \cite{Acin12, Yang13, Bamps15}, as the consequence of adopting bias is to modify the functional in Eq.~\eqref {ASP} and, in turn, the optimal realization. An interesting aspect of this generalization is that the inputs of Alice and Bob are not necessarily independent. To the best of our knowledge such scenarios were first analyzed in Ref.~\cite{Kaczynska21}. An alternative direction would be to change the goal of the decoding function from recovering a given character to a more general function of the input string, a generalization that has recently been explored in Ref.~\cite{Doriguello21}, but we do not consider such scenarios in this work.


As is the case with Bell scenarios, $b$-RACs can be fundamentally interpreted as an experiment in which a particular behavior of the involved devices, specified by the conditional probabilities $\{p(b=x_y|{\bf x}, y)\}$, determines a certain value for the previously specified figure of merit $\FC$ in Eq.\eqref{RAC functional}. In the case of $b$-RACs the behavior is determined by the encoding and decoding strategies implemented by Alice and Bob. So far we have focused on classical strategies in which, upon receiving her input $\bf x$, Alice computes its image $\mu$ under an encoding function $E : \{ 0,\, ...,\, m - 1 \}^{\times n} \mapsto \{ 0,\, ...,\, d - 1 \}$ and sends it to Bob. Bob takes $\mu$ as the argument of his $y$-th decoding function $D_y : \{ 0,\, ...,\, d - 1 \}  \mapsto \{ 0,\, ...,\, m - 1 \}$ producing $b = D_y(\mu)$. The strategies we just described are deterministic, i.e., the probability of Bob's output being $b$ is given by
\begin{align}
p(b \, |\, \mathbf{x}, \, y) = 
\begin{cases}
    1 & \text{if } b = D_y(E(\mathbf{x})) \\
    0 & \text{otherwise} \label{deterministic behaviors}
\end{cases}, \quad \forall ~\mathbf{x}, \, y, \, b.
\end{align}
Clearly, if Alice and Bob decide to employ some nondeterministic strategy (even if we allow them to share classical randomness), the corresponding behavior $\{p(b=x_y|{\bf x}, y)\}$ will belong to the convex hull of classical deterministic behaviors. It follows then that the $b$-RAC functional in Eq.~\eqref{RAC functional} attains its maximum $\FC_C$ for one of these deterministic strategies, which can be found through an exhaustive search, as is the case for the local value of Bell functionals. It should nonetheless be noted that in a bipartite scenario with $n$ measurement settings per party and $m$ measurements per setting the number of deterministic strategies scales as $m^{2n}$, whereas in the $\smash{n^m\overset{d}{\mapsto} 1}$ $b$-RAC scenario this number grows double-exponentially, according to $d^{m^n}\times m^{dn}$. Therefore, computing the classical value through exhaustive search becomes infeasible even for small values of $n$, $d$, and $m$. The following lemma provides a simpler and more efficient approach to the computation of the classical value.

\begin{lemma}
\textup{(a)} For a $n^m\overset{d}{\mapsto} 1$ b-RAC and a fixed encoding function $E ( \mathbf{x} ) = \mu$, the optimal decoding functions are $D_y^*(\mu) = b$, where $b$ is the character that maximizes the sum
\begin{align}
\sum_{\mathbf{x}\in S} \alpha_{\mathbf{x}, y, b} \, \delta_{\mu, E ( \mathbf{x} )}. \label{corollary 1 equation}
\end{align}
$\textup{(b)}$ For a $n^m\overset{d}{\mapsto} 1$ b-RAC and fixed decoding functions $\{D_y\}_y$, the optimal encoding function is given by $E^*(\mathbf{x}) = \mu$, where $\mu$ is the character that maximizes the sum
\begin{align}
\sum_{y = 0}^{n - 1} \alpha_{\mathbf{x}, y, D_y ( \mu )}. \
\end{align}
\label{EXHAUSTIVE SEARCH LEMMA}
\end{lemma}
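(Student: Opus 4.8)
The plan is to reduce the optimization to a family of independent, per-index maximizations, exploiting the fact that on a deterministic behavior the functional $\FC$ is a sum of terms that can be reorganized so each of Alice's or Bob's local decisions appears in exactly one term. Recall that, as argued just before the statement, the maximum of $\FC$ over all (possibly randomized, shared-randomness-assisted) strategies is attained on a deterministic one, so it suffices to optimize over the choice of $E$ and $\{D_y\}_y$. Substituting the deterministic behavior \eqref{deterministic behaviors}, $p(b\,|\,\mathbf{x},y)=\delta_{b,D_y(E(\mathbf{x}))}$, into the general functional gives
\begin{align}
\FC = \sum_{\mathbf{x}, y} \alpha_{\mathbf{x}, y, D_y(E(\mathbf{x}))} = \sum_{\mathbf{x}, y, b} \alpha_{\mathbf{x}, y, b}\, \delta_{b, D_y(E(\mathbf{x}))}.
\end{align}

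For part (a), I would fix $E$ and regroup the outer sum first by $y$ and then, within each $y$-block, by the value $\mu = E(\mathbf{x})$:
\begin{align}
\FC = \sum_{y=0}^{n-1} \sum_{\mu=0}^{d-1} \sum_{\mathbf{x}\in S} \alpha_{\mathbf{x}, y, D_y(\mu)}\, \delta_{\mu, E(\mathbf{x})}.
\end{align}
The value $D_y(\mu)\in\{0,\dots,m-1\}$ is a free choice that can be made independently for every pair $(y,\mu)$, and it occurs only in the corresponding $(y,\mu)$-summand; hence $\FC$ is maximized by selecting, for each $(y,\mu)$, the character $b = D_y^*(\mu)$ maximizing $\sum_{\mathbf{x}\in S}\alpha_{\mathbf{x}, y, b}\,\delta_{\mu, E(\mathbf{x})}$, which is precisely \eqref{corollary 1 equation}; ties are broken arbitrarily. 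For part (b) I would instead fix $\{D_y\}_y$ and regroup by the value of $\mathbf{x}$, writing $\FC = \sum_{\mathbf{x}\in S}\sum_{y=0}^{n-1}\alpha_{\mathbf{x}, y, D_y(E(\mathbf{x}))}$; now $E(\mathbf{x})\in\{0,\dots,d-1\}$ is a free, independent choice occurring only in the $\mathbf{x}$-summand, so $\FC$ is maximized by taking, for each $\mathbf{x}$, the message $\mu = E^*(\mathbf{x})$ maximizing $\sum_{y=0}^{n-1}\alpha_{\mathbf{x}, y, D_y(\mu)}$, as claimed.

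The argument is essentially bookkeeping, so I do not expect a genuine obstacle; the one point that needs care is the decoupling claim — that after reorganizing, the summands share no common decision variable. This holds because $E$ assigns an independent value to each input string $\mathbf{x}$ and each $D_y$ assigns an independent value to each message $\mu$, with no consistency constraint linking distinct inputs or distinct messages. Once this is made explicit, the termwise maximization is valid and both parts follow immediately.
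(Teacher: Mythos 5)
Your proposal is correct and follows essentially the same route as the paper's proof in Appendix A: substitute the deterministic behavior into the functional, regroup the sum by $(y,\mu)$ for part (a) or by $\mathbf{x}$ for part (b), and maximize termwise since each value $D_y(\mu)$ or $E(\mathbf{x})$ is an independent choice appearing in exactly one summand. Your explicit remark on why the summands decouple is a welcome clarification of a step the paper leaves implicit, but it is not a different argument.
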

It follows from Lemma \ref{EXHAUSTIVE SEARCH LEMMA}, which is proved in Appendix \ref{proof of the exhaustive search lemma}, that we can reduce the exhaustive search to a search either over encoding functions only -- statement (a) -- or over decoding functions only -- statement (b). In the first case, the complexity of the problem reduces from $d^{m^n}\times m^{dn}$ to $d^{m^n}\times nmd$, whereas in the second it reduces to $d\times m^{n(d + 1)}$. A similar treatment is usually implemented in Bell scenarios, where optimal measurement (response function in the classical case) can be computed if the remaining components are fixed.

\subsection{Quantum value of a biased RAC} \label{see-saw method}

As an information processing task, the RAC can be generalized to quantum strategies. A quantum strategy involves quantum devices, which transforms the $b$-RAC scenario into a particular case of a prepare-and-measure experiment \cite{Gallego10}. Such scenarios are often encountered in many new quantum technologies such as quantum communication \cite{Gisin07} and quantum cryptography \cite{Gisin02}. In a quantum strategy Alice, upon receiving a string $\mathbf{x}$, encodes it no longer in a classical character $\mu$, but in a qudit density operator $\rho_\mathbf{x}$ over a $d$-dimensional Hilbert space $\HC$ that she then sends to Bob. Once Bob receives Alice's preparation and his input $y$, he performs a decoding measurement described by operators $\{ M_y^b \}_{b = 1}^m$, producing output $b$. From Born's rule it follows then that the probability of a successful decoding is $p(b=x_y|{\bf x}, y)=\tr (\rho_{\bf x} M_y^{x_y})$, and the ensuing value of the figure of merit  in Eq.~\eqref{RAC functional} reads
\begin{equation}
\mathcal{F} = \sum_{\mathbf{x}, y} \alpha_{\mathbf{x} y} \tr \, (\rho_\mathbf{x} M_y^{x_y}). \label{qrac-functional}
\end{equation}
The optimal quantum encoding-decoding strategies will be those for which the functional in Eq.~\eqref{qrac-functional} attains its maximum value, which we will denote by $\FC_Q$ to distinguish it from the optimal value over the set of classical behaviors, which we will denote from now on by $\FC$. The optimization problem over both preparations $\rho_{\bf x}$ and measurements $\{M_y^{x_y}\}$ involved in the determination of $\FC_Q$ is in general hard, but it can be approached numerically by means of a {\it see-saw} algorithm, as we will describe below. This method, which relies on the fact that for fixed preparations optimal measurements can be found efficiently and vice-versa, is a known numerical technique for obtaining lower bounds for the quantum value of Bell inequalities \cite{Werner01, Ito06}.

In a nutshell, the see-saw algorithm consists in the repeated implementation of a two-step optimization procedure, since we need to optimize the $b$-RAC functional over preparations and over measurements. We start with a set of randomly chosen measurements, for which we can find the optimal preparations by noting that the functional $\FC$ can be written as \footnote{We relax here the RAC condition $b=x_y$ to highlight the fact that the argument holds as it is for more general functionals.}
\begin{align}
\mathcal{F} = \sum_{\mathbf{x}, y ,b} \alpha_{\mathbf{x} y b} \tr \, (\rho_\mathbf{x} M_y^b)
= \sum_\mathbf{x} \tr \, ( \rho_\mathbf{x} \sum_{y, b} \alpha_{\mathbf{x} y b} M_y^b ), \label{optimization of states}
\end{align}
becoming thus apparent that the density matrix $\rho_\mathbf{x}$ maximizing the trace is given by a state associated with the largest eigenvalue of the positive semidefinite operator $\sum_{y, b} \alpha_{\mathbf{x}yb} M_y^b$. This means that if we are interested in computing the quantum value we can without loss of generality assume $\rho_\mathbf{x}$ to be a pure state. Once the preparations are determined the algorithm proceeds to the second step, which is finding the ensuing optimal measurements for these preparations. In order to do so it is convenient to again rewrite the functional $\FC$ as
\begin{align}
\mathcal{F} = \tr \, \bigg( \sum_{\mathbf{x}, y, b} M_y^b \alpha_{\mathbf{x}yb} \rho_\mathbf{x} \bigg)
= \tr \, \bigg( \sum_{y, b} M_y^b  \varrho_{y, b}  \bigg), \label{optimization of measurements}
\end{align}
where
\begin{equation}
\varrho_{y, b} : = \sum_\mathbf{x} \alpha_{\mathbf{x}yb} \rho_\mathbf{x}    
\end{equation}
 is a subnormalized density matrix. It is then easily seen that the optimization over measurements takes the form of a semidefinite program (SDP)
\begin{align}
\begin{aligned}
\max_{ \{M_b^y\} } & \quad \tr \, \Big( \textstyle\sum_b M_y^b  \varrho_{y, b}  \Big) \\
\text{s.t.}        & \quad M_y^b \succeq 0, \quad \forall \; b  \\
\text{and}         & \quad \textstyle\sum_{b} M_y^b = \mathds{1}, \label{SDP}
\end{aligned}
\end{align}
the solution of which, for all $n$ measurements, completes the second step in the optimization procedure and one iteration in the see-saw algorithm, with the newly found measurements becoming the starting point of the next iteration. Note that the problem in Eq.~\eqref{SDP}, is one of minimum-error discrimination \cite{Joonwoo15} of the states $\varrho_{y, b}$, i.e., the optimal measurements $\{ M_y^b \}_{b = 1}^m$ are those minimizing the error in the discrimination of $\varrho_{y, b}$, $\forall~b$.

Before closing this section some remarks are in order. First, it should be noted that while the iterative procedure described above will always converge in value to some maximum of $\FC$, there is no guarantee this is a global maximum. This will highly depend on the starting point, which is chosen at random. Second, it is worth noting that the see-saw procedure described above can be also implemented to find the optimal performance over classical strategies: Since advantage of quantum strategies is rooted in the possibility of performing incompatible measurements, restricting all measurement operators to be diagonal in the computational basis will reduce the optimization procedure to a maximization over classical strategies. This restriction is easily imposed by just initializing the algorithm with random diagonal matrices as seeds for measurements, which ensures that all the states and measurements arising during the see-saw procedure will be diagonal in the same basis. Note, however, that, unlike the exhaustive search previously discussed, this is a completely heuristic method. 

As a final observation, note that in the first step of the see-saw procedure, when the decoding strategy is fixed, the optimal value is given by
\begin{equation}
\max_{\{\rho_{\bf x}\}}\; \FC = 
\sum_{\bf x} \lambda_{\max} \left(\sum_{y, b}\alpha_{{\bf x} y b} M^b_y \right),
\label{opt_meas_only}
\end{equation}
where $\lambda_{\max} (\OC)$ denotes the largest eigenvalue of operator $\OC$. That is, finding the optimal preparations $\{ \rho_{\bf x} \}$ for a fixed set of measurements is an eigenvalue problem, which can be solved analytically, and therefore the we are left to look only for the optimal measurements. In spite of this simplification, finding the quantum value remains a hard problem in general. Nonetheless, as we will see later, there are cases in which it can be approached analytically.


\section{The RAC-tools Python package \label{S:package}} \label{S:package}

In this section, we introduce a Python package \cite{GitHub22} that implements the numerical methods described in the previous section. Our goal was to construct a tool that allows the user to easily determine basic properties of a $b$-RAC, such as its classical and quantum value. This tool requires some standard Python packages like \emph{numpy} and \emph{scipy}, as well as the \emph{cvxpy} package to solve SDPs. Amongst the different solvers that can be used with \emph{cvxpy}, we have found \emph{MOSEK} \cite{Andersen00, MOSEK} to be the most reliable. It is also available at no cost for academic use. Throughout this section, we provide a succinct description of how the package works, which is further elaborated in Appendix~\ref{Ap:pack}.

The RAC-tools package is written to implement both the exhaustive search and see-saw algorithms, which were discussed in section \ref{s:b-racs}. For the exhaustive search method, users can invoke the \verb+perform_search+ function, while the \verb+perform_seesaw+ function is employed for the see-saw optimization. Since our main focus is on biased RACs, an essential part of the package deals with the specification of the biasing tensor $\alpha_{{\bf x} y b}$, as defined by the functional in Eq.~\eqref{general functional}. Note that using this definition instead of the one in \eqref{RAC functional} allows the user to define functionals in a class larger than that of RAC functionals, which corresponds to the condition $b=x_y$. The desired biasing tensor can be written explicitly and passed to both\verb+perform_search+ and \verb+perform_seesaw+ in the form of a Python dictionary. Alternatively, the user can opt for any of the built-in biasing tensors provided by the package via the \verb+generate_bias+ function, and input only a reduced set of parameters.

Before entering into the more technical details of \verb+perform_search+ and \verb+perform_seesaw+, let us provide an example to better illustrate how \verb+generate_bias+ works. Consider a $2^2 \mapsto 1$ RAC in which the Alice's input $\bf x$ is uniformely distributed, but Bob is asked to retrieve the first character of $\bf x$ with probability $w \in [0, 1]$. The bias tensor defining this $b$-RAC is given by
\begin{align}
\alpha_{\mathbf{x} y} = 
\begin{cases}
    \frac{1}{4} w \, & \text{if} ~ y = 0, \\
    \frac{1}{4} (1-w)  & \text{otherwise},
\end{cases}
\end{align}
where the factor $\frac{1}{4}$ results from $\bf x$ being uniformly distributed. We can compute its classical and quantum value by passing to \verb+perform_search+ and \verb+perform_seesaw+, respectively, the string \verb+bias="Y_ONE"+ and the float \verb+weight=+$w$. While the variable \verb+weight+ encodes the amount of bias desired, the variable \verb+bias+ encodes the type of bias that the user wants to compute. For example, if the user passes \verb+weight=0.75+ as an argument, \verb+generate_bias+ builds a biasing tensor with components
\begin{align}
\alpha_{\mathbf{x} y} = 
\begin{cases}
    \frac{3}{16} \, & \text{if} ~ y = 0, \\
    \frac{1}{16}  & \text{otherwise}.
\end{cases}
\end{align}
In a similar manner to that demonstrated in the above example, we can also consider biases that exclusively affect Alice's input $\bf x$, or alternatively affect both $\bf x$ and $y$ simultaneously. RAC-tools includes several built-in bias families, including the one shown above, which we describe in more detail in Appendix~\ref{Ap:gen_fun}. Moving forward, we proceed now to the description of the operational details of the functions \verb+perform_search+ and \verb+perform_seesaw+.

\subsection{The \texorpdfstring{\protect\Verb+perform\_search+}{} function}

The main use of the function \verb+perform_search+ is to compute the optimal classical performance of an $\smash{n^m\overset{d}{\mapsto}} 1$ $b$-RAC. This function takes as argument the integers $n$, $d$, and $m$, encoded by the analogous Python parameters \verb+n+, \verb+d+, and \verb+m+, as well as the bias tensor $\alpha_{{\bf x} y b}$. The latter can be entered via the Python dictionary \verb+bias_tensor+ or as the aforementioned reduced set of parameters \verb+bias+ and \verb+weight+, for one of the built-in bias families. While these parameters fix a particular $b$-RAC scenario, the variable \verb+method+ defines the searching approach that should be employed by \verb+perform_search+. When setting \verb+method=1+ and \verb+method=2+, the corresponding implementations correspond to the approaches described in statements (a) and (b) of Lemma \ref{EXHAUSTIVE SEARCH LEMMA}, respectively. On the other hand, \verb+method=0+ implements a purely exhaustive search, where neither the encoding nor decoding functions are fixed. It is worth noting that the value of $m$ is set by default to be the same as that of $d$, and therefore there is no need to declare it when studying $n^d\mapsto 1$ $b$-RACs.

When \verb+perform_search+ finishes the execution, it generates a report like the one in FIG.~\ref{classical example}, for the case of the $2^2 \mapsto 1$ RAC. As can be seen in the figure, the report provides not only the optimal value of the functional but also an encoding-decoding pair attaining this value, along with information about computing time and total number of encoding-decoding functions. This report can be disabled by setting the variable \verb+verbose=False+, in which case \verb+perform_search+ still returns the information displayed in the report, but in the form of a dictionary, allowing the user to manipulate this information. A more detailed description of this function is given in Appendix~\ref{Ap:class_val_fun}.

\begin{figure}[ht!]
\centering
\footnotesize
\begin{BVerbatim}[commandchars=\\\{\}]
> perform_search(n=2, d=2, method=0)

==========================================================
                      RAC-tools v1.0
==========================================================

----------------- Summary of computation -----------------

Total time of computation: 0.001859 s
Total number of encoding/decoding functions: 256
Average time per function: 7e-06 s

----------- Analysis of the optimal realization ----------

Computation of the classical value for the 2\textsuperscript{2}-->1 RAC:
0.75
Number of functions achieving the computed value: 24

First functions found achieving the computed value
Encoding: 
E: [0, 0, 0, 1]
Decoding: 
D\textsubscript{0}: [0, 1]
D\textsubscript{1}: [0, 1]

------------------- End of computation -------------------
\end{BVerbatim}
\caption{Report produced by the function \protect\Verb+perform\_search+ for the unbiased $2^2 \mapsto 1$ RAC. In addition to the summary of the computation, the user is provided with the first strategy found attaining the optimal value and the number of such equivalent strategies. For the encoding function $E(\mathbf{x})$, the result is shown in a tuple organized in ascending order of $\mathbf{x}$, $[E(00...0), E(00...1), ..., E((m - 1)...(m - 1))]$. The decoding functions $D_y(\mu)$ follow a similar pattern: each row corresponds to a distinct input $y$, and the result is organized in ascending order of $\mu$.} 
\label{classical example}
\end{figure}

\subsection{The \texorpdfstring{\protect\Verb+perform\_seesaw+}{} function}

This function implements the see-saw algorithm as described in section \ref{s:b-racs}. As mentioned before, the algorithm always converges to a maximum of the functional described in Eq.~\eqref{qrac-functional}. However, it is not guaranteed that this maximum represents the global maximum due to the random initialization of the algorithm. Hence, in addition to providing the integers $n$, $d$, and $m$, along with the tensor $\alpha_{\mathbf{x} y b}$ that specifies the scenario, the user is required to enter the number of random initializations through the parameter \verb+seeds+ when invoking the function. It should be noted that since increasing the number of initializations increases the computation time, deciding which is the best value for \verb+seeds+ is a problem on its own; In Appendix~\ref{Ap:opt_val_fun} we provide, as a guide, TABLE~\ref{table:number of seeds}, which contains the number of seeds used for generating the numerical results presented in this work.

In addition, as noted in the description of the see-saw algorithm, it can also be used to compute the classical value of a $b$-RAC by restricting the measurements and preparations to be diagonal in the computational basis. This condition can be passed to the \verb+perform_seesaw+ function via the extra variable \verb+diagonal+. If \verb+diagonal=True+, the function initializes the see-saw algorithm with random diagonal measurements and the retrieved value corresponds to an estimation of the classical value. The default value of this variable is \verb+False+.

After finishing the computation, \verb+perform_seesaw+ prints a report including a short analysis of the measurement operators attaining the optimal value found, such as whether the operators are projective or if they are mutually unbiased. Moreover, the report informs also about the computation time and the number of random starting points used by the code. A detailed description about this data can be found in Appendix~\ref{Ap:opt_val_fun}. FIG.~\ref{simple example} shows an example of the report printed by the function in the case of the $2^2 \mapsto 1$ RAC. As before, this report can be disabled by setting the variable \verb+verbose=False+, in which case the information displayed in it is returned in the form of a Python dictionary.
    
\begin{figure}[t]
\centering
\footnotesize
\begin{BVerbatim}[commandchars=\\\{\}]
> perform_seesaw(n=2, d=2, seeds=5)

==========================================================
                      RAC-tools v1.0
==========================================================

----------------- Summary of computation -----------------

Number of random seeds: 5
Average time for each seed: 0.14852 s
Average number of iterations: 3
Seeds 1e-13 close to the best value: 5

----- Analysis of the optimal realization for seed #1 ----

Estimation of the quantum value for the 2\textsuperscript{2}-->1 RAC:
0.853553390593

Measurement operator ranks
M[0] ranks:  1  1
M[1] ranks:  1  1

Measurement operator projectiveness
M[0, 0]:  Projective		6.44e-15
M[0, 1]:  Projective		6.44e-15
M[1, 0]:  Projective		6.78e-15
M[1, 1]:  Projective		6.78e-15
 
Mutual unbiasedness of measurements
M[0] and M[1]:  MUB		5.91e-14

------------------- End of computation -------------------
\end{BVerbatim}
\caption{Report produced by the function \protect\Verb+perform\_seesaw+ for the unbiased $2^2 \mapsto 1$ RAC. In the first part of the report, the function produces a small summary of the computation, displaying information such as the number of random starting points, average time per starting point, etc. The subsequent part presents the optimal value found along with a short analysis of the measurements attaining this value. The notation \protect\Verb+M[y]+ refers to the $y$-th measurement, while \protect\Verb+M[y, b]+ refers to the operator yielding output $b$. In this way, the item \emph{Measurement operator ranks} presents the computed ranks for the operators of the $y$-th measurement, arranged in ascending order of $b$. Similarly, the item \emph{Measurement operator projectiveness} indicates whether the identified operators can be considered projective or not. The numerical value presented in the second column serves as a measure for projectiveness, with a value approaching zero indicating that this operator is close of being projective. Lastly, the item \emph{Mutual unbiasedness of measurements} analyzes the possibility of constructing each pair of measurements out of mutually unbiased bases. Analogous to the previous item, the second column provides a measure of proximity for a given pair. Further technical details regarding the \protect\Verb+perform\_seesaw+ function can be found in Appendix \ref{Ap:pack}.}
\label{simple example}
\end{figure}


\section{Analytical results for the \texorpdfstring{$\MakeLowercase{n}^{2}\mapsto 1$}{}  RAC \label{S:n23}}\label{S:n23}

As discussed in Section \ref{s:b-racs}, whether classical or quantum the strategies maximizing the $b$-RAC functional are in general hard to find analytically. An exception is provided by some RACs whose output is a single bit since, as we will see below, this greatly simplifies the solution of the corresponding optimization problems. Since we are interested in the advantage provided by quantum strategies, let us start by discussing the optimal classical performance.

\subsection{Classical}

We are now interested in finding the optimal encoding-decoding strategies for general biased $n^2\mapsto 1$ RAC. The case of unbiased RACs has been already studied in Ref.~\cite{Ambainis09}, and as we show below, the authors' analysis can be extended to the general case with slight modifications. 

A salient feature of the optimal classical encoding-decoding strategies for $b$-RACs is that in some cases they ignore part of the input. Excluding part of the input in the search for the optimal encoding and decoding strategies reduces the complexity of the problem, since it involves fewer bits; thus the optimal value is easier to compute if the subset of bits to be ignored is known beforehand. Unfortunately this knowledge does not seem to be available in advance, and in order to find the set of bits ignored by the best strategy we need to compare the values for all possible options, which makes the evaluation computationally hard. We will return to this issue when discussing optimal quantum strategies.

The following lemma shows how to find the optimal classical strategies under the assumption that no bit is ignored. If the actual optimal strategy for a given $b$-RAC does ignore part of the input string $\bf x$, the result applies to the set of bits taken into account by the strategy.

\begin{lemma}{ \label{lem_optclass}}

    The optimal nonbit-ignoring strategies for the $n^2\mapsto 1$ $b$-RAC comprise a weighted majority encoding function and identity map for decoding.
\end{lemma}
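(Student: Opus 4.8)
The plan is to combine the reduction to deterministic strategies already established above with the observation that, for $m=d=2$, each decoding map $D_y\colon\{0,1\}\to\{0,1\}$ is one of only four functions: the identity, the negation $\mu\mapsto\mu\oplus1$, and the two constants. A constant $D_y$ means that Alice's message is not used at all to recover $x_y$; by Lemma~\ref{EXHAUSTIVE SEARCH LEMMA}(b) the term indexed by that $y$ then contributes a quantity independent of $\mu$ to the optimisation defining the encoding, so it drops out entirely and the strategy is simply one that discards bit $y$ from the task. Hence, for a nonbit-ignoring strategy every $D_y$ must be a bijection, i.e.\ $D_y(\mu)=\mu\oplus c_y$ for some $c_y\in\{0,1\}$.

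I would then pass to a $\pm1$ description. Setting $e(\mathbf{x})=(-1)^{E(\mathbf{x})}$ and $s_y=(-1)^{c_y}$, the success indicator of setting $y$ on input $\mathbf{x}$ equals $\tfrac12\bigl(1+s_y\,e(\mathbf{x})\,(-1)^{x_y}\bigr)$, whence
\begin{align}
\FC=\frac12+\frac12\sum_{\mathbf{x}}e(\mathbf{x})\,\langle s,v_{\mathbf{x}}\rangle,\qquad (v_{\mathbf{x}})_y:=\alpha_{\mathbf{x}y}\,(-1)^{x_y}.
\end{align}
For a fixed decoding (fixed $s$) the best encoding is obviously $e(\mathbf{x})=\mathrm{sign}\,\langle s,v_{\mathbf{x}}\rangle$, giving $\FC=\tfrac12+\tfrac12\sum_{\mathbf{x}}\lvert\langle s,v_{\mathbf{x}}\rangle\rvert$. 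Taking $s=(1,\dots,1)$, i.e.\ $D_y=\mathrm{id}$ for all $y$, this prescription says $E(\mathbf{x})=0$ precisely when $\sum_{y:\,x_y=0}\alpha_{\mathbf{x}y}\ge\sum_{y:\,x_y=1}\alpha_{\mathbf{x}y}$, which is a weighted-majority vote over the bits of $\mathbf{x}$ -- and in the uncorrelated case $\alpha_{\mathbf{x}y}=\pi(\mathbf{x})\nu_y$ the weights reduce to the $\nu_y$, independently of $\pi$. The same conclusion follows directly from Lemma~\ref{EXHAUSTIVE SEARCH LEMMA}(b) with the decoding fixed to the identity. It remains to prove that $s=(1,\dots,1)$ maximises $\sum_{\mathbf{x}}\lvert\langle s,v_{\mathbf{x}}\rangle\rvert$ over $s\in\{\pm1\}^{n}$.

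For this last step I would use the change of variables $\mathbf{x}\mapsto\mathbf{x}\oplus t$, where $t\in\{0,1\}^n$ records the positions with $s_y=-1$: it is a bijection of $S$ under which $\langle s,v_{\mathbf{x}}\rangle$ becomes $\langle(1,\dots,1),v_{\mathbf{x}\oplus t}\rangle$ evaluated with the bias relabelled by $t$. When the input distribution is invariant under flipping the values of the bits of $\mathbf{x}$ -- in particular whenever $\mathbf{x}$ is uniform, which covers the built-in families that bias only $y$ -- the sum is left unchanged, so all $s$ give the same value and $s=(1,\dots,1)$ is optimal; the optimal encoding is then the weighted-majority function found above. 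The main obstacle is exactly this point: for a genuinely correlated bias a bijective but non-identity decoding can strictly beat the identity (one can already see this for $n=2$), so the statement has to be understood up to relabelling the bit values of $\mathbf{x}$, equivalently restricted to the bias families under which such relabellings act trivially. A minor additional point is the tie-breaking in the weighted-majority encoding when $\langle s,v_{\mathbf{x}}\rangle=0$; one checks that resolving ties either way leaves $\FC$ unchanged.
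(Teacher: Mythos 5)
Your proposal follows the same skeleton as the paper's proof: enumerate the four possible decoding maps for $m=d=2$, identify the constant ones with ignored bits, and apply Lemma~\ref{EXHAUSTIVE SEARCH LEMMA}(b) with the decoding fixed to a bijection to extract the weighted-majority encoding; your $\pm1$ rewriting $\mathcal{F}=\tfrac12+\tfrac12\sum_{\bf x}|\langle s,v_{\bf x}\rangle|$ is a correct and clean classical analogue of Eq.~\eqref{bitvalue}. Where you diverge is the final step: you try to show that $s=(1,\dots,1)$ globally maximizes this sum over $s\in\{\pm1\}^n$, which is both stronger than the lemma requires and false in general --- your suspicion is right, and it already fails for a \emph{factorizable} bias, e.g.\ $\alpha_{{\bf x}y}=\alpha_{\bf x}r_y$ with $\alpha_{01}=\alpha_{10}=\tfrac12$ and $r_0=r_1=\tfrac12$, where flipping the second decoding gives $\mathcal{F}=1$ versus $\mathcal{F}=\tfrac12$ for the identity. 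The paper never claims $s=(1,\dots,1)$ is optimal; it observes that a flip decoding $D_y(\mu)=1-\mu$ paired with encoding $E$ is equivalent to the identity decoding paired with $E'=E\circ\neg_y$, i.e.\ after relabelling the values of the $y$-th input bit, so ``identity map for decoding'' is to be read modulo such relabellings --- precisely the caveat you arrive at. Your own formula already delivers this in full generality: for \emph{any} sign vector $s$ the optimal encoding is $\mathrm{sign}\,\langle s,v_{\bf x}\rangle$, a weighted majority of the relabelled bits $x_y\oplus c_y$, so the right move is to drop the attempt to single out $s=(1,\dots,1)$ rather than to restrict to flip-invariant biases. Note also that ``understood up to relabelling'' and ``restricted to biases on which relabellings act trivially'' are not equivalent readings: the former covers arbitrary biases and is the one the lemma intends.
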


\begin{proof}
    We already know from Lemma \ref{EXHAUSTIVE SEARCH LEMMA} how to find the optimal decoding function for a given fixed encoding, and the optimal encoding function for a fixed decoding. In the particular case of the $n^2\mapsto 1$ $b$-RAC, given a fixed encoding $E:\{0, 1\}^{\times n}\mapsto {0, 1}$ mapping input $\bf x$ into a bit $\mu$, i.e., $E({\bf x})=\mu$, it implies that the optimal decoding function for the $y$-th must satisfy
    \begin{equation}
        D_y(\mu) = \begin{cases}
                    0 & \;\text{if}\; \sum_{{\bf x}}\alpha_{{\bf x}|y} \delta_{x_y 0} \geq \sum_{{\bf x}}\alpha_{{\bf x}|y} \delta_{x_y 1}, \\
                    1 &\; \text{otherwise}, \label{optdec}
                   \end{cases}
    \end{equation}
    where $\alpha_{\bf x|y}=\alpha_{{\bf x} y}/r_y$, with $r_y=\sum_{\bf x} \alpha_{{\bf x} y}$, can be interpreted as the probability of string $\bf x$ being Alice's input given that the $y$-th bit is to be recovered by Bob.

    Because both $D_y(\mu)$ and its argument $\mu$ are bits there are only four possible decoding functions: two constant maps $D_y(\mu)=0, 1$, the identity map $D_y(\mu)=\mu$, and a flip of the input $D_y(\mu)=1-\mu$. Now suppose the optimal decoding in Eq.~\eqref{optdec} corresponds to one of the constant functions, e.g., $D_y(\mu)\equiv 0$. This implies $\sum_{{\bf x}}\alpha_{{\bf x}|y} \delta_{x_y 0} \geq \sum_{{\bf x}}\alpha_{{\bf x}|y} \delta_{x_y 1}$ for both $\mu=0, 1$. Because $\sum_{{\bf x}}\alpha_{{\bf x}|y} \delta_{x_y 0} + \sum_{{\bf x}}\alpha_{{\bf x}|y} \delta_{x_y 1}=1$, it follows from the previous relation that $\sum_{{\bf x}}\alpha_{{\bf x}|y} \delta_{x_y 1}\leq \frac{1}{2}$, meaning that $x_y=0$ is a more probable event in the inputs than $x_y=1$. We can interpret this result as ``ignoring the encoding'' being the best decoding strategy Bob can implement for that particular bit, in which case it makes no sense for Alice to consider it in the encoding to begin. On the other hand, it is not hard to see that if an encoding strategy ignores the $y$-th bit, the ensuing optimal decoding is a constant function mapping the input to a constant value, which is the most frequent for $x_y$. By identifying constant decoding functions with parts of the input $\bf x$ that are ignored by the optimal strategy, we are left with only two possible decoding maps for those bits that are taken into account. This two maps are actually equivalent since it is easy to check that if $D_y(\mu)=1-\mu$ is optimal for the encoding $E$, then $D_y(\mu)=\mu$ is optimal for the encoding $E'=E\circ \neg_y$, where $\neg_y: S\mapsto S$ is the function flipping the $y$-th bit of a given string in $S$. It follows then that the optimal decoding strategy can always be chosen to be the identity map.
    
    Now we can move on to discuss the optimal encoding function. We have seen that whenever $D_y$ is a constant function the optimal encoding ignores the $y$-th bit, so we can focus here on the case where the identity map is the optimal decoding function. In that case, it follows from the second statement in Lemma \ref{EXHAUSTIVE SEARCH LEMMA} that the optimal encoding strategy should satisfy
    \begin{equation}
        \FC=\sum_{\mu, y, {\bf x}\in S_\mu} \alpha_{{\bf x} y} \delta_{x_y \mu}=\sum_{\mu, {\bf x}\in S_\mu} \alpha_{\bf x} \sum_y r_{y|{\bf x}} \delta_{x_y \mu}, \label{optenc}\
    \end{equation}
    where $r_{y|\mathbf{x}} := \alpha_{\mathbf{x} y}/\alpha_{\mathbf{x}}$, with $\alpha_{\mathbf{x}} := \sum_y \alpha_{\mathbf{x} y}$. We can think of this condition as determining which value ($0$ or $1$) has the greater weight in the string $\bf x$, that is, the optimal strategy corresponds to a {\it weighted majority} encoding.    
\end{proof}

Note that both Eqs.~\eqref{optdec} and \eqref{optenc} are easily obtained by application of statements (a) and (b) of Lemma \ref{EXHAUSTIVE SEARCH LEMMA}, which may raise the question of why this solution does not extend straightforwardly to the general case, where Bob can output more than two possible values. The key feature of the argument above is, as we have shown, that the reduced number of possible decoding functions allows us to state that for every bit there are only two options: either the bit is ignored by the strategy, or it is decoded using the identity map. Whenever $d > 2$ or $m > 2$ this is no longer true, as there exist decoding functions which are neither constant nor permutations.

\subsection{Quantum} \label{subsection quantum}

Having found a procedure to determine the optimal classical strategies, we can now move on to explore the optimal quantum strategies and the cases in which these can provide an advantage. As explained at the end of Sec.~\ref{see-saw method}, for a fixed decoding strategy $\{M^b_{y}\}$ the optimal encoding of input $\bf x$ is determined by the largest eigenvalue of the operators $\sum_{{\bf x} y} \alpha_{{\bf x}y} M^{x_y}_y$. Although simplified, this problem is still hard, since  search of the optimal measurements $\{M^b_y\}$ is to be carried over the set of all possible measurement operators. The following lemma, which applies to any RAC with two outcomes, greatly simplifies this search, by restricting it to the subclass of projective measurements.

\begin{lemma}{\label{proj_meas} }
    The value $\FC_Q$ of the $n^2\mapsto 1$ $b$-RAC can always be reached by a decoding strategy consisting only of projective measurements.
\end{lemma}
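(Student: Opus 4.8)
The plan is to pre-optimize over Alice's preparations so that the remaining maximization over Bob's decoding becomes a convex maximization over a compact convex set, whose maximum is then necessarily attained at an extreme point. First, recall from Eq.~\eqref{opt_meas_only} that for a fixed decoding strategy $\{M_y^b\}$ the optimal preparations are eigenvectors associated with the largest eigenvalue of the operators $\sum_{y,b}\alpha_{\mathbf{x}yb}M_y^b$, so that for the $n^2\mapsto 1$ $b$-RAC (where $\alpha_{\mathbf{x}yb}=\alpha_{\mathbf{x}y}\delta_{b,x_y}$)
\begin{equation}
\FC_Q = \max_{\{M_y^b\}}\; \sum_{\mathbf{x}} \lambda_{\max}\!\left(\sum_y \alpha_{\mathbf{x}y}\,M_y^{x_y}\right),
\end{equation}
the maximum running over all $n$-tuples of two-outcome POVMs $\{M_y^0,M_y^1\}$ on $\HC$. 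Since each measurement has only two outcomes, it is fully described by the single effect $M_y:=M_y^0$, with $M_y^1=\mathds{1}-M_y$, subject only to $0\preceq M_y\preceq\mathds{1}$. Hence the feasible set is the Cartesian product $\mathcal{M}=\prod_{y=0}^{n-1}\{M:0\preceq M\preceq\mathds 1\}$ of $n$ copies of the interval of effects on $\HC$, which is compact and convex.

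The second step is to note that the objective is a convex function on $\mathcal M$. For each $\mathbf{x}$ the operator $\sum_y\alpha_{\mathbf{x}y}M_y^{x_y}$ is an affine function of the tuple $(M_0,\dots,M_{n-1})$ — one substitutes $M_y^{x_y}=M_y$ when $x_y=0$ and $M_y^{x_y}=\mathds 1-M_y$ when $x_y=1$ — while $\lambda_{\max}(\OC)=\sup_{\|\psi\|=1}\langle\psi|\OC|\psi\rangle$ is a supremum of linear functionals and therefore convex; composing a convex function with an affine map and summing over $\mathbf{x}$ preserves convexity. By the Bauer maximum principle, a convex function on a compact convex set attains its maximum at an extreme point, and the extreme points of a finite product are the products of the extreme points of the factors. (Alternatively, one can argue more elementarily by optimizing over one effect $M_{y_0}$ at a time with the others held fixed, which is again a convex maximization over $\{M:0\preceq M\preceq\mathds 1\}$, and iterating over all $y$.)

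It then remains to identify the extreme points of the set of effects $\{M:0\preceq M\preceq\mathds 1\}$ on $\HC$, which is the key technical input: these are exactly the orthogonal projections. One direction is immediate — if $M$ has a spectral value $\lambda\in(0,1)$ with nonzero spectral projection $Q$, then for small $\varepsilon>0$ the operators $M\pm\varepsilon Q$ lie in $\{M:0\preceq M\preceq\mathds 1\}$ and average to $M$, so $M$ is not extremal; conversely a projection cannot be a nontrivial convex combination of two effects, as is seen by evaluating expectation values on its range and on its kernel. Consequently there is an optimal tuple $(P_0,\dots,P_{n-1})$ with every $P_y$ a projection, and the corresponding decoding measurements $\{P_y,\mathds 1-P_y\}$ are projective, which proves the claim. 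I expect the only real subtlety to be the careful statement and verification of this extreme-point characterization (a standard fact about the effect algebra); note also that the argument uses only the two-outcome nature of the decoding, so it applies verbatim to any $n^2\overset{d}{\mapsto}1$ $b$-RAC.
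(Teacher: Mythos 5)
Your proof is correct, but it takes a genuinely different route from the paper's. The paper keeps the preparations fixed and observes that the contribution of the $k$-th measurement to the functional is \emph{linear} in the single operator $M_k=M_k^0-M_k^1$, namely $\FC_k=\tr\left[\varrho_{k,1}+(\varrho_{k,0}-\varrho_{k,1})M_k^0\right]$ with $\varrho_{k,b}=\sum_{\bf x}\alpha_{{\bf x}kb}\,\rho_{\bf x}$; it then constructs the maximizer explicitly as $M_k=X_+-X_-$, the difference of the spectral projectors onto the non-negative and negative eigenspaces of $\varrho_{k,0}-\varrho_{k,1}$ (the Helstrom measurement for discriminating two subnormalized states). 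You instead eliminate the preparations first via $\lambda_{\max}$, note that the resulting objective is convex in the tuple of effects, and invoke the Bauer maximum principle together with the characterization of projections as the extreme points of the effect interval $\{M:0\preceq M\preceq\mathds{1}\}$. Both arguments are sound and both exploit only the two-outcome structure of the decoding. The paper's version buys an explicit description of the optimal measurement in terms of the optimal states, which is exactly what feeds its subsequent discussion of bit-ignoring strategies (the degenerate case $M_k^0=\mathds{1}$ arises precisely when $\varrho_{k,0}-\varrho_{k,1}\succeq 0$); your version is a cleaner pure-existence argument whose only nontrivial input is the (standard) extreme-point characterization of the effect algebra, and it makes transparent that the claim holds for any $n^2\overset{d}{\mapsto}1$ scenario. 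One small caveat: your parenthetical coordinate-wise alternative should be phrased as starting from a global optimizer and replacing the effects one at a time by extreme points without decreasing the value; as written ("optimizing over one effect at a time \dots and iterating") it reads like an alternating-ascent heuristic, which on its own would not establish global optimality.
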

\begin{proof}
Let $\{M^0_k, M^1_k\}$ be the measurement operators over the Hilbert space $\HC$ describing Bob's decoding map for the $k$-th bit. It follows from the completeness relation $M^0_k+M^1_k=\mathbb{1}$ that the objective function in SDP \eqref{SDP} can be written as
\begin{equation}
    \begin{split}
        {\FC}_k&=\tr\left[\varrho_{k, 1} + (\varrho_{k, 0}-\varrho_{k, 1})M^0_k\right] \\
        &=\tr\left[\varrho_{k, 0} - (\varrho_{k, 0}-\varrho_{k, 1})M^1_k\right] \\
        &=\frac{1}{2} + \frac{1}{2} \tr (\varrho_{k, 0}-\varrho_{k, 1})M_k, \label{Fk}
    \end{split}
\end{equation}
where in the last line we have taken the average of the expressions in the first two, and have written $M_k = M_k^0 - M_k^1$. Hermiticity of $(\varrho_{k, 0}-\varrho_{k, 1})$ implies that we can find orthogonal subspaces ${\HC}_+$ and ${\HC}_-$, such that $\HC=\HC_+\oplus\HC_-$, spanned by its eigenvectors associated with non-negative and negative eigenvalues, respectively. If $X_\pm$ denotes the projectors onto these subspaces, it is then apparent that the optimal value for ${\FC}_k$ is attained for $M_k=X_+-X_-$, from which it follows that $M_k^0$ and $M_k^1$ can be chosen to be projectors.
\end{proof}
It is worth noting that either $M_k^0$ or $M_k^1$, in the proof above, could equal the identity operator. In such a case, one of the measurement operators would be a projector over the entire Hilbert space $\HC$, which corresponds to a decoding strategy in which Bob always guesses $0$ (or $1$) for the $k$-th bit regardless of Alice's encoding, i.e., a constant decoding map. As already seen in the discussion of optimal classical strategies, if a constant decoding function is optimal for a given bit, we can find an optimal encoding that ignores that bit. Indeed, assume without loss of generality that the optimal $k$-th decoding strategy requires $M^0_k=\mathbb{1}$. Because the optimal preparations $\rho_{\bf x}$ are eigenstates associated with the largest eigenvalue of the operators $\sum_{y} \alpha_{{\bf x}y} M^{x_y}_y$, which we now can write as $(1-x_k)\alpha_{{\bf x}k} + \lambda_{\max} (\sum_{y\neq k} \alpha_{{\bf x}y} M^{x_y}_y)$, the quantum value of the $b$-RAC value can be expressed as
\begin{equation}
\begin{split}
\FC_Q&=f_k + \sum_{\bf x} \lambda_{\max} (\sum_{y\neq k} \alpha_{{\bf x}y} M^{x_y}_y), \\
f_k&=\sum_{\bf x}(1-x_k)\alpha_{{\bf x}k} \label{split}
\end{split}
\end{equation}
where $f_k$ is the contribution to the value of the trivial decoding of the $k$-th bit, and the second term in $\FC_Q$ is the quantum value of $b$-RAC with input strings of $n-1$ bits. It follows then that the optimal preparations are eigenstates of the operators $\sum_{y\neq k} \alpha_{{\bf x}y} M^{x_y}_y$, meaning that the encoding strategy ignores the $k$-th bit as stated above.

The argument above is easily generalized to the case in which the optimal strategy ignores any number of bits. It should be noted here, as we did in discussing classical strategies in Lemma \ref{lem_optclass}, that the knowledge of the bits that should be ignored by the optimal strategy makes easier the evaluation of $\FC_Q$. However, such information does not seem to be available in advance, and can be obtained only by comparing the values of all possible bit-ignoring strategies, making the evaluation of $\FC_Q$ computationally hard. If we denote by $s$ a given subset of $I_n=\{0, 1, ..., n-1\}$, we can write the value associated with an encoding-decoding strategy ignoring the bits in $s$ as
\begin{equation}
    \FC_Q^s=\sum_{k\in s} f_k + F_Q^s \label{Fqs}
\end{equation}
where $f_k$ is again given by Eq.~\eqref{split} and
\begin{equation}
    F_Q^s=\max_{ \{M_y^{x_y}\} } \sum_{\bf x} \lambda_{\max} \Bigg(\sum_{y\notin s} \alpha_{{\bf x} y} M_y^{x_y}\Bigg)
\end{equation}
is the contribution from the bits that are not ignored by the encoding-decoding strategy. With this notation we can formally write the quantum value as
\begin{equation}
    \FC_Q=\max_{s} \FC_Q^s.
\end{equation}
It follows from this discussion that finding the $b$-RAC optimal value over quantum strategies reduces to solving, for all $s \subset I_n$, the optimization problem in the second term in Eq.~\eqref{Fqs}, i.e., finding the optimal strategies involving all bits of the input strings. In what follows, we explore the cases where an analytical solution to this problem is available.

\subsubsection{Qubit strategies}

Let us now consider $d=m=2$ and focus on optimal strategies involving all the bits in the input string $\bf x$. The decoding strategies in this scenario involve two-outcome measurements on qubits, which we can always write as convex combinations of rank-1 projective and trivial measurements, i.e., $M_y^{x_y}\in\{0, \mathds{1}\}$. Since trivial measurements are associated with bit-dropping strategies, which are assumed here to be suboptimal, it follows that the quantum value can be attained only with rank-1 projective measurements. By restricting the decoding strategies to rank-1 projective measurements we can express the $b$-RAC value as
\begin{equation}
\begin{split}
    {F}&=\sum_{{\bf x}} \tr \rho_{\bf x} \left (\sum_y \alpha_{{\bf x} y} M^{x_y}_y \right ) \\
    &=\frac{1}{2} + \sum_{\bf x} \tr \rho_{\bf x} \left (\sum_y \alpha_{{\bf x} y} (-1)^{x_y} {\bf m}_y\cdot\bm\sigma \right ) \\
    &\leq \frac{1}{2}+\sum_{\bf x} |\sum_y \alpha_{{\bf x} y} (-1)^{x_y} {\bf m}_y|, \label{qrrw}
\end{split}
\end{equation}
where in the second line we have expanded the projectors $M^{x_y}_y$ in the Pauli basis, $M^{x_y}_y=\frac{1}{2}\mathbb{1}+(-1)^{x_y}{\bf m}_y\cdot\bm\sigma$, ${\bf m}_y\in\mathbb{R}^3$, $|{\bf m}_y|=\frac{1}{2}$, and in the third line we used that the value of the trace is upper bounded by the largest eigenvalue of the traceless operator in the argument, an upper bound that is attained when  $\rho_{\bf x}$ is an eigenstate associated with this eigenvalue. It follows from this result that the quantum value of the $n^2\mapsto 1$ $b$-RAC is given by
\begin{eqnarray}
    F_Q &=& \frac{1}{2}+\max_{\{{\bf m}_y\}} \sum_{\bf x} \alpha_{\bf x} \,|\sum_y r_{y|\bf x} (-1)^{x_y} {\bf m}_y|, \label{bitvalue}
\end{eqnarray}
with $\alpha_{\bf x}=\sum_{y} \alpha_{{\bf x} y}$ and $r_{y|\bf x}=\alpha_{{\bf x} y} / \alpha_{\bf x}$. Note that, as suggested in Ref.~\cite{Ambainis09} for the case of unbiased RACs, the value $F_Q$ can be thought of as the (weighted) average distance traveled by a random walker in $\mathbb{R}^3$ (up to some scaling and shift). Moreover, it can be checked by direct calculation that if the vectors ${\bf m}_y$ are constrained to be parallel, $F_Q$ reduces to the optimal RAC value over classical strategies. 

While the optimization problem in Eq.~\eqref{qrrw} is in general hard to solve for arbitrary bias tensors, it greatly simplifies if we restrict ourselves to the subclass of {\it factorizable} biases. Consider a bias such that $\alpha_{{\bf x} y}=\alpha_{\bf x}r_y$, where $\sum_y r_y=\sum_{\bf x}\alpha_{\bf x}=1$ because of normalization. This means that the inputs $\bf x$ and $y$ are independent random variables. It is easy to see in this case that, in the sum over input strings in Eq.~\eqref{bitvalue}, the term associated with the string $\bf x$ has the same value as that associated with string $\tilde{\bf x}$ if the latter can be obtained from $\bf x$ by flipping all of its bits. Taking this into account we can rewrite $F_Q$ as the sum over only half of the input strings
\begin{equation}
\begin{split}
    F_Q &= \frac{1}{2}+\max_{\{{\bf m}_y\}} \sum_{\bf x} p_{\bf x} \,|\sum_y r_y (-1)^{x_y} {\bf m}_y| \\
    &= \frac{1}{2}+\max_{\{G\}} \sum_{\bf x} p_{\bf x} \sqrt{\tr G\, {\bf v}_{\bf x} {\bf v}_{\bf x}^T }, \label{gsqrt} 
\end{split}
\end{equation}
with $p_{\bf x}=\alpha_{\bf x}+\alpha_{\tilde{\bf x}}$, ${\bf v}_{\bf x}$ an $n$-dimensional tuple with components ${\bf (v}_{\bf x})_y=(-1)^{x_y}r_y$ and $G$ the Gram matrix of the measurement vectors ${\bf m}_y$, i.e., $G_{ij}=\braket{{\bf m}_i, {\bf m}_j}$. We can now think of the sum over $\bf x$ in Eq.~\eqref{gsqrt} as a scalar product between two $2^{n-1}$-dimensional tuples, one with components $p_{\bf x}$ and the other with components $\sqrt{\tr G\, {\bf v}_{\bf x} {\bf v}_{\bf x}^T}$. Using the Cauchy-Schwarz inequality we can upper bound $F_Q$ by
\begin{equation}
    \begin{split}
        F_Q &\leq \frac{1}{2}+\sqrt{\sum_{\bf x} p_{\bf x}^2}\, \max_{G} \sqrt{\tr G\sum_{\bf x}{\bf v}_{\bf x}{\bf v}_{\bf x}^T } \\
        &=\frac{1}{2}+\sqrt{2^{n-3}}\sqrt{\sum_{\bf x}p_{\bf x}^2 }\sqrt{\sum_{y} r_y^2},
    \end{split} \label{upb}
\end{equation}
where in the last line we used that $(\sum_{\bf x}{\bf v}_{\bf x}{\bf v}_{\bf x}^T)_{ij}=2^{n-1}r^2_i\delta_{ij}$, as can be checked by direct calculation. It is easy to check that this expression reduces, for $\alpha_{{\bf x}y}=\frac{1}{n2^n}$, to the upper bound $F_Q\leq \frac{1}{2}+(2\sqrt{n})^{-1}$ previously derived in Ref.~\cite{Ambainis09}. It is also worth remarking that the upper bound we just derived depends on $p_{\bf x}$ rather than directly on $\alpha_{\bf x}$, and this feature holds for the value associated with any quantum or classical strategy, since it is a consequence of the independence of inputs $\bf x$ and $y$.

The bound in Eq.~\eqref{upb} was obtained via the Cauchy-Schwarz inequality between two tuples with components $\alpha_{\bf x}$ and  $\sqrt{\tr G\, {\bf v}_{\bf x} {\bf v}_{\bf x}^T }$, respectively, which are all real and non-negative. Thus, in order to saturate the bound following $2^{n-1}$ conditions (one per input string) must be satisfied
\begin{equation}
    \frac{p_{\bf x}}{\sqrt{\tr G{\bf v}_{\bf x}{\bf v}_{\bf x}^T}} = \frac{1}{\sqrt{2^{n-3}}}\sqrt{\frac{\sum_{{\bf x}'} p_{\bf x'}^2}{\sum_{y}r_y^2}}, \label{cond}
\end{equation}
where as before we have written $p_{\bf x}=\alpha_{\bf x}+\alpha_{\tilde{\bf x}}$. Note that the quantity on the right-hand side is a constant characterizing the particular $b$-RAC under study. Since projectivity of measurements implies the norms of the vectors ${\bf m}_y$ is maximal, $|{\bf m}_y|=\frac{1}{2}$, we can use that $\tr G{\bf v}_{\bf x}{\bf v}_{\bf x}^T=\frac{1}{4}\sum_{i j}(-1)^{x_i+x_j}r_ir_j\cos(\theta_{i j})$ to rewrite Eq.~\eqref{cond} as a condition to be satisfied by the angles $\theta_{i j}$ between vectors,
\begin{equation}
    \sum_{i<j}(-1)^{x_i+x_j} r_i r_{j}\cos(\theta_{i j})=\frac{1}{2} \left (\sum_{y} r_y^2 \right) \! \left(  \frac{2^{n-1} p^2_{\bf x}}{\sum_{\bf x'}p_{{\bf x}'}^2} -1 \right). \label{ang_cond}
\end{equation}
Now fix a pair of indices $(i, j)$, with $j\neq i$, and define $S_{ij}$ as the subset of input strings satisfying $x_i=x_j$. Then, summing over the strings in $S_{ij}$ in Eq.~\eqref{ang_cond} we arrive at
\begin{equation}
    \cos(\theta_{i j})=\frac{1}{2r_ir_j}\left(\frac{\sum_y r_y^2}{\sum_{\bf x'} 
 p^2_{\bf x'}}\right) \left(\sum_{{\bf x}\in S_{ij}} p^2_{\bf x}-\sum_{{\bf x}\notin S_{ij}} p^2_{\bf x} \right). \label{gcond}
\end{equation}

If instead of the vectors $\{m_y\}$ we parametrize the set of optimal measurements by the cosine of the angles between them, then Eq.~\eqref{gcond} shows how to construct the optimal decoding strategy from the biasing parameters, when the upper bound \eqref{upb} is attained. On the other hand, we can use the relation to quickly discard the possibility of the bound being attained: If for a given $b$-RAC the cosines generated by Eq.~\eqref{gcond} are inconsistent, then the upper bound cannot be attained. Consistency here turns out to be equivalent to (i) $\cos(\theta_{ij})\in [-1, 1]$ and (ii) the matrix with elements 
\begin{equation}
    \Tilde{G}_{ij}=\begin{cases}
                    \frac{1}{4} & \text{if}\; i=j \\
                    \frac{1}{4} \cos{(\theta_{ij})} & \text{if}\; i\neq j 
                   \end{cases} \label{Gtilde}
\end{equation}
being positive semidefinite.

It turns out that in the $2^2\mapsto 1$ scenario whenever quantum strategies can provide some advantage over their classical counterpart, this upper bound is actually attainable, as shown in the following lemma.
\begin{lemma}\label{lemma_2_bit}
    The quantum value of a $2^2\mapsto 1$ $b$-RAC with biasing strategy $\alpha_{{\bf x} y}=\alpha_{{\bf x}}r_y$ is
    \begin{equation}
        F_Q=\max\left\{F_C,\; \frac{1}{2}+\frac{1}{\sqrt{2}} \sqrt{\sum_{\bf x} p_{\bf x}^2} \sqrt{\sum_{y} r_y^2}\right\},
    \end{equation}
    where $F_C=\frac{1}{2}+\max\{ \frac{1}{2}p_{00}+\frac{1}{2}p_{01}(r_0-r_1),\, \frac{1}{2}p_{00}(r_0-r_1)+\frac{1}{2}p_{01} \}$, with $p_{\bf x}=\alpha_{\bf x}+\alpha_{\tilde{\bf x}}$, is the optimal performance over classical strategies.
\end{lemma}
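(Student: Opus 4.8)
Here the plan is to exploit that for $n=2$ the optimization over quantum decoding strategies collapses to a single real parameter, and then to locate the optimum by a concavity argument combined with the saturation analysis of the bound~\eqref{upb} already performed above.

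First I would reduce the problem to one over rank-$1$ projective qubit measurements acting on both bits. By Lemma~\ref{proj_meas} the value $F_Q$ is attained with projective measurements, and a two-outcome projective measurement on a qubit is either rank-$1$ or trivial ($\{0,\mathds{1}\}$); a trivial measurement for the $k$-th bit is a bit-dropping strategy, and by the discussion around \eqref{split}--\eqref{Fqs} any strategy that drops one or both bits is classical and hence bounded by $F_C$. Consequently $F_Q=\max\{F_C,\,F_Q^{\emptyset}\}$, where $F_Q^{\emptyset}$ denotes the optimum obtained with rank-$1$ projective decodings of both bits. For such decodings only the angle $\theta=\theta_{01}$ between the Bloch vectors ${\bf m}_0,{\bf m}_1$ (both of length $\tfrac12$) matters, and specializing \eqref{gsqrt} to $n=2$ gives $F_Q^{\emptyset}=\tfrac12+\tfrac12\max_{c\in[-1,1]}g(c)$ with $c=\cos\theta$ and
\begin{equation}
g(c)=p_{00}\sqrt{\textstyle\sum_y r_y^2+2r_0r_1c}+p_{01}\sqrt{\textstyle\sum_y r_y^2-2r_0r_1c},
\end{equation}
where $p_{00}=\alpha_{00}+\alpha_{11}$, $p_{01}=\alpha_{01}+\alpha_{10}$ and $\sum_{\bf x}p_{\bf x}^2=p_{00}^2+p_{01}^2$. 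Evaluating at the endpoints $c=\pm1$, where the two measurements commute, reproduces the two classical weighted-majority strategies, $g(1)=p_{00}+p_{01}|r_0-r_1|$ and $g(-1)=p_{00}|r_0-r_1|+p_{01}$, so that $\tfrac12+\tfrac12\max\{g(1),g(-1)\}=F_C$ (taking $r_0\ge r_1$ without loss of generality).

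Next I would note that $g$ is a sum of square roots of affine functions of $c$, hence concave on $[-1,1]$, so its maximum is attained either at the unique critical point $c^{*}$ of $g$, if $c^{*}\in[-1,1]$, or else at an endpoint. Solving $g'(c)=0$ yields exactly the value prescribed by the saturation condition~\eqref{gcond}, namely $c^{*}=(\sum_y r_y^2)(p_{00}^2-p_{01}^2)/[2r_0r_1(p_{00}^2+p_{01}^2)]$. If $|c^{*}|\le1$, I would substitute $c=c^{*}$: since at $c^{*}$ the two summands of $g$ become proportional (this is precisely the Cauchy--Schwarz equality condition leading to \eqref{upb}), the evaluation is short and gives $g(c^{*})=\sqrt2\,\sqrt{(p_{00}^2+p_{01}^2)(\sum_y r_y^2)}$, whence $F_Q^{\emptyset}=\tfrac12+\tfrac1{\sqrt2}\sqrt{\sum_{\bf x}p_{\bf x}^2}\sqrt{\sum_y r_y^2}$, i.e.\ the bound~\eqref{upb} is saturated and $F_Q$ equals the second entry of the claimed maximum. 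If $|c^{*}|>1$, concavity forces $g$ to be monotone on $[-1,1]$, so the constrained maximum sits at $c=1$ or $c=-1$; these are classical (commuting) configurations, and the larger one equals the larger term of $F_C$, so $F_Q^{\emptyset}=F_C$ and hence $F_Q=F_C$. Assembling the two cases gives the stated formula.

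The step I expect to be the main obstacle is the boundary saturation analysis: one must check that for $n=2$ the general consistency requirements stated after \eqref{gcond} — in particular positive semidefiniteness of the matrix $\tilde G$ in \eqref{Gtilde} — collapse to the single inequality $|c^{*}|\le1$ (for a $2\times2$ Gram-type matrix this follows at once from $|\cos\theta_{01}|\le1$), and then argue cleanly that in the regime $|c^{*}|>1$ the best admissible strategy is forced onto the parallel/antiparallel configuration and reproduces $F_C$, rather than remaining at an interior angle. A secondary but necessary ingredient is the bookkeeping that prevents bit-dropping strategies from beating both $F_C$ and $F_Q^{\emptyset}$, so that $\max\{F_C,F_Q^{\emptyset}\}$ is genuinely the quantum value.
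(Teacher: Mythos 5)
Your proposal is correct and follows essentially the same route as the paper: reduce to the single parameter $\cos\theta$ via Eq.~\eqref{gsqrt}, locate the unique interior critical point (which reproduces Eq.~\eqref{angles 2 bits into qubit} and saturates the bound~\eqref{upb}), and identify the endpoint values $\theta\in\{0,\pi\}$ with the two classical options in $F_C$. Your additions — the concavity of $g(c)$ as a sum of square roots of affine functions, which justifies that the maximum is either the interior critical point or an endpoint, and the explicit bookkeeping ruling out bit-dropping strategies — tighten steps the paper leaves implicit, but do not change the argument.
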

\begin{proof}
    Since $n=2$ the sum over $y$ in Eq.~\eqref{gsqrt} has only two terms. This means, since vectors $\bm m_y$ have maximal norm, that there is only one parameter to optimize over, which is the angle $\theta$ between $\bm m_0$ and $\bm m_1$. Looking for critical values, we find that the $b$-RAC functional can attain a maximal value only for $\theta$ satisfying $\sin(\theta)=0$ or
    \begin{equation}
    \cos(\theta)=\left( \frac{r_0^2+r_1^2}{2 r_0 r_1}\right) \left(\frac{p_{00}^2-p_{01}^2}{p_{00}^2+p_{01}^2} \right), \label{angles 2 bits into qubit}
    \end{equation}
    which is just the condition in Eq.~\eqref{gcond}, implying that $F_Q$ is then given by Eq.~\eqref{upb}. On the other hand, if $\theta$ equals $0$ or $\pi$, then $\FC_Q$ reads $\frac{1}{2}+\frac{1}{2}p_{00}+\frac{1}{2}p_{01}(r_0-r_1)$ and $\frac{1}{2}+\frac{1}{2}p_{00}(r_0-r_1)$, respectively. These two quantities can be checked to correspond to the two options present in the expression for the classical value.
\end{proof}
Lemma~\ref{lemma_2_bit} shows not only that the upper bound given in Eq.~\eqref{upb} can be attained, but that it will be attained with any bias for which quantum strategies provide any advantage over classical strategies, thus providing a complete solution to the $2^2\mapsto 1$ $b$-RAC value problem. Moreover, it follows from the proof that attaining the upper bound in Eq.~\eqref{upb} self-tests the angle $\theta$ between the Bloch vectors defining the optimal measurements. Unfortunately such a simple result does not hold for a larger number of bits, with solutions becoming more complex already in the case of $3$-bit input strings. For this particular case, nonetheless, the upper bound in Eq.~\eqref{upb} still can be attained for many biasing tensors, as we show in the following lemma, which extends the proof of Lemma~\ref{lemma_2_bit}.
\begin{lemma}\label{FQn=3}
    The quantum value $F_Q$ of a $3^2\mapsto 1$ $b$-RAC is given by
    \begin{equation}
        F_Q=\frac{1}{2} + \sqrt{\sum_{\bf x} p_{\bf x}^2} \sqrt{\sum_{y} r_y^2},
    \end{equation}
    whenever there exists an optimal decoding strategy for which the Bloch vectors $\{{\bf m}_{y}\}$ are linearly independent.
\end{lemma}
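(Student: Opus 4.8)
The plan is to show that, under the stated hypothesis, the Cauchy--Schwarz upper bound of Eq.~\eqref{upb}---which for $n=3$ collapses to exactly $\tfrac12+\sqrt{\sum_{\bf x}p_{\bf x}^2}\sqrt{\sum_y r_y^2}$, since $2^{n-3}=1$---is in fact attained. I would work, as in the proof of Lemma~\ref{lemma_2_bit}, with the Gram-matrix form Eq.~\eqref{gsqrt}, writing $F_Q=\tfrac12+\max_G g(G)$ with $g(G):=\sum_{\bf x}p_{\bf x}\sqrt{\tr G\,{\bf v}_{\bf x}{\bf v}_{\bf x}^T}$, where $G$ ranges over Gram matrices of three vectors in $\mathbb{R}^3$ with $G_{yy}=\tfrac14$ (maximal norm, since trivial measurements are suboptimal, as already noted for qubit strategies). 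Saturating the Cauchy--Schwarz step in Eq.~\eqref{upb} is equivalent to the $2^{n-1}=4$ proportionality conditions $|{\bf w}_{\bf x}|\propto p_{\bf x}$, where $|{\bf w}_{\bf x}|:=\sqrt{\tr G\,{\bf v}_{\bf x}{\bf v}_{\bf x}^T}$, i.e.\ to condition Eq.~\eqref{cond}; so the whole task reduces to arguing that an optimal $G$ whose underlying Bloch vectors are linearly independent necessarily meets it.

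To that end I would exploit first-order optimality. Let $G^\star$ be an optimal Gram matrix. By hypothesis its underlying Bloch vectors are linearly independent, i.e.\ $G^\star\succ0$; since the diagonal is pinned at $\tfrac14$, the off-diagonal entries $\{G_{ij}\}_{i<j}$ are free real parameters near $G^\star$, and positive-definiteness is an open condition, so $G^\star$ is an unconstrained local maximum of $g$ in those parameters. Because every ${\bf v}_{\bf x}$ has all of its components equal to $\pm r_y\neq0$, invertibility of $G^\star$ forces ${\bf w}_{\bf x}\neq0$ for every $\bf x$, hence $g$ is smooth at $G^\star$ and $\partial g/\partial G_{ij}=0$ for all $i\neq j$. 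Carrying out the derivative and dividing by $r_ir_j$, this reads, for each unordered pair $i\neq j$,
\[
\sum_{\bf x}\frac{p_{\bf x}}{|{\bf w}_{\bf x}|}\,(-1)^{x_i+x_j}=0,
\]
which is precisely the stationarity condition that, for $n=2$, produced Eq.~\eqref{angles 2 bits into qubit}.

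For $n=3$ there are exactly three such equations, for $(i,j)\in\{(0,1),(0,2),(1,2)\}$. Writing $c_{\bf x}:=p_{\bf x}/|{\bf w}_{\bf x}|$ and summing over the four representative strings ${\bf x}=0\,x_1x_2$, they become a homogeneous linear system in $(c_{000},c_{001},c_{010},c_{011})$ whose coefficient matrix is formed by the three nontrivial rows of a $4\times4$ Hadamard matrix; its kernel is spanned by the all-ones vector, so all four $c_{\bf x}$ coincide. Hence $|{\bf w}_{\bf x}|\propto p_{\bf x}$---exactly Eq.~\eqref{cond}---so the Cauchy--Schwarz bound is saturated, and substituting back while using the identity $\sum_{\bf x}|{\bf w}_{\bf x}|^2=\tr\bigl(G^\star\sum_{\bf x}{\bf v}_{\bf x}{\bf v}_{\bf x}^T\bigr)=\sum_y r_y^2$ (for $n=3$) to fix the proportionality constant yields $g(G^\star)=\sqrt{\sum_{\bf x}p_{\bf x}^2}\sqrt{\sum_y r_y^2}$, i.e.\ the claimed value of $F_Q$.

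The main place where care is needed is the reduction of optimality to ``the off-diagonal gradient of $g$ vanishes'': it rests on $G^\star$ lying in the interior of the PSD cone (equivalently, on the linear-independence hypothesis) and on $g$ being differentiable there. Both can fail only in degenerate situations---some $r_y=0$, or $p_{\bf x}=0$ for some but not all $\bf x$---which are precisely the cases incompatible with an optimal strategy having linearly independent Bloch vectors, and can therefore be excluded at the outset. The remaining ingredients are routine: the concavity of $g$ (composition of a linear map with the square root), the three-equation linear algebra forcing the $c_{\bf x}$ equal, and the arithmetic pinning down the constant; one should also keep in mind why $G_{yy}$ may be taken equal to $\tfrac14$ at the optimum, but this is just the ``trivial measurements are suboptimal'' remark already established for qubit strategies. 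It is worth stressing that, unlike the $n=2$ case of Lemma~\ref{lemma_2_bit}, here the bound need not always be reached---whether an optimal decoding with linearly independent Bloch vectors exists depends on the bias tensor---which is exactly why the lemma is stated conditionally.
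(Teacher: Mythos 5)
Your proposal is correct and follows essentially the same route as the paper: both reduce the claim to showing that first-order optimality of the Gram matrix (justified by linear independence, so that the off-diagonal entries are free parameters) forces $\sum_{\bf x}\frac{p_{\bf x}}{\sqrt{\tr G{\bf v}_{\bf x}{\bf v}_{\bf x}^T}}(-1)^{x_i+x_j}=0$, and both then observe that the resulting $3\times 4$ sign matrix has null space spanned by the all-ones vector, so the quotients coincide and the Cauchy--Schwarz bound of Eq.~\eqref{upb} is saturated. Your phrasing of the stationarity step via openness of the positive-definite cone, and your remark on concavity of $g$, are minor (and welcome) refinements of the paper's angle-parametrization argument, not a different proof.
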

\begin{proof}
    If the vectors are linearly independent, then the ensuing Gram matrix $G$ is full rank. As we did in Eq.~\eqref{gsqrt}, we can write the optimal performance in terms of the elements of $G_{i, j} = \braket{{\bf m}_i, {\bf m}_j} = \frac{1}{4} \cos{(\theta_{ij})}$, and in doing so parametrize $F_Q$ by the angles $\theta_{ij}$, which are all independent. As a consequence, a critical point will satisfy $\partial_{\theta_{ij}} F_Q=\partial_{G_{ij}}F_Q \sin{\theta_{ij}}=0$, which implies $\partial_{G_{ij}}F_Q=0$ because of the independence of the measurement operators. A direct calculation shows then that a critical point $\{\theta_{ij}\}$ should be such that the following relation is satisfied $\forall \, i, j$:
    \begin{equation}
        \sum_{\bf x}\frac{p_{\bf x}}{\sqrt{\tr G{\bf v}_{\bf x} {\bf v}_{\bf x}^T}}(-1)^{x_i+x_j}=0.
    \end{equation}
    This equation can be rewritten as a matrix equation $A\cdot b=0$ with
    \begin{equation}
        b_{\bf x}=\frac{p_{\bf x}}{\sqrt{\tr G{\bf v}_{\bf x} {\bf v}_{\bf x}^T}},
    \end{equation}
    and $A\in\mathbb{R}^{3\times 4}$ a matrix with entries $A_{(i, j), \bf x}=(-1)^{x_i+x_j}$. It is straightforward to check that the null space of $A$ is ${\rm Null}(A)={\rm span}\{(1, 1, 1, 1)\}$, implying that in a critical point the quotient $\frac{p_{\bf x}}{\sqrt{\tr G{\bf v}_{\bf x} {\bf v}_{\bf x}^T}}$ takes the same value for all strings $\bf x$, and that this value is $\sqrt{\frac{\sum_{\bf x} p_{\bf x}^2}{2^{n-3}\sum_y r_y^2}}$, thus proving that the only critical point in $F_Q$ satisfies condition \eqref{cond} and $F_Q$ is given by the upper bound in Eq.~\eqref{upb}.
\end{proof}
The above lemma shows that the cosines built up from the biasing parameters via the condition in Eq.~\eqref{gcond} will indeed provide the optimal measurements if, in addition to being consistent, the matrix $\tilde G$ of Eq.~\eqref{Gtilde} is full rank. Moreover, it is clear that attaining the upper bound in Eq.~\eqref{upb} self-tests the angles satisfying Eq.~\eqref{gcond} since the optimal performance is achieved only by satisfying these relations. We can give a geometric interpretation to these conditions by noting that positive semidefiniteness of $\tilde{G}$ is ensured if its determinant is non negative. We can write this last condition as $1-\sum_{i<j}\cos^2({\theta_{i j}})+2\prod_{i< j}\cos(\theta_{i j})\geq 0$, which is the equation of an ``inflated tetrahedron'' centered at the origin. Note that this origin corresponds to $\cos(\theta_{i j})=0\;\forall i\neq j$, i.e., the angles associated with mutually unbiased measurements, which are in turn the optimal decoding strategy for the case of unbiased input strings $\bf x$, as can be easily checked in Eq.~\eqref{gcond}. Therefore, we have that the upper bound in Eq.~\eqref{upb} becomes the quantum value whenever the optimal measurements, as described by the three cosines $\{ \cos(\theta_{ij})=4\braket{{\bf m}_i, {\bf m}_j} \}$, live inside the inflated tetrahedron. 

The geometrical picture introduced above turns out to be very helpful in understanding how different measurements become optimal as the bias tensor $\alpha_{{\bf x} y}$ departs from $\alpha_{{\bf x} y}=\frac{1}{3 2^3}$, which is the unbiased case. In what follows we study the solution when the bias tensor corresponds to one of the built-in biasing functions described in Section~\ref{S:package}. A similar analysis for other of these built-in biases can be found in Appendix~\ref{Ap:sec4}.

\subsubsection{The \texorpdfstring{\protect\Verb+X\_ONE+}{} and \texorpdfstring{\protect\Verb+Y\_ALL+}{} bias family}

We can have a better understanding about how different decoding strategies become optimal for different biases by analyzing a few examples. Consider the case of an input string bias of the form
\begin{equation}
    \alpha_{\bf x}=\begin{cases}
        w &\; \text{if} \;  {\bf x}=000 \\
        \frac{1-w}{7} &\; \text{otherwise}
        \end{cases}, \quad 0\leq w\leq 1 \label{x_one_bias}
\end{equation}
in combination with an arbitrary distribution $\{r_y\}$ for the requested bit. A numerical analysis suggests that for all members of the family the optimal quantum strategy involves all three bits in the input string, in which case the optimal performance is given by Eq.~\eqref{bitvalue}. In what follows then we will restrict our attention to those strategies.

The optimal decoding strategy for $w$ in the vicinity of $\frac{1}{8}$ is expected to be determined by Eq.~\eqref{gcond}. We find, for a bias of this form, that the upper bound in Eq.~\eqref{upb} is attainable if the biasing coefficients are such that the three cosines
\begin{equation}
    \cos[\theta_{i j}(w)]=\frac{\sum_y r_y^2}{2r_ir_j}h(w), \label{cos_x_one}
\end{equation}
where
\begin{equation}
    h(w)=\frac{32w^2+20w-3}{48w^2-12w+13}
\end{equation}
is an increasing function for $w\in[0, 1]$, and ${h(\frac{1}{8})=0}$, are found to be consistent. Note that from Eq.~\eqref{cos_x_one} it follows that the optimal strategy $(\cos[\theta_{01}(w)],~\cos[\theta_{02}(w)],$ $\cos[\theta_{12}(w)])\in\mathbb{R}^3$ continuously departs from the center of the tetrahedron as $w$ increases, in a direction that is specified by both the sign of $h(w)$ and the fixed but otherwise arbitrary choice of the $\{r_y\}$ bias. In what follows we will focus on the case of $h(w)\geq 0$, i.e., $w\in[1/8, 1]$, since it is enough to understand the behavior of the solutions in the entire interval.

For uniform $\{r_y\}$ we see that the cosines in Eq.~\eqref{cos_x_one} are the same $\forall i, j$ and consistent for $\frac{1}{8}\leq w\leq 5/12$, increasing from $0$ to $1$ as $w\in[1/8, 1]$ increases. Within this region, quantum strategies have an advantage over the classical ones except for $w=5/12$, in which case the two values coincide, $F_Q=F_C$. The optimal decoding strategy moves from the center of the inflated tetrahedron described above to its $(1, 1, 1)$ vertex as $w$ increases within $[\frac{1}{8}, \frac{5}{12}]$. For biasing parameter $w>\frac{5}{12}$ the cosines determined by Eq.~\eqref{cos_x_one} stop being consistent and the optimal strategy is expected to remain in the vertex of the tetrahedron, as follows from Lemma \ref{FQn=3} and the symmetry of the solution. The optimal performance resulting from this analysis is shown in FIG.~\ref{fig:X_ONE}, in comparison with the numerical value obtained with the see-saw procedure described in Sec.~\ref{see-saw method}.

Now let $\{r_y\}$ be biased such that $r_0>r_1=r_2$. In this situation, the condition in Eq.~\eqref{cos_x_one} implies $\cos[\theta_{01}(w)]=\cos[\theta_{02}(w)]<\cos[\theta_{12}(w)]$. As in the previous case the point in $\mathbb{R}^3$ representing the optimal strategy moves away from the coordinate origin as $w$ increases, but now towards a point located on the edge of the tetrahedron connecting the $(1, 1, 1)$ and $(-1, -1, 1)$ vertices, and reaching it for a given value $w_c$ of the biasing parameter satisfying $h(w_c)=\frac{2r_1^2}{r_0^2+2r_1^2}$. For larger values of $w$, $F_Q$ is no longer given by Eq.~\eqref{upb}, but numerics suggest that the optimal solution remains on the edge of the tetrahedron, i.e., $\cos[\theta_{12}(w)]=1$ for $w>w_c$, moving towards the $(1, 1, 1)$ vertex as $w$ increases. We can find the ensuing value $F_Q$ by imposing ${\bf m}_1={\bf m}_2$ in the first line of Eq.~\eqref{gsqrt}, which leads to
\begin{widetext}
\begin{equation}
\begin{split}
    F_Q(w>w_c)&=\frac{1}{2}+ \frac{r_0}{2}(p_{010} + p_{001}) + \max_{\{ {\bf m_0}, {\bf m_1} \}} \left\{ p_{000}|r_0 {\bf m}_0 + 2r_1 {\bf m}_1| + p_{100}|r_0 {\bf m}_0 - 2r_1 {\bf m}_1|\right\} \\
    &\leq \frac{1}{2}+\frac{r_0}{2}(p_{010} + p_{001}) +\frac{1}{\sqrt{2}}\sqrt{r_0^2+4r_1^2}\sqrt{p_{000}^2+p_{100}^2}
\end{split} 
\end{equation}
\end{widetext}
with the upper bound in the second line being attained for $\cos[\theta_{01}]=\cos[\theta_{02}]= c_\text{opt}$, where
\begin{equation}
    c_\text{opt}=\left(\frac{r_0^2+4r_1^2}{4r_0r_1}\right)\left(\frac{p_{000}^2-p_{100}^2}{p_{000}^2+p_{100}^2}\right)
    \label{condition for cosines}
\end{equation}
follows from Eq.~\eqref{angles 2 bits into qubit}. As already discussed, $F_Q$ coincides with this upper bound when the cosines produced by Eq.~\eqref{condition for cosines} as long as $-1< c_\text{opt} <1$, becoming equal to the optimal classical value otherwise. The piecewise-defined optimal performance,
\begin{equation}
    F_Q=\begin{cases}
        \frac{1}{2}+\sqrt{\sum_{\bf x}p_{\bf x}^2 }\sqrt{\sum_{y} r_y^2} \;\; &\text{if} \;\; w\leq w_c \\
        \frac{1}{2}+\frac{r_0}{2}(p_{010} + p_{001}) \\
        \:\quad + \frac{1}{\sqrt{2}} \sqrt{r_0^2+4r_1^2} \sqrt{p_{000}^2+p_{100}^2} \;\; &\text{if} \;\; w\geq w_c,
    \end{cases}
    \vspace{2mm}
\end{equation}
results from this analysis and is depicted in FIG.~\ref{fig:X_ONE_Y_ONE}, together with the results obtained from the numerical package and the upper bound value in Eq.~\eqref{upb}, for the case $r_0=\frac{1}{2}$ and $r_1=r_2=\frac{1}{4}$. 

\begin{figure}[t]
    \centering
    \hfill\includegraphics[width = 8.6 cm]{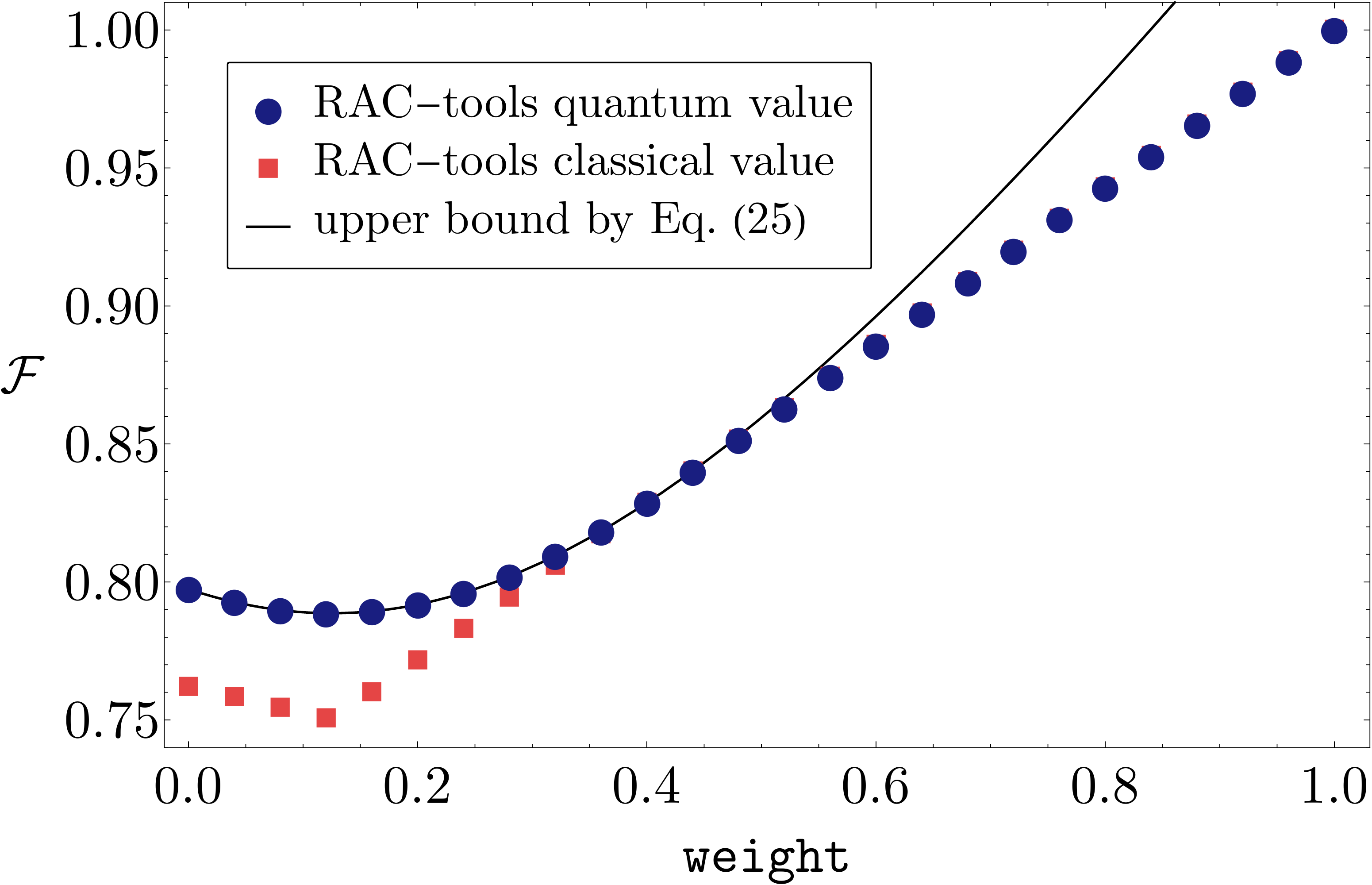}\\
    \vspace{2.5 mm}
    \hfill\includegraphics[width = 8.4 cm]{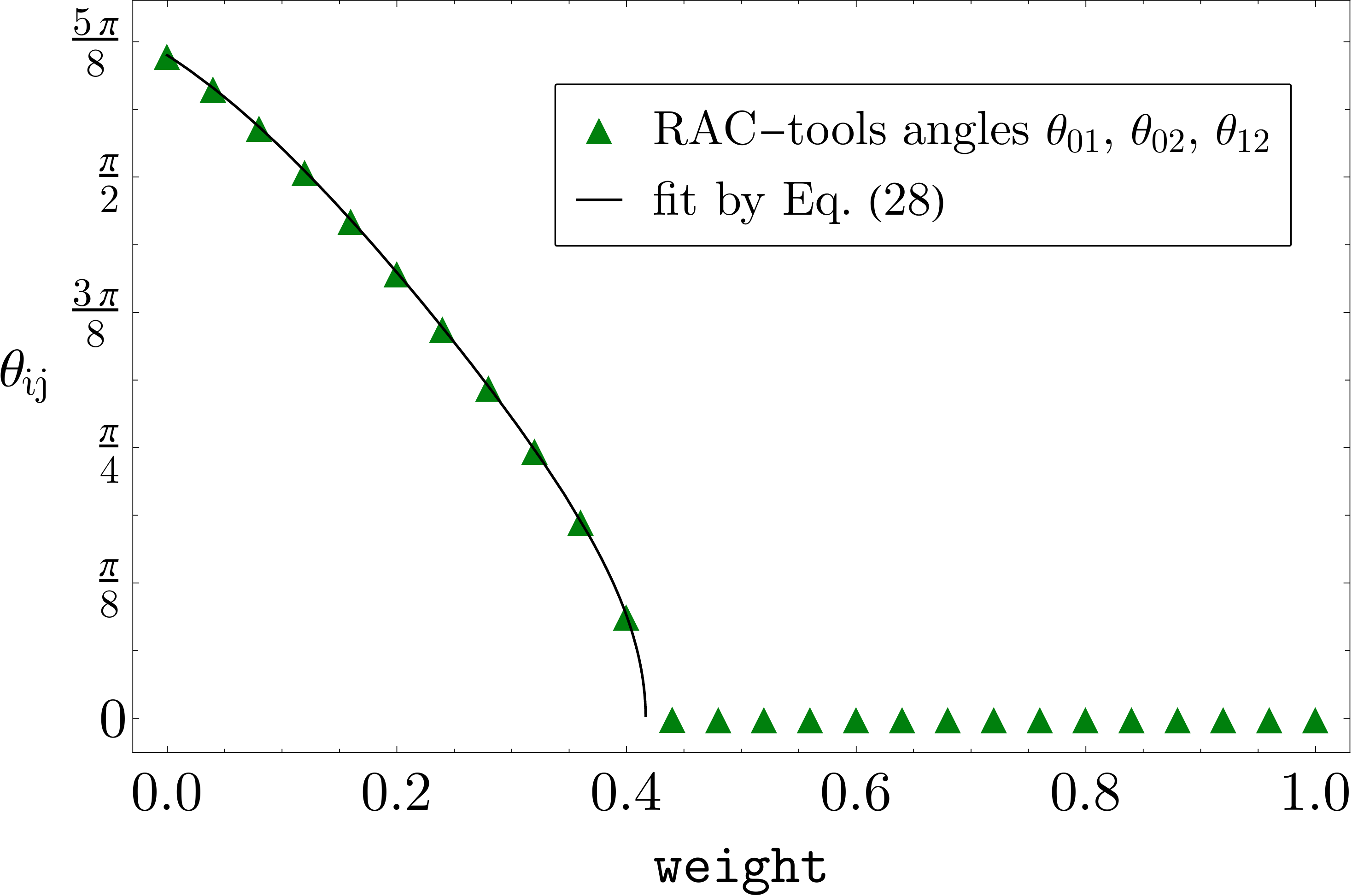}
    \caption{Top: Optimal performance of the $3^2\mapsto 1$ RAC, with \protect\Verb+X\_ONE+ bias, over classical (red squares) and quantum (blue dots) encoding-decoding strategies. For $w\in [1/8, 5/12)$ the quantum value is given by the upper bound in Eq.~\eqref{upb} (solid line) and is strictly larger than its classical counterpart. For larger values of the biasing parameter the two values coincide. Bottom: Angles $\theta_{ij}$ between the Bloch vectors defining the measurement operators in the optimal decoding strategy. As expected from the symmetry of the bias tensor, the three angles coincide and have positive values in the quantum advantage region $w\in [1/8, 5/12)$, vanishing for $w\geq 5/12$. }\label{fig:X_ONE}
\end{figure}

\begin{figure}[t]
    \centering
    \hfill\includegraphics[width = 8.6 cm]{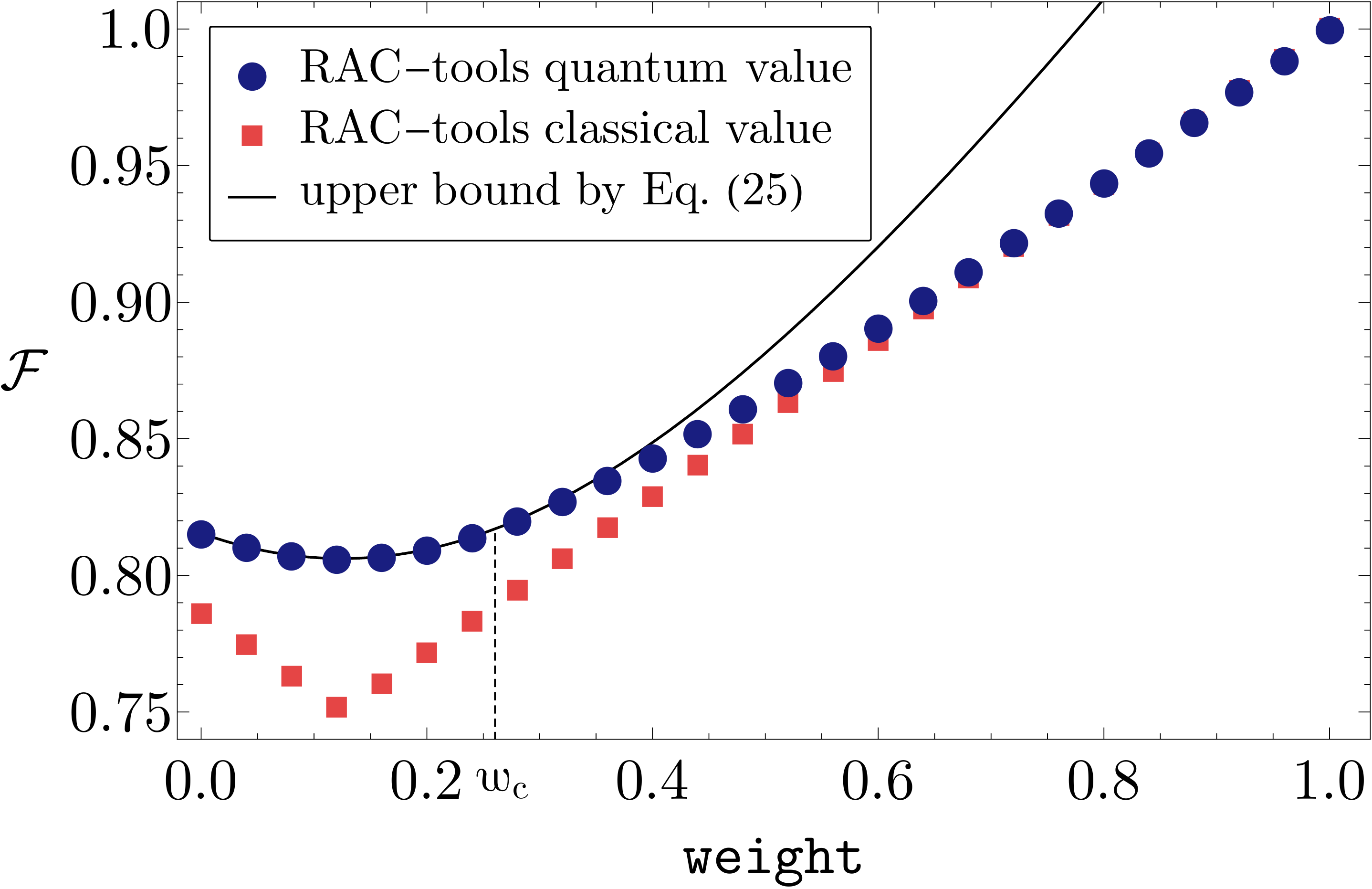}\\
    \vspace{2.5 mm}
    \hfill\includegraphics[width = 8.3 cm]{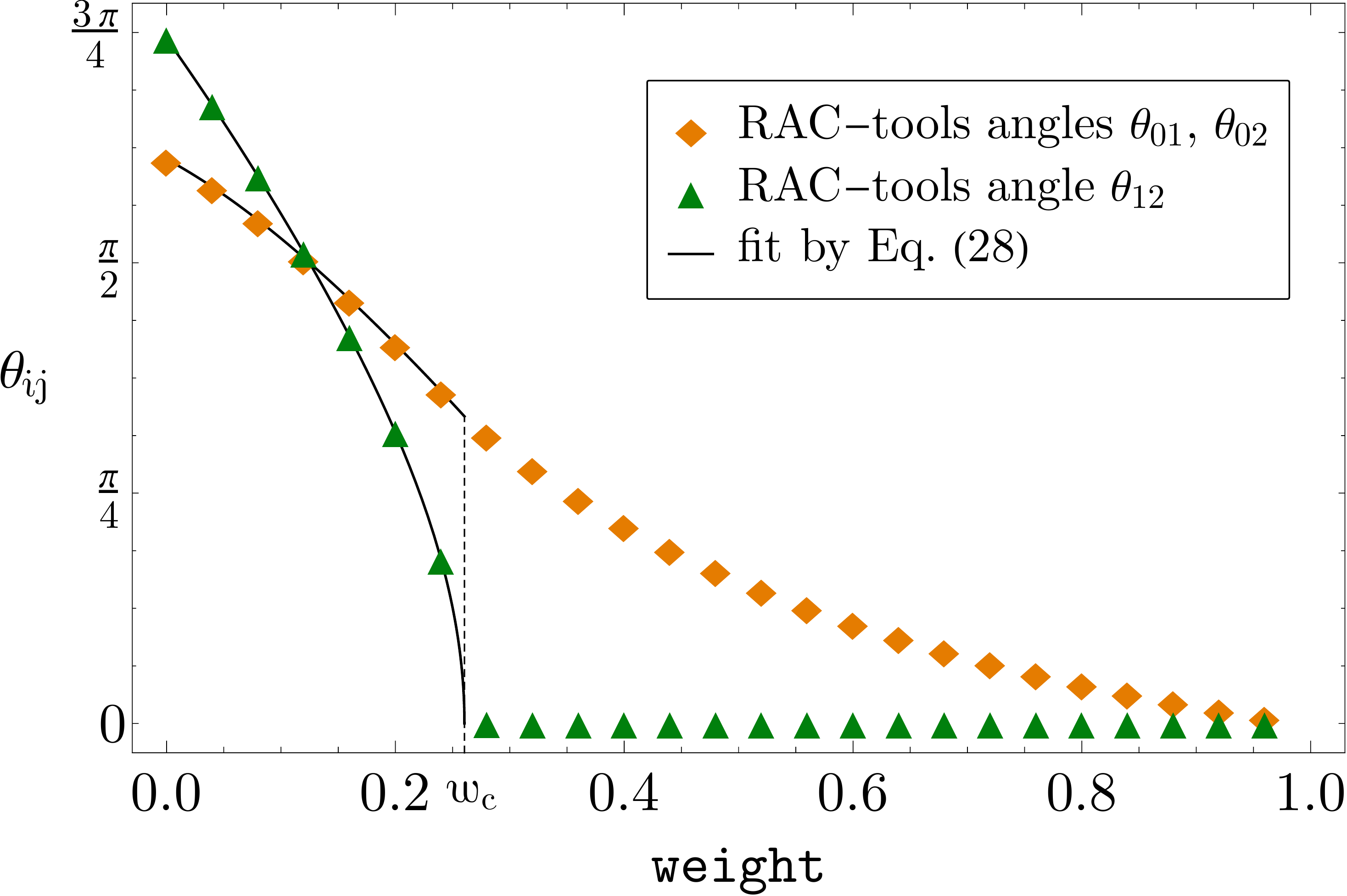}
    \caption{ Top: Optimal performance of the $3^2\mapsto 1$ RAC with \protect\Verb+X\_ONE+ bias and $r_0=0.5$, $r_1=r_2=0.25$. As a result of the bias in the requested bit, one of the conditions in Eq.~\eqref{cos_x_one} is saturated before the other two, giving rise to a region in which the optimal quantum performance (blue dots) is strictly larger than the optimal classical value (red squares) but nonetheless strictly smaller than the upper bound in Eq~\eqref{upb} (solid line). This region is found to be $w> \frac{1}{12}(7\sqrt{3} - 9)$. Bottom: Angles $\theta_{ij}$ parametrizing the optimal decoding strategy. Because of the asymmetry in the bias, one of the angles, $\theta_{12}$ (orange diamonds), is different from the other two (green triangles) and decreases faster as a function of the biasing parameter $w$, vanishing exactly at $w=\frac{1}{12}(7\sqrt{3} - 9)$.} \label{fig:X_ONE_Y_ONE}
\end{figure}

For any other bias in $\{r_y\}$, the solution will have a similar behavior as a function of $w$, departing from the origin as the parameter increases but reaching the boundary of the inflated tetrahedron somewhere over one of the curved faces at a given critical value $w_c$ of the biasing parameter. For $w>w_c$ the optimal measurements are no longer independent and the solution remains on the boundary, moving towards the vertex as $w$ approaches $1$ (in the limit $w\mapsto 1$ only the contribution from $\mathbf{x} = 000$ string is relevant, and therefore all the angles between measurements tend to zero). Finally, note that if we choose an analogous bias in which the weight $w$ in Eq.~\eqref{x_one_bias} is assigned to a string different from $000$, we will reach a similar conclusion except that some of the cosines in Eq.~\eqref{cos_x_one} might become negative, and for $w\mapsto 1$ the optimal solution might converge to a different vertex of the tetrahedron.


\section{Analytical results for the \texorpdfstring{$2^{\MakeLowercase{d}}$}{} \texorpdfstring{$\mapsto 1$}{} RAC \label{S:2d}} \label{S:2d}

After analyzing the quantum value of different $b$-RACs in the qubit setting, here we explore the strategies attaining the quantum value of the $2^d\mapsto 1$ $b$-RAC. In this scenario, if we consider a factorizable bias tensor $\alpha_{{\bf x} y}=\alpha_{\bf x} r_y$ and a quantum realization $\{\rho_{x_0x_1}, M_{0}^{x_0}, M_{1}^{x_1}\}$, the $b$-RAC value is given by
\begin{equation}
    \FC=\sum_{x_0x_1} \alpha_{x_0x_1} \tr[\rho_{x_0x_1}(r_0 M_0^{x_0} + r_1 M_1^{x_1})], \label{f2d}
\end{equation}
with $x_i\in\{0, ..., d-1\}$, and where $\rho_{x_0x_1}$ and $M_y^{x_y}$ are operators over a $d$-dimensional Hilbert space. 

For the particular case of $\alpha_{x_0x_1}=\frac{1}{d^2}$ and $r_y=\frac{1}{2}$ it is known that the quantum value can be attained only with rank-1 projective measurement operators \cite{Farkas19}. On the other hand, the results produced by the RAC-tools package for $2^d\mapsto 1$ $b$-RACs with $d\leq 6$ (see Appendix~\ref{Ap:D}) suggest that optimizing over projective measurements might already be enough to find the quantum value. In the following, we will then restrict ourselves to finding the optimal quantum strategies using projective measurements. As we will see in the following lemma, it is possible in this scenario to derive an upper bound analogous to Eq.~\eqref{upb} found for the $n^2\mapsto 1$ $b$-RAC.

\begin{lemma}\label{lemma_FQd}
    The optimal value over projective measurements, $\FC_P$, of the $2^d\mapsto 1$ $b$-RAC defined by the bias tensor $\alpha_{{\bf x}y}=\alpha_{\bf x}r_y$ satisfies
    \begin{equation}
        \FC_P\leq \frac{1}{2}+\frac{1}{2}\sqrt{d^2-4d(d-1)r_0r_1}\sqrt{\sum_{x_0x_1} \alpha_{x_0x_1}^2}. \label{upb-d}
    \end{equation}
\end{lemma}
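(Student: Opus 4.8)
The plan is to follow the same two-step route that produced the bound \eqref{upb} in the $n^2\mapsto1$ case: first turn the optimization over preparations and measurements into a pure optimization over the pairwise overlaps of Bob's decoding projectors, and then close it with a Cauchy--Schwarz step. As recalled in Sec.~\ref{see-saw method}, for fixed measurements the optimal preparation $\rho_{x_0x_1}$ in \eqref{f2d} is the eigenprojector attached to the largest eigenvalue of $r_0 M_0^{x_0}+r_1 M_1^{x_1}$, so that
\[
\FC_P=\max_{\{M\}}\sum_{x_0x_1}\alpha_{x_0x_1}\,\lambda_{\max}\!\left(r_0 M_0^{x_0}+r_1 M_1^{x_1}\right),
\]
and the whole problem reduces to controlling the top eigenvalue of a convex combination of two projectors (which in dimension $d$ with $d$ outcomes need not be rank one a priori).

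The first genuine step is to evaluate this eigenvalue. By Jordan's lemma the two projectors $M_0^{x_0}$ and $M_1^{x_1}$ can be simultaneously brought to a block-diagonal form with blocks of size at most two; using $r_0+r_1=1$ (which holds since the bias is factorizable with $n=2$), in each $2\times2$ block the operator $r_0 M_0^{x_0}+r_1 M_1^{x_1}$ has trace $1$ and determinant $r_0r_1\sin^2\theta$, with $\cos^2\theta$ the squared overlap of that block, while the $1\times1$ blocks only contribute eigenvalues $0,r_0,r_1$ or $1$ that are dominated by the $2\times2$ expression. This gives
\[
\lambda_{\max}\!\left(r_0 M_0^{x_0}+r_1 M_1^{x_1}\right)=\frac12+\frac12\sqrt{\,1-4r_0r_1+4r_0r_1\,\|M_0^{x_0}M_1^{x_1}\|^2}\,,
\]
where $\|\cdot\|$ is the operator norm; note that $1-4r_0r_1\ge0$ because $r_0r_1\le\tfrac14$, so there is no sign issue under the root. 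Next I would bound the overlaps globally: since $\|M_0^{x_0}M_1^{x_1}\|^2=\lambda_{\max}(M_0^{x_0}M_1^{x_1}M_0^{x_0})\le\tr(M_0^{x_0}M_1^{x_1}M_0^{x_0})=\tr(M_0^{x_0}M_1^{x_1})$ and $\sum_{x_0x_1}\tr(M_0^{x_0}M_1^{x_1})=\tr\!\big(\sum_{x_0}M_0^{x_0}\sum_{x_1}M_1^{x_1}\big)=\tr(\mathds{1})=d$, we obtain $\sum_{x_0x_1}\|M_0^{x_0}M_1^{x_1}\|^2\le d$ for every pair of projective measurements (with equality precisely for rank-one measurements, which is where one expects the bound to be tight).

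Finally, setting $s_{x_0x_1}:=1-4r_0r_1+4r_0r_1\|M_0^{x_0}M_1^{x_1}\|^2\ge0$, Cauchy--Schwarz between the nonnegative tuples $(\alpha_{x_0x_1})$ and $(\sqrt{s_{x_0x_1}})$ yields $\sum_{x_0x_1}\alpha_{x_0x_1}\sqrt{s_{x_0x_1}}\le\sqrt{\sum_{x_0x_1}\alpha_{x_0x_1}^2}\,\sqrt{\sum_{x_0x_1}s_{x_0x_1}}$, while $\sum_{x_0x_1}s_{x_0x_1}=d^2(1-4r_0r_1)+4r_0r_1\sum_{x_0x_1}\|M_0^{x_0}M_1^{x_1}\|^2\le d^2-4d^2r_0r_1+4dr_0r_1=d^2-4d(d-1)r_0r_1$; substituting back into $\FC_P=\tfrac12+\tfrac12\max\sum_{x_0x_1}\alpha_{x_0x_1}\sqrt{s_{x_0x_1}}$ gives \eqref{upb-d}. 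I expect the main obstacle to be the eigenvalue formula for the weighted sum of two projectors in the general, not-necessarily-rank-one case, which is exactly the point that requires Jordan's lemma; the rest is bookkeeping, the only items deserving care being the sign of $1-4r_0r_1$ and the compatibility of the trace bound on the overlaps with the monotonicity of $\sqrt{\sum_{x_0x_1}s_{x_0x_1}}$. As a sanity check, for $r_0=r_1=\tfrac12$ and $\alpha_{x_0x_1}=d^{-2}$ the right-hand side of \eqref{upb-d} collapses to the known value $\tfrac12+\tfrac1{2\sqrt d}$.
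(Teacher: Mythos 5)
Your proposal is correct and follows essentially the same route as the paper's proof: Jordan's lemma to reduce to $2\times 2$ blocks, the eigenvalue formula $\tfrac12\bigl(1+\sqrt{1-4r_0r_1\sin^2\theta}\bigr)$, the overlap bound via $\tr M_0^{x_0}M_1^{x_1}$ together with completeness giving $\sum_{x_0x_1}\tr M_0^{x_0}M_1^{x_1}=d$, and a final Cauchy--Schwarz step. The only cosmetic differences are that you phrase the block overlap through the operator norm $\|M_0^{x_0}M_1^{x_1}\|^2$ rather than principal angles, and you apply Cauchy--Schwarz before inserting the trace bound rather than after; also note your eigenvalue ``equality'' is strictly only an upper bound when the maximum sits in a one-dimensional block (e.g.\ $M_0^{x_0}=\mathds{1}$, $M_1^{x_1}=0$ with $r_1>r_0$), which is harmless here and handled the same way in the paper.
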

\begin{proof}
    We begin by noting that $r_0 M_0^{x_0}+r_1 M_1^{x_1}$ is positive semidefinite, and as a consequence the value of the functional in Eq.~\eqref{f2d} is upper bounded by
    \begin{equation}
        \FC_P\leq \sum_{x_0 x_1} \alpha_{x_0 x_1} \lambda_{x_0x_1},
    \end{equation}
    where $\lambda_{x_0x_1}$ denotes the largest eigenvalue of $r_0 M_0^{x_0}+r_1 M_1^{x_1}$. Because the measurements are assumed to be projective, by Jordan's Lemma there is a basis in which the operators $M_0^{x_0}$ and $M_1^{x_1}$ are jointly block-diagonal, with blocks of dimension $1$ or $2$. The restriction of these projectors to the $k$-th Jordan block, $P_k$ and $Q_k$, respectively, are rank-1 projectors regardless of the block dimension. The angle between the pure states on which they project, defined by $\cos^2{(\theta_k)}=\tr P_k Q_k $, is one of the {\it principal angles} between $M_0^{x_0}$ and $M_1^{x_1}$. The principal angle defines the coefficients of $P_k$ and $Q_k$ when the block is two-dimensional, which are given by
    {\allowdisplaybreaks
    \begin{subequations}
    \begin{align}
        P_k&=\frac{1}{2}\begin{bmatrix}
                1+\cos{(\theta_k)} & \sin{(\theta_k)} \\
                \sin{(\theta_k)} & 1-\cos{(\theta_k)}
            \end{bmatrix}, \\
        Q_k&=\frac{1}{2}\begin{bmatrix}
                1+\cos{(\theta_k)} & -\sin{(\theta_k)} \\
                -\sin{(\theta_k)} & 1-\cos{(\theta_k)}
            \end{bmatrix}.
    \end{align}
    \end{subequations}    
    }
    It follows then that we can write
    \begin{equation}
        r_0 M_0^{x_0}+r_1 M_1^{x_1}=\sum_{k} r_0 P_k + r_1 Q_k, 
    \end{equation}
    implying that $\lambda_{x_0x_1}$ is the largest eigenvalues of one of the blocks $r_0 P_k + r_1 Q_k$. If the corresponding block is two-dimensional, a direct calculation shows it is given by
    \begin{equation}
    \begin{split}
        \lambda_{x_0x_1} &= \frac{1}{2}\left[1 + \max_{k} \sqrt{1-4r_0r_1\sin^2(\theta_k)}\right] \\
        &\leq \frac{1}{2}\left[1+\sqrt{1 + 4r_0r_1(\tr M_0^{x_0}M_1^{x_1} -1) }\right]
    \end{split}, \label{lambda_max}
    \end{equation}
    where, in the last line, we used that $\cos^2{(\theta_k)}=\tr P_k Q_k $ and that $\tr P_k Q_k\leq\tr M_0^{x_0} M_1^{x_1} \;\forall \, k$. If the block is one-dimensional, then it is easy to see that $\lambda_{x_0x_1}\in\{0, r_0, r_1, 1\}$, in which case the upper bound in Eq.~\eqref{lambda_max} also holds. Combining this upper bound with the Cauchy-Schwarz inequality we arrive at
    \begin{equation}
        \FC_P\leq \frac{1}{2}+\frac{1}{2}\sqrt{d^2-4d(d-1)r_0r_1}\sqrt{\sum_{x_0x_1} \alpha_{x_0x_1}^2},
    \end{equation}
    where we have used the completeness relation satisfied by the measurement operators to write $\sum_{x_0x_1}\tr M_0^{x_0}M_1^{x_1}=d$.
\end{proof}

As follows from the the proof given above, attaining the upper bound of Lemma~\ref{lemma_FQd} is possible if there exist projectors $M_y^{x_y}$ such that the upper bound in Eq.~\eqref{lambda_max} is saturated and condition
\begin{equation}
    \sqrt{1+4r_0r_1(\tr M_0^{x_0}M_1^{x_1} - 1)}\propto \alpha_{x_0 x_1}, \label{cauchy-schwarz-d}
\end{equation}
is satisfied, where the proportionality constant is easily found to be
\begin{equation}
 C=\sqrt{\frac{d^2-4r_0r_1d(d-1)}{\sum_{x_0x_1}\alpha^2_{x_0x_1}}} .  
\end{equation}
Now note that the measurement operators satisfy $\tr M_y^{x_y}={\rm rank}(M_y^{x_y})$, since we are considering projective measurements. Then, squaring equation Eq.~\eqref{cauchy-schwarz-d} and summing over either $x_0$ or $x_1$ we arrive at
\begin{equation}
    {\rm rank}(M_y^{x_y})=\frac{1}{4r_0r_1}\left[ d(4r_0r_1-1)+C^2\sum_{x_{1-y}} \alpha^2_{x_0x_1} \right]. \label{rank}
\end{equation}
It is then easy to check that for $\alpha_{x_0x_1} = \frac{1}{d^2}$ Eq.~\eqref{rank} implies that the upper bound in Eq.~\eqref{upb-d} can be attained only with rank-1 measurement operators. In fact, by virtue of Eq.~\eqref{cauchy-schwarz-d} we have that the optimal measurements are mutually unbiased, thus recovering the solution reported in \cite{Farkas19} for the case $r_0=r_1=\frac{1}{2}$, and extending it to arbitrary biases on Bob's input. FIG.~\ref{fig:Y_ONE_d} shows the agreement of the optimal value provided by the numerical package and the upper bound in Lemma~\ref{lemma_FQd} for the cases $d=3$, $d=4$ and $d=5$.
\begin{figure}[b!]
    \centering
    \hfill\includegraphics[width = 8.45 cm]{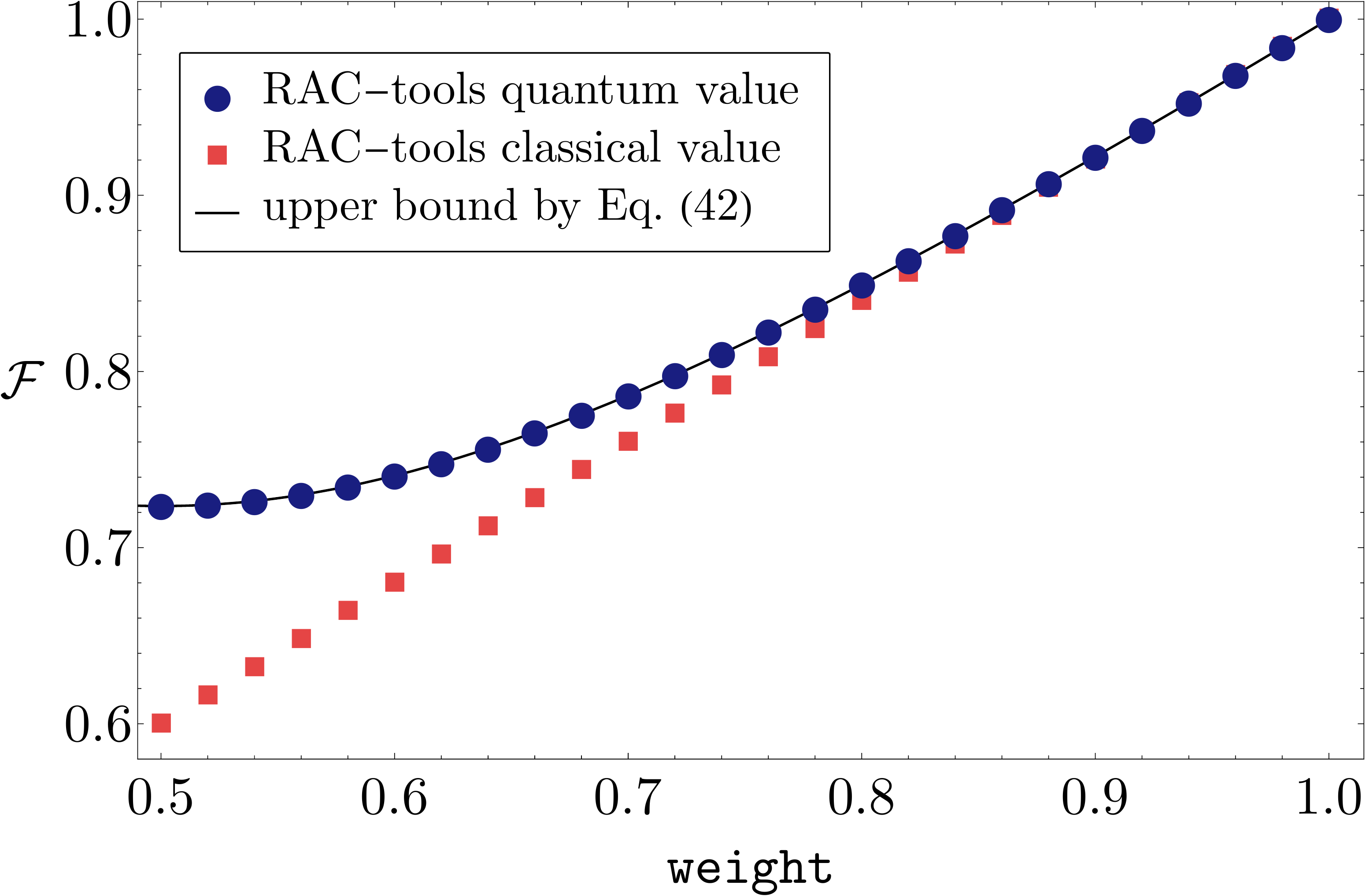}\\
    \vspace{-1 mm}
    \hfill\includegraphics[width = 8.65 cm]{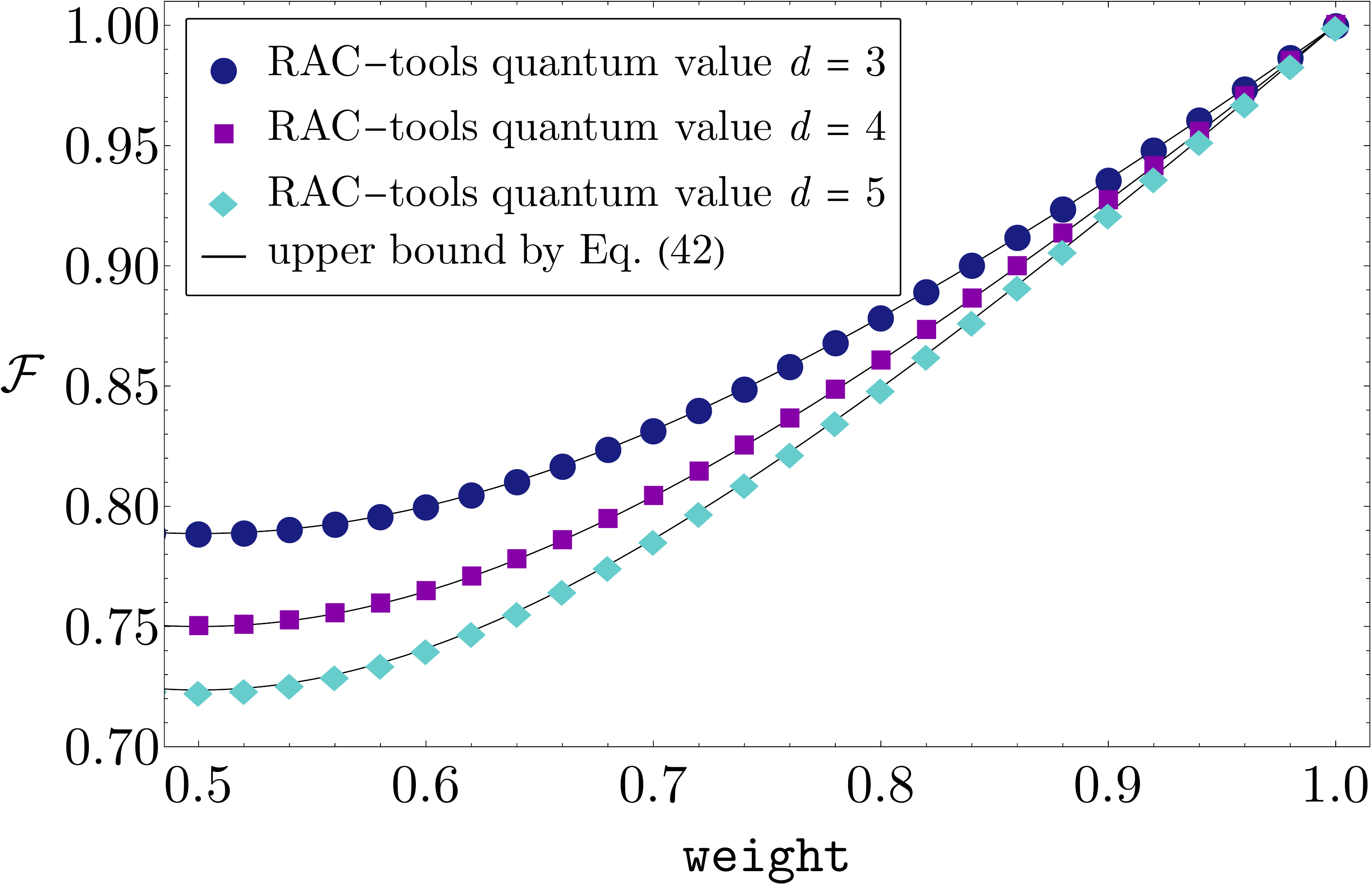}
    \caption{Top: Optimal performance over quantum (blue dots) and classical strategies (red squares) of the $2^5\mapsto 1$ $b$-RAC defined by the bias tensor $\alpha_{{\bf x}y}=\frac{1}{d^2}r_y$ corresponding to the \protect\Verb+Y\_ONE+ family, as computed by the RAC-tools package. The numerical results for the quantum value are compared with the upper bound in Eq.~\eqref{upb-d} (solid line). Bottom: Numerical results for the quantum value of the $2^d\mapsto 1$ $b$-RAC for $d=3$ (blue dots), $d=4$ (purple squares), and $d=5$ (cyan diamonds).} \label{fig:Y_ONE_d}
\end{figure}

The upper bound in Eq.~\eqref{upb-d} can also be attained for more general $b$-RACs. Indeed, since for any pair of rank-1 projective measurements it holds that $\tr M_0^{x_0}M_1^{x_1}=|U_{x_0 x_1}|^2$, with $U$ a $d\times d$ unitary matrix, it follows from the condition in Eq.~\eqref{cauchy-schwarz-d} that the  upper bound in Eq.~\eqref{upb-d} will be attainable with rank-1 projectors if there exist a unistochastic matrix $B$ satisfying
\begin{equation}
    B_{x_0 x_1}=1+\frac{1}{4r_0r_1}(C^2\alpha_{x_0 x_1}^2-1).
\end{equation}
Lastly, it is worth remarking that for some particular biases the optimal measurement operators may satisfy ${\rm rank}(M_{y}^{x_y})\neq 1$, as suggested by Eq.~\eqref{rank}. Actually, as follows from the discussion above, for any pair of projective measurements saturating inequality Eq.~\eqref{lambda_max} we can find $b$-RAC such that its optimal value is attained by these measurements. By summing over $x_0$ and $x_1$ in Eq.~\eqref{cauchy-schwarz-d} we find that the entries of the bias tensor specifying this $b$-RAC are given by
\begin{equation}
    \alpha_{x_0x_1}=\frac{\sqrt{1+4r_0r_1(\tr M_0^{x_0}M_1^{x_1}-1)}}{\sum_{x_0x_1} \sqrt{1+4r_0r_1(\tr M_0^{x_0}M_1^{x_1}-1)} }.
\end{equation}

It should be noted that the $b$-RACs defined in this way are not necessarily interesting from the perspective of studying the advantages of quantum resources, since there is no guarantee regarding the distance of the upper bound to the classical value; e.g., if the measurements we have chosen commute, then the upper bound will coincide with the classical value. This procedure can be used to build $b$-RACs tailored to specific pairs of projective measurements, in which the operators' rank is not restricted to 1.

\section{Conclusions}

In this work, we have presented $b$-RACS as a generalization of the RAC protocol in which the distribution of inputs to the parties is not necessarily uniform. Introducing a bias on these distributions has a profound impact on both the optimal value of the RAC functional and encoding-decoding strategies achieving it, and also on the capacity of quantum devices to provide an advantage in the protocol performance. Understanding how to optimize the performance of a given biased RAC is therefore a step in improving our understanding of the advantages of quantum resources.

The problem of optimizing the performance of an arbitrary $b$-RAC can be approached numerically with the aid of the algorithms we have presented here, which can be implemented by means of the RAC-tools Python package we produced for that purpose. The package allows the user to define arbitrary biases in the input distribution, and compute the classical and quantum value of the ensuing RAC functional, along with the encoding and decoding strategies attaining these values. We have used the package to study the $b$-RAC performance for different biases in the $n^2\mapsto 1$ and $2^d\mapsto 1$ scenarios, focused in the case of uncorrelated inputs. For these examples we also provide analytical results for the quantum value and the measurements attaining it, showing how these are determined by the chosen input bias.

In the $n^2\mapsto 1$ scenario, we have found that both classical and quantum optimal strategies may actually ignore part of the input strings. For quantum strategies it is first observed that optimal decoding can always be done with projective measurements. This allows the derivation of a simple upper bound which coincides with the quantum value for $2^2\mapsto 1$ $b$-RACS and, in some cases, for $3^2\mapsto 1$ $b$-RACs. Moreover, it is shown that attaining this upper bound self-tests the angles between the optimal measurement operators and, in particular, for the case of uniformly distributed input strings the optimal $b$-RAC performance certifies that the measurements correspond to MUB's. The argument in the derivation of this upper bound can be extended to the $2^d\mapsto 1$ scenario, providing thus an upper bound to the optimal performance achievable with projective measurements. This bound is shown to be always attainable using mutually unbiased measurements if the distribution of input strings is unbiased, regardless of the bias on the distribution of requested characters. For more general biases the upper bound will in general not be attainable, but we have shown that there are several instances in which this value is achievable. It is still not clear at the time of writing if, as suggested by our numerical results, the optimal $b$-RAC performance in this scenario is attainable only with projective measurements. In that case the upper bound we derived would coincide with the quantum value, and it would be worth to investigate the possibility of extending the self-testing results previously derived for the unbiased RAC in this scenario.

We have focused the discussion of analytical results in this work, almost completely, on the case of biased RACs in which the inputs of both parties are uncorrelated, since introducing correlations between them departs from the original spirit of the RAC protocol. Nevertheless, investigating how correlations in the inputs affect the performance of the protocol is an interesting next step for which the numerical tools we have developed are applicable.

\section{Acknowledgements}

We acknowledge fruitful discussions with M\'at\'e Farkas. The project ``Fundamental aspects of the quantum set of correlations'' (Grant No. 2019/35/D/ST2/02014) is carried out within the SONATA project of the National Science Centre, Poland.


\bibliographystyle{apsrev4-1}
\bibliography{references.bib}

\newpage

\appendix

\section{Proof of Lemma \ref{EXHAUSTIVE SEARCH LEMMA}} \label{proof of the exhaustive search lemma}

\begin{proof} For statement (a), assume that a particular encoding function $E (\mathbf{x}) = \mu$ is fixed. If so, the performance of $\mathcal{F}$ in Eq.~\eqref{general functional} for an arbitrary set of decoding functions $\{D_y\}_{y=0}^{n-1}$ is given by
\begin{align}
\mathcal{F} = \sum_{\mathbf{x}, y} \alpha_{\mathbf{x}, y, D_y ( E (\mathbf{x}))}
= \sum_{\mathbf{x}, y, \mu} \alpha_{\mathbf{x}, y, D_y (\mu)} \, \delta_{\mu, E (\mathbf{x})},
\end{align}
where we used Eq.~\eqref{deterministic behaviors} to compute the statistics. Then, the maximization of $\mathcal{F}$ over the set of decoding functions, is equivalent to the maximization of the image of $D_y (\mu)$:
\begin{align}
\max_{\{D_y\}_y} \mathcal{F} = \sum_{\mu, y} \max_{D_y (\mu)} \bigg\{ \sum_\mathbf{x} \alpha_{\mathbf{x}, y, D_y (\mu)} \, \delta_{\mu, E ( \mathbf{x} )}
\bigg\},
\label{decoding maximization}
\end{align}
which yields, for some $y$ and $\mu$, $D_y^*(\mu) = b$, where $b$ is the optimal image of $D_y(\mu)$.

Now, for statement (b), we proceed similarly by assuming that the decoding functions $D_y (\mu)$ are fixed for all $y$. Then, the value assumed by $\mathcal{F}$ for an arbitrary encoding function $E(\mathbf{x})$ is written as
\begin{align}
\mathcal{F} = \sum_{\mathbf{x}, y} \alpha_{\mathbf{x}, y, D_y ( E(\mathbf{x}) )}.
\end{align}
Analogous to statement (a), the maximization of $\mathcal{F}$ over the encoding function is equivalent to maximize the image of $E(\mathbf{x})$:
\begin{align}
\max_E \mathcal{F} = \sum_\mathbf{x} \max_{E(\mathbf{x})} \bigg\{ 
\sum_y \alpha_{\mathbf{x}, y, D_y (E(\mathbf{x}))}
\bigg\},
\label{encoding maximization}
\end{align}
which produces $E^*(\mathbf{x})=\mu$, where $\mu$ is the optimal image for $E(\mathbf{x})$.
\end{proof}
As explained in the main text, this lemma is useful in reducing the inherent complexity associated with the exhaustive search algorithm. To provide further clarity, we can maximize Eqs.~\eqref{decoding maximization} and \eqref{encoding maximization} with respect to the encoding and decoding functions, respectively. This additional maximization yields the classical value of $\mathcal{F}$ for both case, as follows:
\begin{subequations}
\begin{align}
\mathcal{F}_C = \max_{ E } \bigg\{ 
    \sum_{\mathbf{x}, y} \alpha_{\mathbf{x}, y, D_y^*( E (\mathbf{x}) )}
                            \bigg\}, \label{alternative method 2} \\
\mathcal{F}_C = \max_{ \{ D_y \}_{y=0}^{n - 1} } \bigg\{ 
    \sum_{\mathbf{x}, y} \alpha_{\mathbf{x}, y, D_y ( E^*(\mathbf{x}) )}
                            \bigg\}. \label{alternative method 1}
\end{align}
\end{subequations}
That is, Eqs.~\eqref{alternative method 2} and \eqref{alternative method 1} introduce a two-step maximization method that yields the precise classical value. For Eq.~\eqref{alternative method 2}, we first optimize over the decoding functions and then over the encoding functions. Conversely, for Eq.~\eqref{alternative method 1}, we follow the reverse order. This simple modification avoids the maximization over all combinations of $E$ and $\{ D_y \}_{y=0}^{n - 1}$. Furthermore, since the RAC protocol is asymmetrical with respect to Alice and Bob, the difference between Eqs.~\eqref{alternative method 2} and \eqref{alternative method 1} relies only on the computational complexity for each case.


\section{RAC-tools user guide} \label{Ap:pack}

In the main text, we introduced the functions that make up the RAC-tools package. In this appendix, we provide a more detailed description of these functions and their features.

\subsection{The \texorpdfstring{\protect\Verb+generate\_bias+}{} function} \label{Ap:gen_fun}

Since our interest in this work is to study the quantum and classical value of biased RACs, the main feature of the RAC-tools package is that it allows the user to introduce bias in the RAC functional, which will be optimized by either the \verb+perform_search+ or \verb+perform_seesaw+ functions. One way of doing this is by building a custom bias tensor and passing it as an argument to either of these functions as a Python dictionary. However, as constructing a bias tensor requires some effort, we provide a functionality that allows the user to choose from several simple and natural families of bias tensors. Each of these families takes one or more parameters, and it is particularly interesting to study the behavior of RACs as we vary the parameter. An example of such a construction was given at the beginning of Section~\ref{S:package}. In what follows, we describe in detail the built-in bias options that the user can access via the \verb+generate_bias+ function.

The goal of \verb+generate_bias+, in short, is to construct a properly normalized bias tensor using only a few previously specified parameters. 
This function is not intended to be called by the user, who should in turn specify the parameters defining the desired bias tensor as arguments of the optimization functions. In order to do so, the value of two variables, \verb+bias+ and \verb+weight+, must be specified. The variable \verb+bias+ is a string determining the structure of the bias to be generated, whereas the variable \verb+weight+ is a real-valued parameter (or a vector of parameters) that determines the actual weights given to different terms in the objective function.

As an example, we can consider a general version of the \verb+Y_ONE+ bias family already introduced in the main text. This is a family of bias tensors in which the input strings $\bf x$ are distributed uniformly, but there is bias in Bob's input, as one of the characters of $\bf x$, e.g., $x_k$, is requested more (or less) frequently than the others. If we call $w$ the parameter defining how often Bob is asked to recover $x_k$, then the bias tensor takes the form
\begin{align}
\alpha_{\mathbf{x} y} = 
\begin{cases}
    \frac{1}{m^n} w \, & \text{if} ~ y = 0, \\
    \frac{1}{m^n}\frac{(1 - w)}{n - 1} & \text{otherwise}.
    \label{YONE bias}
\end{cases}
\end{align}
In order to build this bias tensor via the \verb+generate_bias+ function, we need to pass as arguments of either \verb+perform_search+ or \verb+perform_seesaw+ the following string and float: \verb+bias="Y_ONE"+ and  \verb+weight=w+. By symmetry, the \verb+Y_ONE+ family considers only biasing the first character against the rest, as biasing other values of $y$ produces analogous results. It is possible, nevertheless, to introduce a bias on the frequency with which any of the characters $x_y$ is requested from Bob. This can be done by setting \verb+bias="Y_ALL"+ and \verb+weight=List+, where \verb+List+ is a list (or a tuple) of floats of length $n$ adding up to one. In this case, the bias tensor obtained from \verb+generate_bias+ takes the form
\begin{align}
\alpha_{\mathbf{x} y} =  \frac{1}{m^n} w_y,
\end{align}
where $w_y$ is the weight corresponding to the $y$-th character in the input string $\bf x$ and the factor $\frac{1}{m^n}$ results from the input strings $\bf x$ being uniformly distributed.

For introducing biases in the distribution of input strings, the package offers several one-parameter families, which we enumerate below:

\begin{enumerate}
    \item \verb+X_ONE+. Analogous to the \verb+Y_ONE+ family, it biases the input $\mathbf{x} = 0^{\times n}$ against the $m^n-1$ remaining strings. The user is allowed to define the weight $w$ that will be given to this first input, which will be used to generate a bias tensor of the form $\alpha_{{\bf x} y}=\alpha_{\bf x}\frac{1}{n}$, where
    \begin{equation}
        \alpha_{\bf x}=\begin{cases}
                        w & \text{if}\; {\bf x}=0^{\times n}, \\
                        \frac{1 - w}{m^n - 1} &\text{otherwise.}
                       \end{cases}
    \end{equation}
    \item \verb+X_DIAG+. This family of biases gives a special weight to input strings of the form $\mathbf{x} = i^{\times n}$, where $i = 0,\, ...,\, m - 1$. Since there are $m$ of these strings, in terms of the parameter $w$ controlled by the user, the distribution of input strings takes the form
    \begin{equation}
        \alpha_{\bf x}=\begin{cases}
                        \frac{w}{m} & \text{if}\; {\bf x}=i^{\times n}, \\
                        \frac{1 - w}{m^n - m} & \text{otherwise.}
                       \end{cases}
    \end{equation}
    \item \verb+X_CHESS+. In this case, the input strings are split into two classes depending on whether $\sum_j x_{j}$ is odd or even. Since the parity of the total number of strings is the same as that of $m$, when the latter is even, half of the strings go into each of the classes defined above. In that case, in terms of the weight $w$ chosen by the user, the ensuing distribution of input strings is given by
    \begin{equation}
    \alpha_{\bf x}=\begin{cases}
                    \frac{2w}{m^n} & \text{if}\; \sum_j x_{j}\;\text{is even}, \\
                    \frac{2(1 - w)}{m^n} & \text{otherwise}.
                   \end{cases}
    \end{equation}
    On the other hand, if $m$ is odd, the number of strings satisfying $\sum_j x_{j}$ odd is $\frac{m^n - 1}{2}$. In that case, the distribution of input strings reads 
    \begin{equation}
    \alpha_{\bf x}=\begin{cases}
                    \frac{2w}{m^n - 1} & \text{if}\; \sum_j x_{j}\;\text{is odd}, \\
                    \frac{2(1 - w)}{m^n + 1} & \text{otherwise}.
                   \end{cases}
    \end{equation}
    For $n = 2$, we can think of the elements of $\alpha_{\bf x}$ as the entries of a matrix, in which case the biased elements are arranged in a pattern that resembles a chess board.
    
    \item \verb+X_PLANE+. As before, the idea of this type of bias is to split the set of strings into two classes defined by the condition $x_0=0$. This corresponds to biasing just the first bit of the input string. Since there are $m^{n - 1}$ strings satisfying $x_0=0$, in terms of the parameter $w$, the ensuing distribution of input strings reads
    \begin{equation}
    \alpha_{\bf x}=\begin{cases}
                    \frac{w}{m^{n - 1}} & \text{if}\; x_{0}=0, \\
                    \frac{1 - w}{m^n - m^{n - 1}} & \text{otherwise}.
                   \end{cases}
    \end{equation}
\end{enumerate}

All the biases introduced so far depend only on $\mathbf{x}$ or only on $y$. In the next step, we could take one bias of each kind and combine them, which would lead to a product distribution over $\mathbf{x}$ and $y$. However, as mentioned in the introduction, there is no reason why we should restrict ourselves to product distributions. If we go back to the linear functional given in Eq.~\eqref{general functional}, it is natural to consider the case where the coefficients of the functional depend only on $b$, i.e., the answer that Bob is expected to give. In Appendix~\ref{Ap:sec4}, we discuss such scenarios and refer to them as \verb+B_ONE+ and \verb+B_ALL+ biases. In our usual language such cases correspond to nonfactorizable distributions of inputs $\mathbf{x}$ and $y$. The first bias of this kind, \verb+bias="B_ONE"+, corresponds to biasing the first outcome of Bob, $b=0$, against the remaining $d-1$ outputs
%
%
\begin{align}
\alpha_{\mathbf{x} y} = 
\begin{cases}
    \frac{1}{n} \frac{1}{m^{n - 1}} \, w \, & \text{if} ~ x_y = 0, \\
    \frac{1}{n} \frac{1}{m^{n - 1}}\frac{(1 - w)}{m - 1} & \text{otherwise},
\end{cases}
\end{align}
where $\frac{1}{n}\frac{1}{m^{n-1}}$ is the normalization factor. If a more general bias of the outputs is required, then the user can enter \verb+bias="B_ALL"+ as an option, in which case they should input as weight a Python list (or tuple). The \verb+generate_bias+ function will then output a bias tensor of the form
\begin{align}
\alpha_{\mathbf{x} y} = \frac{1}{n}\frac{1}{m^{n - 1}} \, w_{x_y},
\end{align}
where $w_{x_y}$ is the weight on the character $x_y$ -- and consequently on the $b$-th output of Bob since $b=x_y$.


\subsection{The \texorpdfstring{\protect\Verb+perform\_search+}{} function} \label{Ap:class_val_fun}

The goal of \verb+perform_search+ is to exactly compute the best classical performance of a given $n^m$ {\fontsize{9.5}{104}$\smash{\overset{d}{\mapsto}} $} 1 RAC. The function can perform this computation either via a complete exhaustive search or by means of the less expensive approach that follows from Eqs.~\eqref{alternative method 2} and \eqref{alternative method 1}. To operate \verb+perform_search+ it is enough to specify in its argument the three integers defining the scenario, $n$, $d$, and $m$, and the search method. The latter can be introduced by declaring either \verb+method=0+ for a pure exhaustive search, \verb+method=1+ for a search over encoding maps as in Eq.~\eqref{alternative method 2}, or \verb+method=2+ for a search over decoding maps as in Eq.~\eqref{alternative method 1}, which is the default method. Furthermore, the value of $m$ is set by default to coincide with that of $d$, so that users are not expected to declare it unless they require these numbers to be different.

An example of how this function operates can be seen in FIG.~\ref{classical example appendix}, in which the user desires to estimate the classical value of the $2^2 \mapsto 1$ unbiased RAC. The function is called passing as arguments \verb+n=2+, \verb+d=2+, and \verb+method=0+, and once the procedure is finished the report in FIG.~\ref{classical example appendix} is printed. The \emph{Summary of computation} section of the report informs the user the total time of computation as well as the total number of encoding and decoding functions analyzed for the chosen search method. For the case of \verb+method=0+, this latter information corresponds to the total number of combinations of encoding and decoding functions, i.e., $d^{m^n} \times m^{dn}$. In addition, the average time taken to iterate over each function (or combination of encoding and decoding functions, if \verb+method=0+) is displayed at \emph{Average time per function}.

\begin{figure}[hb]
\centering
\footnotesize
\begin{BVerbatim}[commandchars=\\\{\}]
> perform_search(n=2, d=2, method=0)

==========================================================
                      RAC-tools v1.0
==========================================================

----------------- Summary of computation -----------------

Total time of computation: 0.001859 s
Total number of encoding/decoding functions: 256
Average time per function: 7e-06 s

----------- Analysis of the optimal realization ----------

Computation of the classical value for the 2\textsuperscript{2}-->1 RAC:
0.75
Number of functions achieving the computed value: 24

First functions found achieving the computed value
Encoding: 
E: [0, 0, 0, 1]
Decoding: 
D\textsubscript{0}: [0, 1]
D\textsubscript{1}: [0, 1]

------------------- End of computation -------------------
\end{BVerbatim}
\caption{Report produced by the \protect\Verb+perform\_search+ function for the unbiased $2^2 \mapsto 1$ RAC. The first part of the report provides the user with information about the computation, whereas the second part provides the user with the RAC classical value and a realization attaining it. In this example, the value is found through an exhaustive search of the best value over both encoding and decoding maps.
\label{classical example appendix}}
\end{figure}

In the second part of the report, the user can see the computed classical value and the number of functions that achieve this value. Also, the report provides the user with a particular pair of encoding and decoding strategies attaining the optimal value. For the encoding function $E(\mathbf{x})$, the result is displayed in a tuple that is organized in ascending order of $\mathbf{x}$, i.e., $[E(00...0),~ E(00...1),~ ...,~ E((m-1)...(m-1))]$. For the decoding functions $D_y(\mu)$, each row corresponds to a distinct input $y$ and it is organized in ascending order of $\mu$. As it is expected from Lemma~\ref{lem_optclass}, the optimal strategy reported in FIG.~\ref{classical example appendix} consists of a majority encoding function and identity map for decoding.

Before closing, we would like to recall that the exhaustive search method will necessarily require more computation time than that required by either of the two approaches following from Lemma~\ref{EXHAUSTIVE SEARCH LEMMA}, since in the first case the search is performed over both encoding and decoding maps. TABLE~\ref{table:working cases} presents a comparison in terms of computation time for all of the three methods. The table shows that the method in Eq.~\eqref{alternative method 1} is the best in terms of computation time for most part of the cases, except for a few cases in which $d$ is the largest integer and $n=2$. In those cases, the method in Eq.~\eqref{alternative method 2} is equivalent or better.

\setlength{\tabcolsep}{6 pt}
\begin{table*}[t]
    \centering
    \caption{Cases which can be executed in less than one hour for \protect\Verb+method=0+. Computation time comparison with \protect\Verb+method=1+ and \protect\Verb+method=2+. For the construction of this table, an octa-core CPU was used (4 $\times$ 3.2 GHz and 4 $\times$ 2.064 GHz). The first column contains the executable cases for a given $n^m \smash{\overset{d}{\mapsto}} 1$ RAC. The second column contains the classical value computed for each unbiased case. The other columns contain the time taken to execute each procedure. Cases are sorted in ascending order by computation time for \protect\Verb+method=0+.}
    \begin{tabular}{||c c c c c||}
        \hline
        $n^m\overset{d}{\mapsto} 1$ & Classical value & Time (\protect\Verb+method=0+) & Time (\protect\Verb+method=1+) & Time (\protect\Verb+method=2+) \\ [0.5 ex] 
        \hline\hline
        $2^2\overset{2}{\mapsto} 1$ & $\sfrac{3}{4}$ & 1.8  ms & 0.32 ms & 0.28 ms\\ [0.15 ex]
        \hline
        $2^2\overset{3}{\mapsto} 1$ & $\sfrac{7}{8}$ & 26 ms & 2 ms & 1.2 ms \\ [0.15 ex]
        \hline
        $3^2\overset{2}{\mapsto} 1$ & $\sfrac{3}{4}$ & 0.14 s & 11 ms & 2.4 ms \\ [0.15 ex]
        \hline
        $2^2\overset{4}{\mapsto} 1$ & 1 & 0.18 s & 7.8 ms & 6.5 ms \\ [0.15 ex]
        \hline
        $2^3\overset{2}{\mapsto} 1$ & $\sfrac{5}{9}$ & 0.22 s & 21 ms & 2.6 ms \\ [0.15 ex]
        \hline
        $2^2\overset{5}{\mapsto} 1$ & 1 & 1.4 s & 19 ms & 28 ms \\ [0.15 ex]
        \hline
        $2^2\overset{6}{\mapsto} 1$ & 1 & 12 s & 40 ms & 82 ms \\ [0.15 ex]
        \hline
        $3^2\overset{3}{\mapsto} 1$ & $\sfrac{19}{24}$ & 21 s & 0.16 s & 23 ms\\ [0.15 ex]
        \hline
        $2^3\overset{3}{\mapsto} 1$ & $\sfrac{2}{3}$ & 1 min 4 s & 0.42 s & 27 ms \\ [0.15 ex]
        \hline
        $2^2\overset{7}{\mapsto} 1$ & 1 & 1 min 29 s & 61 ms & 0.25 s \\ [0.15 ex]
        \hline
        $2^4\overset{2}{\mapsto} 1$ & $\sfrac{7}{16}$ & 2 min 20 s & 2.2 s & 12 ms \\ [0.15 ex]
        \hline
        $4^2\overset{2}{\mapsto} 1$ & $\sfrac{11}{16}$ & 5 min 18 s & 2.4 s & 21 ms \\ [0.15 ex] 
        \hline
        $2^2\overset{8}{\mapsto} 1$ & 1 & 11 min 6 s & 0.1 s & 1 s \\ [0.15 ex]
        \hline
        $3^2\overset{4}{\mapsto} 1$ & $\sfrac{5}{6}$ & 33 min 16 s & 1.5 s & 0.12 s \\ [0.75 ex]
        \hline
    \end{tabular}
    \label{table:working cases}
\end{table*}

\subsection{The \texorpdfstring{\protect\Verb+perform\_seesaw+}{} function} \label{Ap:opt_val_fun}

This function implements the see-saw algorithm described in Section \ref{see-saw method}, and its goal is to provide lower bounds to the quantum value of a given $n^m$ {\fontsize{9.5}{104}$\smash{\overset{d}{ \mapsto} } $} $1$ $b$-RAC. As is the case with \verb+perform_search+, the \verb+perform_seesaw+ function takes as argument the integers defining the scenario, $n$, $d$, and $m$ and the bias tensor, either as a dictionary or via one of the aforementioned built-in options. The user is also asked to pass as an argument the number of starting points for the algorithm by means of the variable \verb+seeds+. Moreover, it is possible to use this function to compute a lower bound to the classical value, by means of the variable \verb+diagonal+. If \verb+diagonal=True+, the function initializes the see-saw algorithm with random diagonal measurements, and the optimization is then restricted to operators which are diagonal in the computational basis. By default, \verb+diagonal=False+, and the algorithm optimizes the functional value over POVM measurements.


When called, \verb+perform_seesaw+ runs the see-saw algorithm as many times as the number of seeds specified by the user, generating a lower bound to the quantum value per starting point. The best value is therefore the largest among all these lower bounds, implying that the chances of the function providing the actual quantum value of the $b$-RAC increase with the number of seeds, as well as the computation time. In TABLE~\ref{table:number of seeds}, we provide the number of seeds used to generate the numerical results presented in the main text and later in Appendix \ref{Ap:sec4}.

\begin{table}[b]
    \centering
    \caption{Number of seeds used for the numerical results presented in the figures. The first column contains the executable cases for a given $n^d \mapsto 1$ RAC followed by the bias family in the second column. The third column contains the number of seeds used followed by a link to the respective figure. For the cases where the realization varies smoothly according to the \protect\Verb+weight+, only a few seeds are needed. This is the case of all bias families explored in this work expect for \protect\Verb+X\_PLANE+, in which there is a critical point for \protect\Verb+weight+ in which the realization starts to ignore a bit.}
    \begin{tabular}{||c c c c||}
        \hline
        $n^d \mapsto 1$ & \protect\Verb+bias+ & Seeds & Figure\\ [0.5 ex] 
        \hline\hline
        $3^2 \mapsto 1$ & \protect\Verb+X_ONE+ & 3 & FIG.~\ref{fig:X_ONE}\\ [0.1 ex]
        \hline
        $3^2 \mapsto 1$ & \protect\Verb+X_ONE+ with $r_0=0.5$ & 3 & FIG.~\ref{fig:X_ONE_Y_ONE}\\
         & and $r_1 = r_2 = 0.25$ &  & \\ [0.1 ex]
        \hline
        $2^3 \mapsto 1$ & \protect\Verb+Y_ONE+ & 3 & FIG.~\ref{fig:Y_ONE_d}\\ [0.1 ex]
        \hline
        $2^4 \mapsto 1$ & \protect\Verb+Y_ONE+ & 3 & FIG.~\ref{fig:Y_ONE_d}\\ [0.1 ex]
        \hline
        $2^5 \mapsto 1$ & \protect\Verb+Y_ONE+ & 3 & FIG.~\ref{fig:Y_ONE_d}\\ [0.1 ex]
        \hline
        $4^2 \mapsto 1$ & \protect\Verb+Y_ONE+ & 3 & FIG.~\ref{fig:Y_ONE(n=4)}\\ [0.1 ex]
        \hline
        $2^2 \mapsto 1$ & \protect\Verb+X_CHESS+ & 3 & FIG.~\ref{fig:X_CHESS_2}\\ [0.1 ex]
        \hline
        $2^2 \mapsto 1$ & \protect\Verb+X_PLANE+ & 10 & FIG.~\ref{fig:x_plane2}\\ [0.1 ex]
        \hline
        $3^2 \mapsto 1$ & \protect\Verb+X_PLANE+ & 10 & FIG.~\ref{fig:x_plane3}\\ [0.1 ex]
        \hline
    \end{tabular}
    \label{table:number of seeds}
\end{table}

 
Because the see-saw algorithm is iterative, convergence criteria must be adopted to decide whether the optimal value for a given seed has been attained after a particular number of steps. In the \verb+perform_seesaw+ implementation of this algorithm, we impose two convergence criteria, and the procedure is finished whenever the two are satisfied. The first criterion is related to the convergence of the $\mathcal{F}$ value. It is satisfied whenever the difference between two consecutive evaluations of $\mathcal{F}$ is smaller than a value that can be set by the user via the variable \verb+prob_bound+. The default value of this variable is set to $10^{-9}$. The second stopping criterion considers the convergence of the measurements, and it focuses on the distance between the optimal measurement operators in two consecutive iterations of the algorithm. More precisely, we will say that the measurements converged if the condition
\begin{align}
\max_{y, b} \big|\big| M_y^b - N_y^b \big|\big| < t \label{meas_bnd} 
\end{align}
is satisfied, where $|| \cdot ||$ denotes the Frobenius norm, $N_y^b$ and $\smash{M_y^b}$ denote two consecutive measurement operators associated with the same value of the $y$-th character of the input $\bf x$, and $t$ is a threshold that can be defined by the user via the variable \verb+meas_bound+, which as a default takes the value $10^{-7}$. For the evaluation of the condition in Eq.~\eqref{meas_bnd}, we use the function \verb+norm+, from \verb+numpy.linalg+, to implement the Frobenius norm.
    
\begin{figure}[b]
\centering
\footnotesize
\begin{BVerbatim}
> perform_seesaw(n=2, d=2, seeds=5)

==========================================================
                      RAC-tools v1.0
==========================================================

----------------- Summary of computation -----------------

Number of random seeds: 5
Average time for each seed: 0.14852 s
Average number of iterations: 3
Seeds 1e-13 close to the best value: 5

----- Analysis of the optimal realization for seed #1 ----

Estimation of the quantum value for the 2²-->1 QRAC: 
0.853553390593

Measurement operator ranks
M[0] ranks:  1  1
M[1] ranks:  1  1

Measurement operator projectiveness
M[0, 0]:  Projective		6.44e-15
M[0, 1]:  Projective		6.44e-15
M[1, 0]:  Projective		6.78e-15
M[1, 1]:  Projective		6.78e-15
 
Mutual unbiasedness of measurements
M[0] and M[1]:  MUB		5.91e-14

------------------- End of computation -------------------
\end{BVerbatim}
\caption{Report produced by the function \protect\Verb+perform\_seesaw+. In the first part of the report, the function produces a summary of the computation, displaying information such as the number of random starting points, etc. In the second part, it displays the largest optimal value found among all seeds and an analysis of the optimal measurements obtained by this seed.}
\label{simple example appendix}
\end{figure}

The value of both, \verb+prob_bound+ and \verb+meas_bound+, can be passed as an argument to \verb+perform_seesaw+. In addition to the convergence criteria, we have imposed a limit to the number of iterations to be executed by the algorithm, so that if after 200 iterations either the value or the measurements fail to converge, the calculation stops. In this case, the message \emph{maximum number of iterations reached} is displayed as a warning. This limit can be modified by entering a different value to the variable \verb+max_iterations+ in the argument of the function.


An example of the operation of \verb+perform_seesaw+ can be seen in FIG.~\ref{simple example appendix}, in which the user wants to estimate the quantum value of the $2^2 \mapsto 1$ unbiased RAC. As in the case of \verb+perform_search+, the user passes as arguments $n=2$ and $d=2$ to define the scenario, but now instead of choosing a search method the user introduces the number of starting points to be used by passing \verb+seeds=5+. After finishing the procedure the function prints a report divided into two parts. In the \emph{Summary of the computation}, it presents the number of random starting points, the average processing time, and the average number of iterations among all starting points. In addition, it shows how many starting points produced an optimal value that is close to the largest value obtained. The interval to consider two values produced by different starting points as close is the accuracy of the solver MOSEK, which is set to $10^{-13}$. This informs the user how frequent it is to obtain such an estimation; if this number is much smaller than \verb+seeds+, this indicates that the user should increase the number of starting points in case of a new execution.  

In the second part, the estimation of the optimal value is reported, followed by information about the set of measurements attaining such value. Note that the reported value in FIG.~\ref{simple example appendix} matches the one found by Ref.~\cite{Ambainis09}. Next, the report displays the rank of the optimal measurement operators, which is computed using the function \verb+matrix_rank+ of \verb+numpy.linalg+. In addition, the user can check whether the measurement operators are projective. The number shown in the second column of \emph{Measurement operator projectiveness} corresponds to the quantity
\vspace{3mm}
\begin{align}
\big|\big|(M_y^b)^2 - M_y^b \big|\big|.
\end{align}
For both of these checks, rank and projectiveness, we preset a tolerance of $10^{-7}$.

Lastly, in the case where at least two measurements are rank-one and projective, the function also computes whether each pair of measurements can be constructed out of mutually unbiased bases (MUB). For a pair of rank-one projective measurements, let us say $\{P^a\}_{a=0}^{m-1}$ and $\{Q^b\}_{b=0}^{m-1}$, where $a$ and $b$ denote the $a$-th and the $b$-th outcome, it is enough \cite[App. B]{Tavakoli21} to check if
\vspace{3mm}
\begin{multline}
P^a = m \, P^a Q^b P^a \;\; \text{and} \;\; Q^b = m \, Q^b P^a Q^b \\
\forall ~ a,\, b \in \{0,\, 1\, ...,\, m-1\}.
\end{multline}
In this case, the number displayed in the second column of \emph{Mutual unbiasedness of measurements} represents the quantity
\vspace{3mm}
\begin{align}
\max_{a, b} \big\{|| m\, P^a Q^b P^a - P^a ||, || m \, Q^b P^a Q^b - Q^b ||\big\}. \label{MUM measure}
\end{align}
For the cases in which the amount in Eq.~\eqref{MUM measure} is lower than \verb+MUB_BOUND=5e-6+, the function prints \verb+MUB+. Otherwise, it simply displays \verb+Not MUB+.

\vspace{2.5mm}

\section{Analysis for other built-in families of bias} \label{Ap:sec4}

In the main text, we have used the analytical results derived for the $2^n\mapsto 1$ scenario to study the quantum value of the $b$-RACs determined by the \protect\verb+X_ONE+ bias family introduced above. Here, we offer a similar analysis for the $b$-RAC families determined by others of these built-in biases in the $2^n\mapsto 1$ scenario.

\vspace{2.5mm}

\subsection{The \texorpdfstring{\protect\Verb+Y\_ONE+}{} bias family}

We start by looking at the case where the bias is only on the requested bit $y$, i.e., $\alpha_{\bf x}=\frac{1}{2^n}$, which leads to $p_{\bf x}=\frac{1}{2^{n-1}}\,\forall {\bf x}$. From $p_{\bf x}$ being constant follows, for $n=2$ and $n=3$, that $\cos{(\theta_{ij})}=0\;\; \forall\, i\neq j$ in Eq.~\eqref{gcond}, i.e., the optimal measurements are mutually unbiased. The quantum value is therefore given by the upper bound in Eq.\eqref{upb},
\begin{equation}
    F_Q=\frac{1}{2} + \frac{1}{2} \sqrt{\sum_{y} r_y^2}.
\end{equation}

For $n=4$, the upper bound is not attainable when $r_y=\frac{1}{4}$, as it would require the four vectors $\{{\bf m}_y\}$ to be mutually orthogonal. For weak biases ($r_y\approx \frac{1}{4}$) satisfying the conditions in Eq.~\eqref{gcond}, it would still require these vectors to be linearly independent, and therefore the upper bound is still not attainable. However, if we consider a stronger bias such that the weight on one of the bits becomes negligible, we would expect the bound in Eq.~\eqref{upb} to be attainable again. We can realize such situation by defining
\begin{equation}
    r_y=\begin{cases}
        w & \text{if} \;  y=0 \\
        \frac{1-w}{3} & \text{otherwise}
        \end{cases}, \quad 0\leq w\leq 1. \label{y_one_bias}
\end{equation}
Clearly in this case $w=0 \Rightarrow r_0=0$, and the value of the first bit is never requested from Bob to be decoded. Thus, there are only three Bloch vectors representing measurements that can be chosen to be orthogonal to each other, so that the bound in Eq.~\eqref{upb} is attained.

\begin{figure}[b!]
    \centering
    \hfill\includegraphics[width = 8.65 cm]{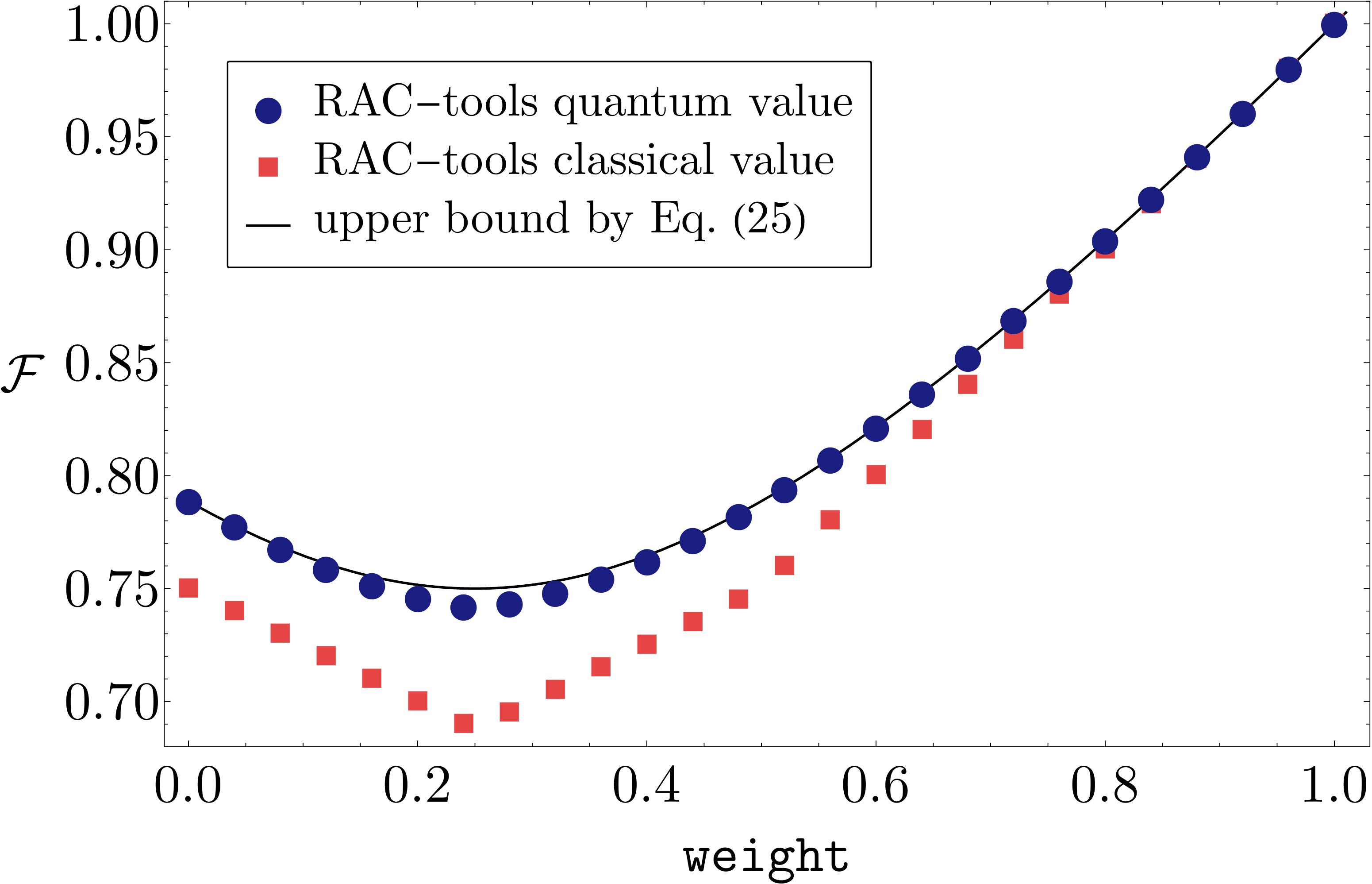}
    \caption{Optimal performance of the $4^2\mapsto 1$ $b$-RAC, with \protect\Verb+Y\_ONE+ bias, over classical (red squares) and quantum (blue dots) encoding-decoding strategies.}\label{fig:Y_ONE(n=4)}
\end{figure}

This is indeed the case, as is shown in FIG.~\ref{fig:Y_ONE(n=4)}, which depicts the results provided by the numerical package for the optimal performance of this $4^2\mapsto 1$ $b$-RAC over both, quantum and classical encoding-decoding strategies. As can be seen in this plot, the numerical quantum values (blue dots) lie very close to  upper bound (solid line), coinciding with it only at the extremal values ($w=0,~1$) and exhibiting the largest difference at $w=\frac{1}{4}$. An inspection of the optimal measurements extracted from the numerical solutions shows that the angles $\theta_{ij}$ parametrizing the measurements naturally divide into two sets, $\Theta_0=\{\theta_{0 i}\}$ and $\Theta_{\bar 0}=\{\theta_{ij}\}$, with $i\neq j=1, 2, 3$, and the angles in one of them being all $\frac{\pi}{2}$. As explained above, for $r_0=0$, the optimal solution involves only the vectors ${\bf m}_i$, $i=1,~2,~3$, which can be chosen to be mutually orthogonal. For small values of $r_0$, we would then expect these three vectors to remain orthogonal (or close to orthogonal), since they contribute the most to the functional value, and ${\bf m}_0$ to be some linear combination of them. The solution provided by the numerical search shows that this intuition is correct since for $w\in[0, w_0]$, with $w_0\approx 0.27415$, we find $\{{\bf m}_i\}$ to be an orthogonal set, with ${\bf m}_0$ being aligned with any of them, meaning that the angles in $\Theta_{\bar 0}$ are all $\frac{\pi}{2}$. This solution has also been found by the authors of Ref.\cite{Ambainis09} in a numerical search for the case of $w=\frac{1}{4}$.

On the other hand, if $w=r_0=1$, only the first bit is to be retrieved by Bob, which can be done with probability $1$ with a classical strategy. For $w\approx 1$ therefore we would expect ${\bf m}_0$ to lie orthogonal to the subspace spanned by $\{{\bf m}_i\}$, which is therefore bound to have dimension $2$. This is indeed the case, as shown by our numerical results: For $w\in[w_0, 1]$ the optimal value is numerically attained with a decoding strategy in which ${\bf m}_0$ is orthogonal to all the ${\bf m}_i$'s, which therefore span a plane in $\mathbb{R}^3$, implying that the angles in $\Theta_{0}$ are all $\frac{\pi}{2}$, while the angles in $\Theta_{\bar 0}$ can be chosen to be $\Theta_{\bar 0}=\{\frac{\pi}{3}\}$. It follows then that the ${\bf m}_i$ are uniformly distributed in the plane orthogonal to ${\bf m}_0$.

\subsection{The \texorpdfstring{\protect\Verb+X\_CHESS+}{} bias family}

Let us now consider a different distribution $\alpha_{\bf x}$ for the input strings, given by
\begin{equation}
    \alpha_{\bf x}=\begin{cases}
        \frac{w}{2^n} & \text{if} \;  \sum_{i} x_i\; \text{odd} \\
        \frac{1-w}{2^n} & \text{otherwise}
        \end{cases}, \quad 0\leq w\leq 1 \label{x_chess_bias}
\end{equation}
and an arbitrary distribution $\{r_y\}$ for the requested bit. As we will show now, this bias has no net effect on the $b$-RAC value when the number of bits is odd. Indeed, note first that there are $2^n$ input bit strings, half of which are such that $\sum_{i} x_i$ is even. Now if $\tilde{\bf x}$ is the string obtained from $\bf x$ by flipping all of its bits, then the sum of bits has the same parity in both strings if $n$ is even, whereas if $n$ is odd this parity is different. As a result, it follows from Eq.~\eqref{x_chess_bias} that for odd $n$ we have $p_{\bf x}=\alpha_{\bf x} + \alpha_{\tilde{\bf x}} = \frac{1}{2^{n-1}}$ and the functional value becomes
\begin{equation}
    F_Q = \frac{1}{2}+\frac{1}{2^{n-1}}\max_{\{{\bf m}_y\}}\sum_{\bf x} \,|\sum_y r_y (-1)^{x_y} {\bf m}_y|, 
\end{equation}
which is the same as that of the unbiased case. We can illustrate this feature by analyzing the $n=2$ and $n=3$ cases. A direct calculation shows that for $n=2$ the value is given by
\begin{widetext}
\begin{align}
F_Q &= \frac{1}{2}+ \max_{\{{\bf m}_0, {\bf m}_1\}}\frac{1-w}{2}|r_0{\bf m}_0+r_1{\bf m}_1| + \frac{w}{2}|r_0{\bf m}_0-r_1{\bf m}_1| \nonumber \\
&\leq \frac{1}{2}+ \max \left\{\frac{1-w}{4}+\frac{w}{4}(r_0-r_1), \frac{w}{4}+\frac{1-w}{4}(r_0-r_1), \frac{1}{2\sqrt{2}}\sqrt{w^2+(1-w^2)}\sqrt{r_0^2+r_1^2} \right \} \label{Q_CHESS_2}
\end{align}
\end{widetext}
where, in the second line, we used Lemma \ref{lemma_2_bit}. In FIG.~\ref{fig:X_CHESS_2}, we show this value as a function of the biasing parameter $w$, compared with the numerical results for the optimal value over both, quantum and classical strategies, for the case of $r_0=r_1=\frac{1}{2}$. As is easy to check from Eq.~\eqref{Q_CHESS_2}, the optimal performance for quantum strategies is better than that over the classical ones for $w\in (0, 1)$, becoming equal only for $w=0\, (1)$ in which case only the strings $00$ and $11$ ($01$ and $10$) are given to Alice for encoding.

\begin{figure}[hb!]
    \centering
    \hfill\includegraphics[width = 8.6 cm]{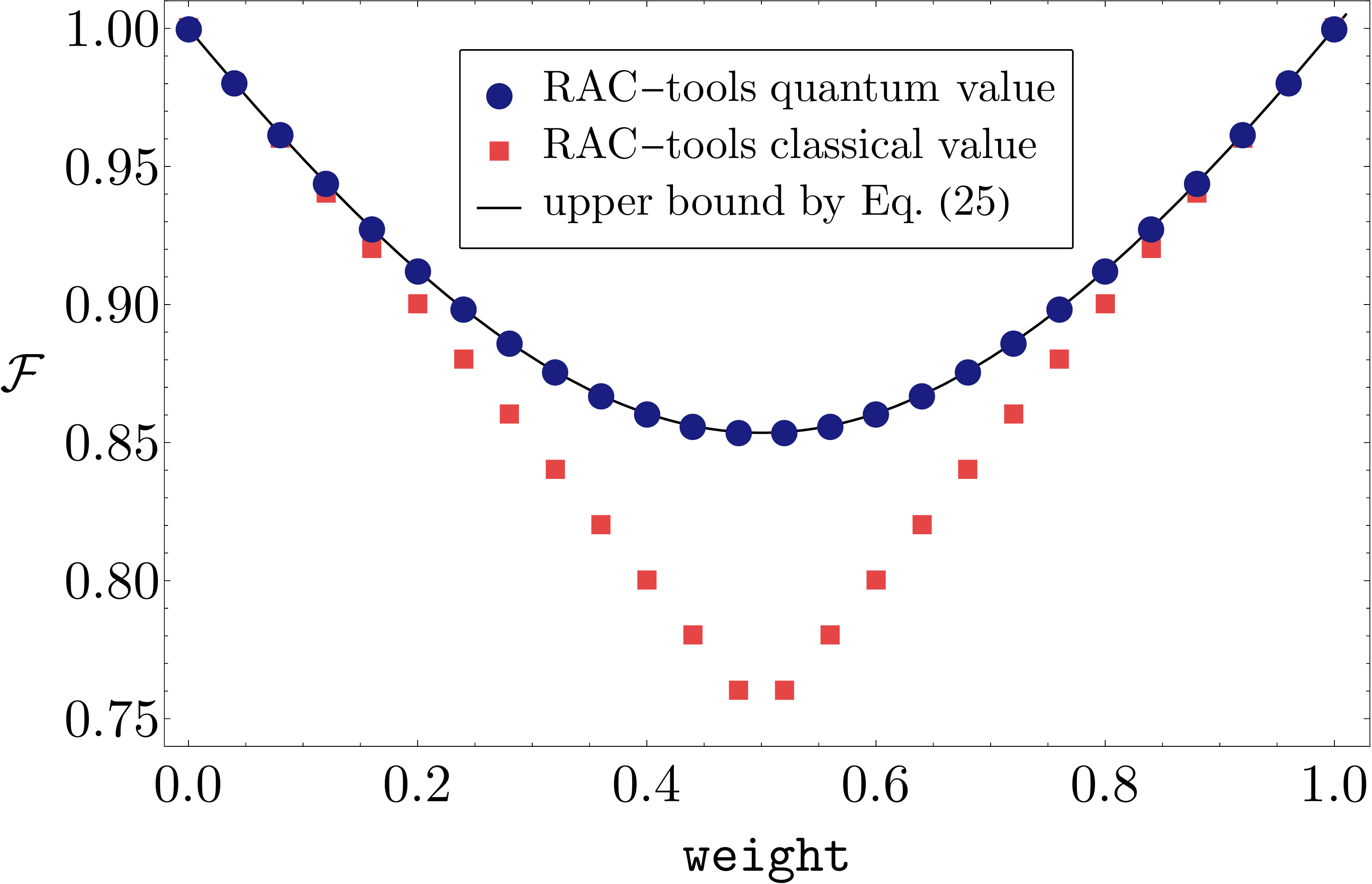}\\
    \vspace{2.5 mm}
    \hfill\includegraphics[width = 8.4 cm]{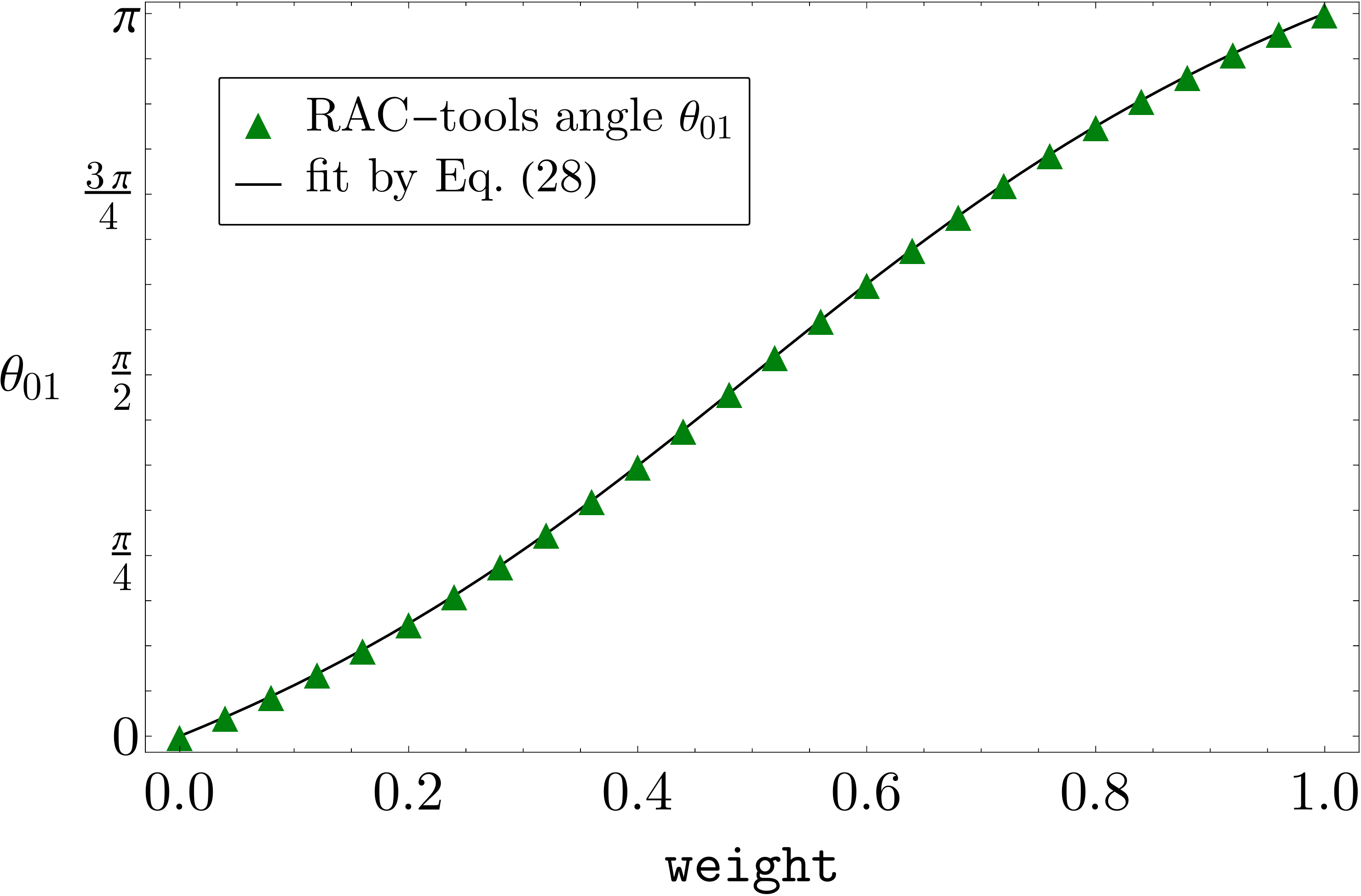}

    \caption{Top: Optimal performance of the $2^2\mapsto 1$ $b$-RAC with \protect\Verb+X\_CHESS+ bias, in combination with $r_0=r_1=\frac{1}{2}$, over quantum strategies (blue dots) and their classical counterpart (red squares). The numerical results for the quantum value are seen to agree with the theoretical prediction (solid line) extracted from Lemma \ref{lemma_2_bit}. Bottom: Angle between the Bloch vectors defining the optimal decoding strategy, as a function of the biasing parameter $w$. It is seen that these vectors are aligned only for the extremal values of $w$, meaning that the quantum value is strictly greater than the classical one for $w\in (0, 1)$.
    \label{fig:X_CHESS_2} }
\end{figure}

In the case $n=3$, as explained above, the value of both classical and quantum strategies becomes insensitive to variations of the biasing parameter $w$, coinciding with that of the unbiased RAC, which is given by
\begin{equation}
    F_Q=\frac{1}{2}+\frac{1}{2}\sqrt{\sum_y r_y^2}, 
\end{equation}
as follows from the previous discussion. It is perhaps surprising that even though in the extreme cases of $w \in \{0, 1\}$ we are left with only four out of the original eight strings, the RAC task does not get any easier. This could serve as an indication that some subsets of strings are as difficult to compress as the set of all strings (regardless of whether the compression is classical or quantum).

\subsection{The \texorpdfstring{\protect\Verb+X\_PLANE+}{} bias family}

So far, we have been introducing biases in the distribution of input strings by setting $\alpha_{{\bf x} y}=\alpha_{\bf x} r_y$, where $\alpha_{\bf x}$ is interpreted as the probability for input $\bf x$ to be encoded. As a result, when considering the individual random variables corresponding to the characters in the input strings, they will, in general, exhibit correlations. We can consider the case in which the input string characters are independently biased by defining $\alpha_{\bf x}=\prod_i \alpha_{x_i}$, where $\alpha_{x_i}$ denotes the probability of the $i$-th character in the string being $x_i$. In particular, for the case of two-bit input strings the bias tensor is given by $\alpha_{x_0x_1 y} =\frac{1}{2} \alpha_{x_0} \alpha_{x_1}$. 

Now let $\alpha_{x_1=0} = \alpha_{x_1=1} = 1/2$, and keep $\alpha_{x_0}$ arbitrary. It then follows that $p_{00}=\alpha_{00}+\alpha_{11}=\frac{1}{2}$, and $p_{01} = \alpha_{01} + \alpha_{10} = \frac{1}{2}$, implying that the bias has no net effect on the value of strategies that do not drop bits, as can be checked directly from Eq.~\eqref{gsqrt}. Consequently, the optimal value of this $2^2\mapsto 1$ $b$-RAC among quantum, non bit-dropping strategies, is $F_Q=\frac{1}{2}(1+\frac{1}{\sqrt{2}})$. We should note, however, that for extreme biases, the first character of the input string is either always $0$ or always $1$. In a situation as such there is no reason to include the first bit in the strategy, since the best performance can be obtained with a constant decoding function. Because this bit dropping could become optimal as a strategy for biases below the extremal value, in order to compute the quantum value we should compare $F_Q$ with the best value attained by a bit-dropping strategy, which is easily found to be
\begin{equation}
\FC_Q^1=\begin{cases}
    \frac{1}{2}(1+\alpha_{x_0=0}) \quad \text{if}\; \alpha_{x_0=0}>\alpha_{x_0=1} \\
    \frac{1}{2}(1+\alpha_{x_0=1}) \quad \text{otherwise},
\end{cases} \label{Fq1}   
\end{equation}
where the symbol $\FC_Q^1$, as defined in Eq.~\eqref{Fqs}, denotes that the strategy attaining this value does not encode one of the bits of the input string. Note that $\FC_Q^1$ coincides with the classical value for this RAC, since by ignoring a bit we are left with only one to consider in the encoding strategy. It follows from Eq.\eqref{Fq1} that dropping the first bit becomes optimal whenever $\FC_Q^1>\smash{\frac{1}{2}(1+\frac{1}{\sqrt{2}})}$, which occurs for $\alpha_{x_0}>\smash{\frac{1}{\sqrt{2}}}$ if $\alpha_{x_0}>\alpha_{x_1}$. In that case, the quantum value is therefore given by
\begin{equation}
    \FC_Q=\begin{cases}
           \frac{1}{2}(1+\frac{1}{\sqrt{2}}) & \text{if}\; \alpha_{x_0}\leq \frac{1}{\sqrt{2}} \alpha_{} \\
           \frac{1}{2}(1+\alpha_{x_0}) & \text{otherwise}
          \end{cases}
\end{equation}
FIG.~\ref{fig:x_plane2} depicts the numerical results provided by the RAC-tools package, which agree with the analytical value provided above.
\begin{figure}[b!]
    \centering
    \hfill\includegraphics[width = 8.65 cm]{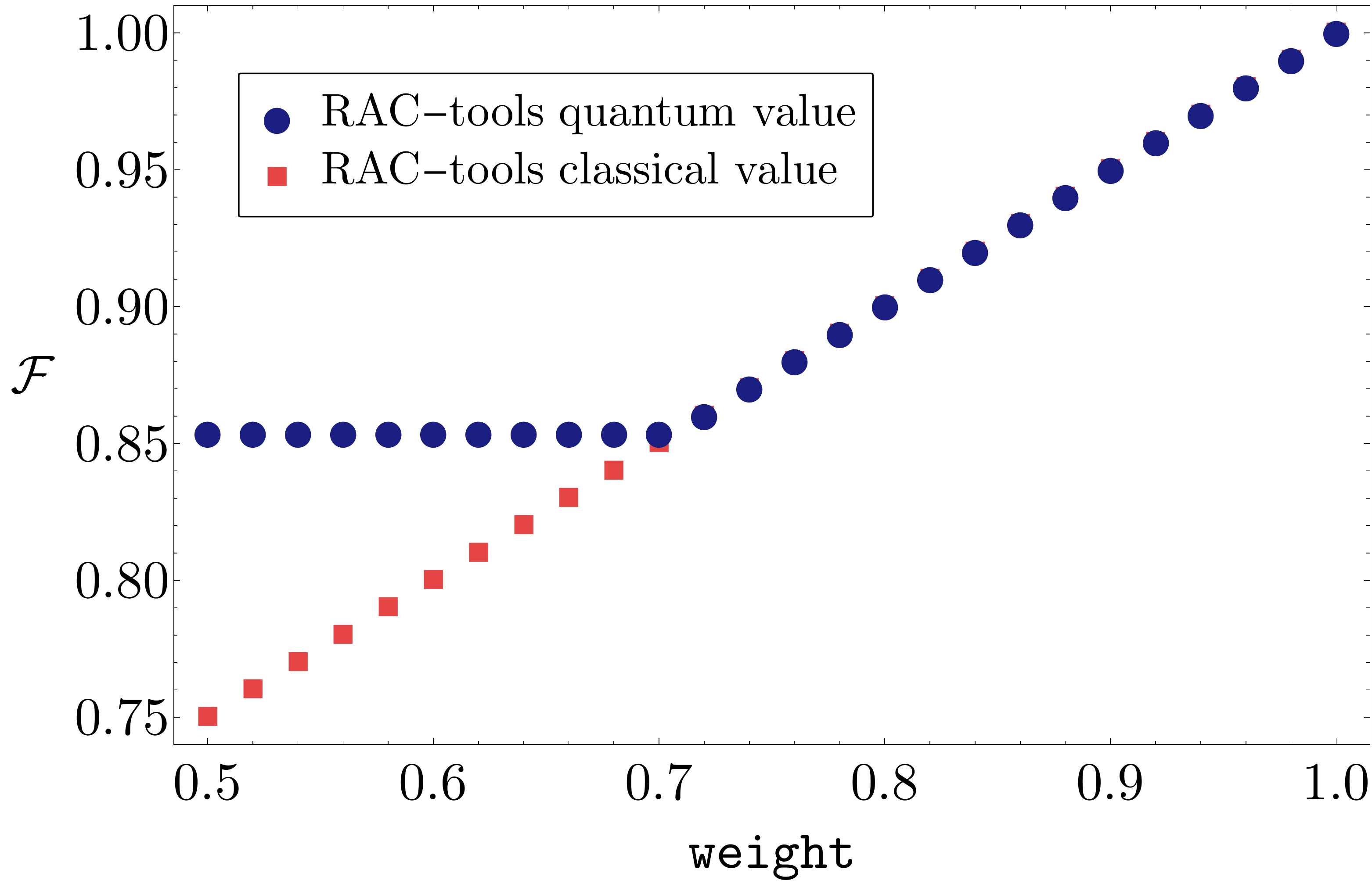}
    \caption{Optimal performance of the $2^2\mapsto 1$ $b$-RAC defined by the bias tensor $\alpha_{{\bf x} y}=\frac{1}{4}\alpha_{x_0}$. For $\alpha_{x_0}\leq\frac{1}{\sqrt{2}}$, the optimal strategy encodes both bits in the input string, and the quantum value (blue dots) is $\FC_Q=F_Q=\frac{1}{2}+(2\sqrt{2})^{-1}$ since the bias has no effect on the functional value. For $\alpha_{x_0}>\smash{\frac{1}{\sqrt{2}}}$, the best strategy does not encode the first bit, which reduces the value of the functional to the maximum attainable with classical strategies (red squares). For the region in which both bits are encoded, the angles representing the optimal measurements, as obtained by the RAC-tools package, are all $\frac{\pi}{2}$.} \label{fig:x_plane2}
\end{figure}

\begin{figure}[b!]
    \centering
    \hfill\includegraphics[width = 8.65 cm]{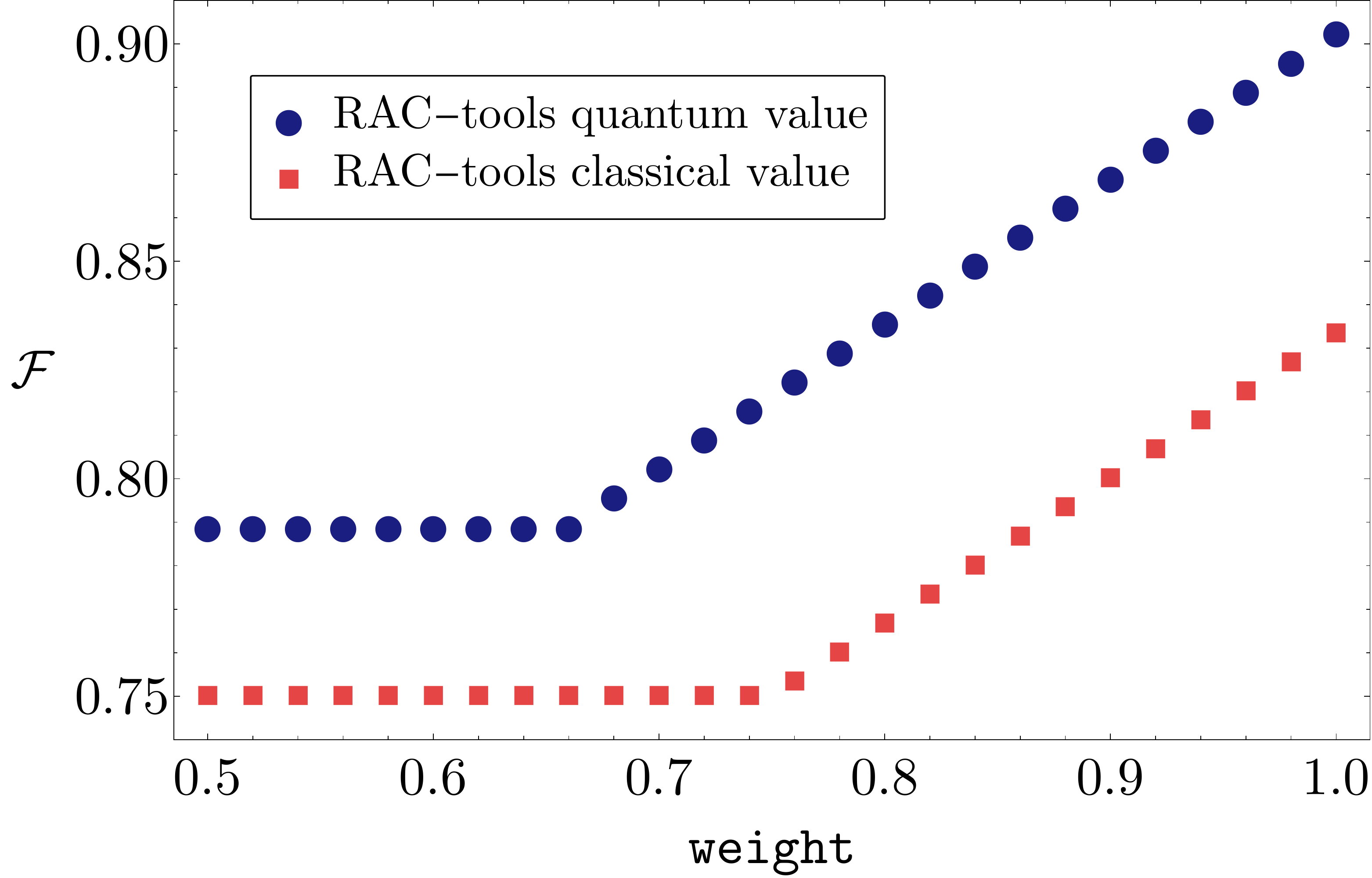}
    \caption{ Optimal performance of the $3^2\mapsto 1$ $b$-RAC defined by the bias tensor $\alpha_{x_0x_1x_2} = \smash{\frac{1}{4}}\alpha_{x_0}$. As observed in the $n=2$ case, there is a threshold value, $\alpha_{x_0} = \frac{1}{2}(1 + \sqrt{3} - \sqrt{2})$, above which the optimal quantum strategy ignores the first bit. Below the threshold the quantum value (blue dots) is $\FC_Q=F_Q=\frac{1}{2}(1+\frac{1}{\sqrt{3}})$, which coincides with the value of the unbiased RAC. For stronger biases, it becomes convenient to ignore the first bit, in which case the quantum value can be written as a shifted and rescaled version of the unbiased $2^2\mapsto 1$ RAC. As a result, unlike in the $n=2$ case, the optimal value above the threshold does not coincide with the classical value (red squares). Similarly to the scenario illustrated in FIG.~\ref{fig:x_plane2}, in the region where there are no ignored bits, the obtained optimal measurements are mutually orthogonal. When $x_0$ is ignored, measurement 0 becomes proportional to identity, while measurements 1 and 2 remain orthogonal.} \label{fig:x_plane3}
\end{figure}

A completely analogous analysis can be carried for the $n=3$ case for $\alpha_{x_0x_1x_2}=\frac{1}{4}\alpha_{x_0}$. As was the case for $n=2$, in the vicinity of the uniform distribution we expect the optimal strategy to include all three bits in the input string, in which case a direct calculation shows that the bias has no effect on the quantum value, which is given by $F_Q=\frac{1}{2}(1+\frac{1}{\sqrt{3}})$. As before we expect the optimal strategy to ignore the first bit when its value is strongly biased towards $0$ or $1$. If we take $\alpha_{x_0=0}\geq \alpha_{x_0=1}$, a direct calculation shows that the maximum value attained by a strategy ignoring the first bit is given by
\begin{equation}
    \FC_Q^1=\frac{1}{3}\left[\alpha_{x_0=0}+1+\frac{1}{\sqrt{2}}\right],
\end{equation}
which becomes larger than $F_Q$ for $\alpha_{x_0}>\frac{1}{2}(1+\sqrt{3}-\sqrt{2})$. The quantum value of this $b$-RAC is therefore given by the following piece-wise function
\begin{equation}
    \FC_Q=\begin{cases}
           \frac{1}{2}(1+\frac{1}{\sqrt{3}}) & \text{if}\; \alpha_{x_0}\leq \frac{1}{2}(1+\sqrt{3}-\sqrt{2}), \\
           \frac{1}{3}[\alpha_{x_0=0}+1+\frac{1}{\sqrt{2}}] & \text{otherwise.}
           \end{cases}
\end{equation}
FIG.~\ref{fig:x_plane3} depicts the results on the quantum and classical value of this $b$-RAC produced by the RAC-tools package, in full agreement with the analytical results provided above.

\subsection{The \texorpdfstring{\protect\Verb+B\_ONE+}{} bias family}

As described above, the elements of a bias tensor in this family take the form $\alpha_{\mathbf{x} y} = \frac{1}{n}\frac{1}{m^{n - 1}} \, w_{x_y}$. In the particular case of $n=2$ and $d=m=2$, which is addressed in Ref.~\cite{Kaczynska21}, we have
\begin{align}
\alpha_{\mathbf{x} y} = \frac{w_{x_y}}{4},
\end{align}
and a direct calculation using Eq.~\eqref{bitvalue} shows that, for this bias tensor, the $b$-RAC functional reads
\begin{equation}
    \FC_\leq\frac{1}{2}+ \frac{1}{4} \left[ \left|\cos\left(\frac{\theta}{2}\right)\right| + \sqrt{1-4\mu\cos^2\left(\frac{\theta}{2}\right)}\right] \label{b_one_221}
\end{equation}
with $\mu=w_0w_1$ and $\theta$ the angle between the Bloch vectors ${\bf m}_0$ and ${\bf m}_1$ characterizing the measurement operators. A search for critical points in Eq.~\eqref{b_one_221} shows that there are only two, satisfying either $\sin(\theta)=0$ or
\begin{equation}
    \cos\left(\frac{\theta}{2}\right)=\frac{1}{\sqrt{4\mu+16\mu^2}}. \label{b_one_angle}
\end{equation}
Whenever $\theta$ satisfies Eq.~\eqref{b_one_angle} the ensuing value of the functional reads
\begin{equation}
    \FC_Q=\frac{1}{2}+\frac{1}{8\sqrt{1+4\mu^2}}\left( \frac{1}{\sqrt{2}}+4\sqrt{\mu} \right), \label{b_one_qvalue}
\end{equation}
as previously reported in \cite{Kaczynska21}. As observed in the previous cases the condition $\sin(\theta)=0$ corresponds to commuting measurement operators, implying that in this case the value coincides with the classical value. The region in which this is the case is easily found to be $w_{0}\in[0, \frac{3-\sqrt{5}}{4}]\cup [\frac{1+\sqrt{5}}{4}, 1]$. In FIG.~\ref{fig:b_one_221}, both the quantum and classical values found by the RAC-tools package are depicted, and compared with $\FC_Q$ in Eq.~\eqref{b_one_qvalue}.

\begin{figure}[ht!]
    \centering
    \hfill\includegraphics[width = 8.6 cm]{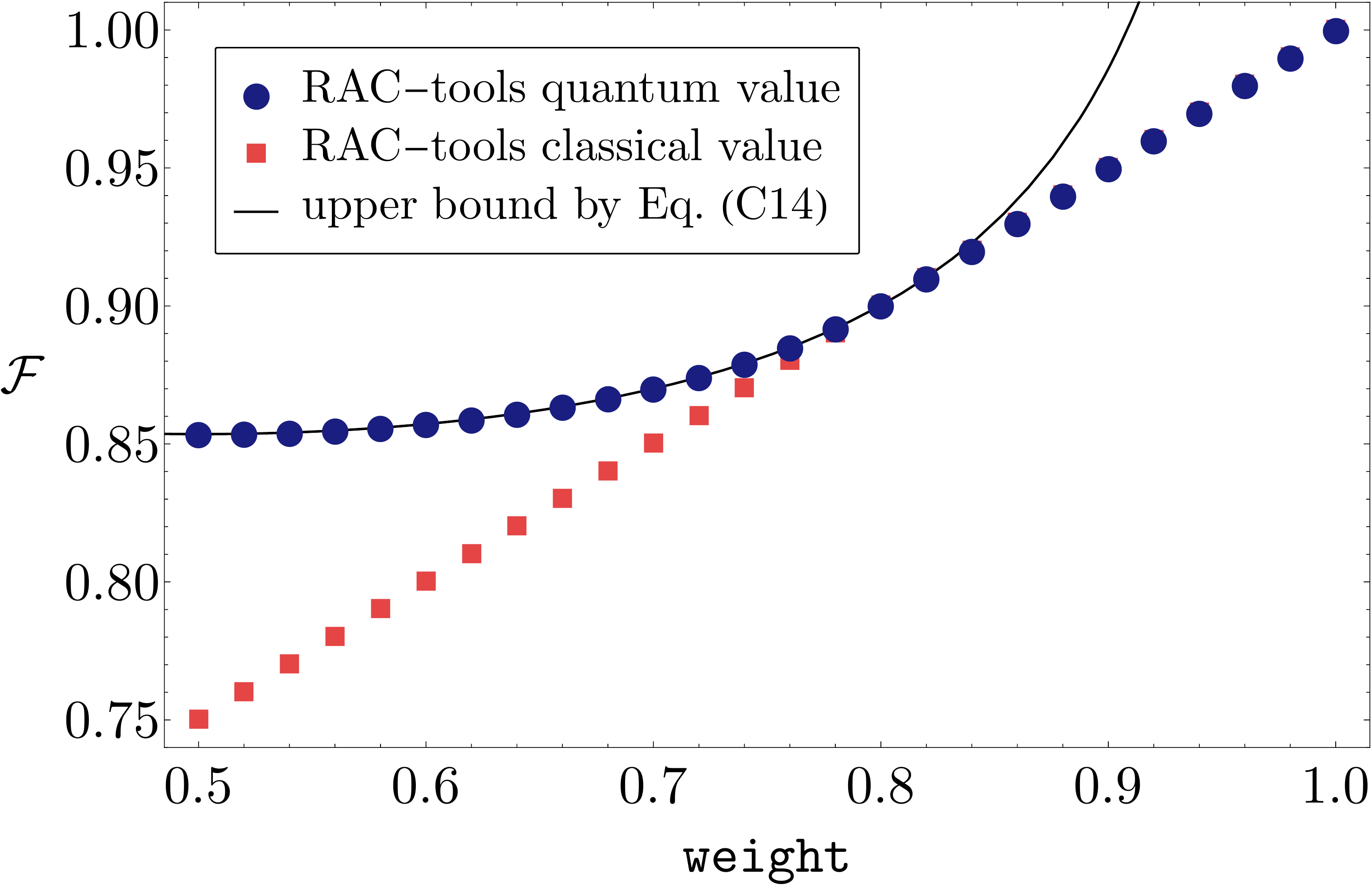}\\
    \hfill\includegraphics[width = 8.25 cm]{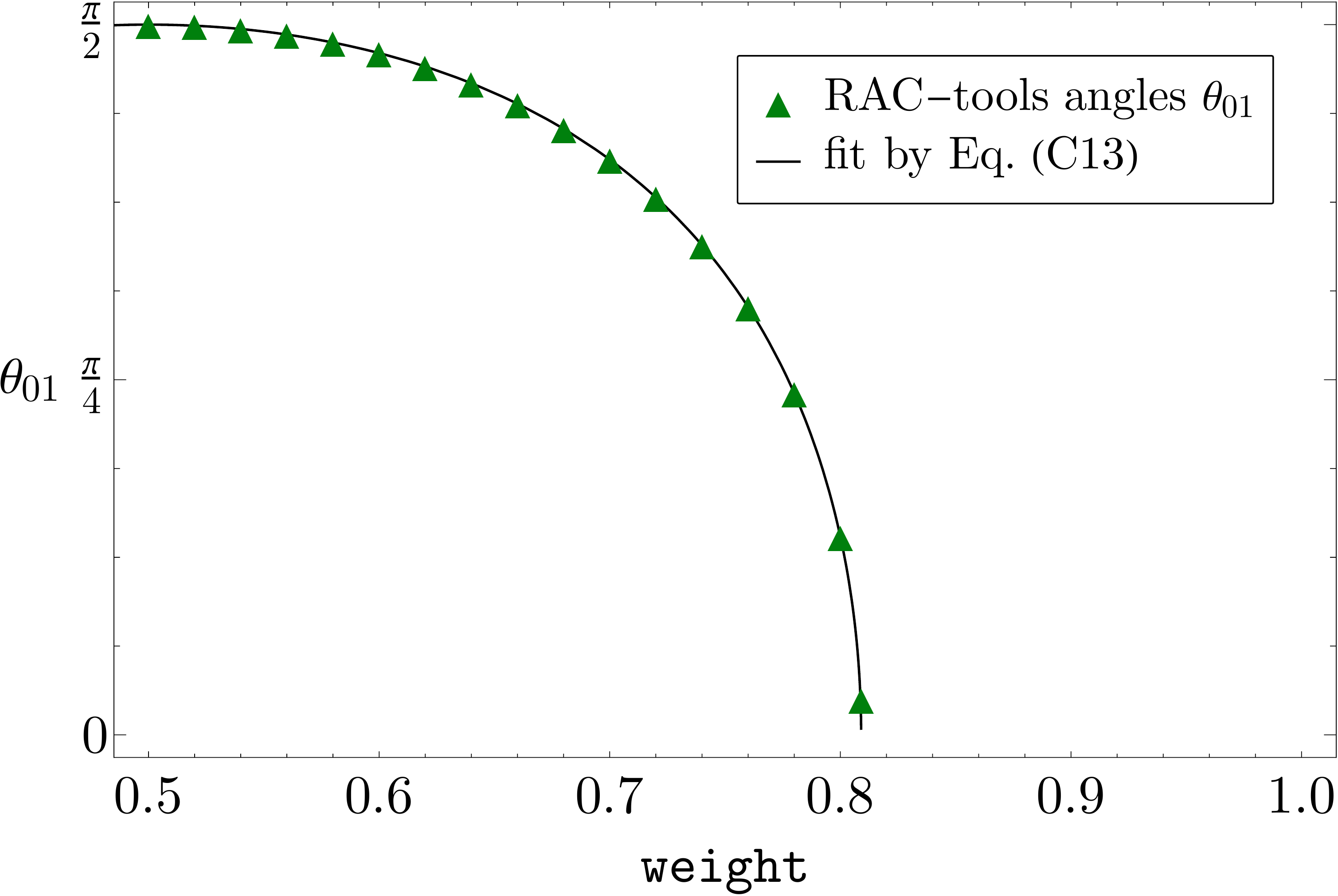}
    \caption{Top: Numerical quantum value (blue dots) of the of the biased RAC defined by $\alpha_{{\bf x} y}=\frac{w_{x_y}}{4}$, as a function of parameter $w_0$, compared to the corresponding classical value (red squares) and the value in Eq.~\eqref{b_one_qvalue} (solid line). Quantum strategies provide an advantage for $w_0\leq \frac{1+\sqrt{5}}{4}$. Bottom: Angle $\theta$ between the Bloch vectors parametrizing the optimal measurements. For $w_0$ within the region of quantum advantage, $\theta$ is given by Eq.~\eqref{b_one_angle}, becoming $0$ outside of it, where $\FC_Q=\FC_C$.}
    \label{fig:b_one_221}
\end{figure}

\section{Random biases for the \texorpdfstring{$2^{\MakeLowercase{d}}$}{} \texorpdfstring{$\mapsto 1$}{} RAC} \label{Ap:D}

In this appendix, we provide numerical evidence to support the claim that optimizing over projective measurements should be enough to find the quantum value of $2^d \mapsto 1$ $b$-RACs. Briefly, we tried to find counter-examples of such $b$-RACs in which the optimal realization is achieved by nonprojective measurements. In order to do that, we exhaustively evaluated the \verb+perform_search+ function for $2^d \mapsto 1$ $b$-RACs with $d \le 6$. We considered two kinds of biases: fully random biases and random factorizable biases. For the first case, we simply made up a bias tensor $\alpha_{\mathbf{x} y}$ the entries of which are uniformly distributed within the region $\alpha_{\mathbf{x} y} \ge 0$, for all $\mathbf{x}$, $y$, and $\sum_{\mathbf{x} y}\alpha_{\mathbf{x} y} = 1$. For the second case, we considered bias tensors such that $\alpha_{\mathbf{x} y} = \alpha_{\mathbf{x}} r_y$, where both $\alpha_x$ and $r_y$ are uniformly distributed over the inputs $\mathbf{x}$ and $y$, respectively.

The numerical results of this computation can be found in TABLE~\ref{projective vs not projective}. Apart from some pathological examples, all of the obtained realizations make use of projective measurements. This was the case for the vast majority of the computed samples except for five cases of random factorizable biases (four cases for $d=4$ and one case for $d=3$) in which the random draw of $r_y$ was almost deterministic. For these cases, since the weight in one of the measurements is almost zero, the classical and quantum values are numerically very close and due to insufficient numerical precision the optimization terminates with a nonprojective quantum strategy. 

\newpage

\begin{table*}[h!]
    \centering
    \caption{Samples with random biases for $2^d \mapsto 1$ RACs. This table consists of a compilation of numerical results produced by the RAC-tools package. The first column specifies the integer $d$ used for a given $2^d \mapsto 1$ RAC. The second column specifies one of the two kinds of biases used to generate this data set. The third column specify how many of such samples we considered for each case. In the two last columns, we show how often we retrieve a realization the measurements of which are all projective (fourth column) or not all projective (fifth column). The number of seeds used for each sample was 3.}
    \begin{tabular}{||c c c c c||}
        \hline
        \,$d$ & \protect\Verb+bias+ & No. samples & No. P & No. NP \\ [0.5 ex] 
        \hline\hline
        \,2 & \hspace{1mm} full random & 10\:000 & 10\:000 & 0\\ [0.1 ex]
        \hline
        \,2 & \hspace{1mm} factorizable random & 10\:000 & 10\:000 & 0\\ [0.1 ex]
        \hline
        \,3 & \hspace{1mm} full random & 10\:000 & 10\:000 & 0\\ [0.1 ex]
        \hline
        \,3 & \hspace{1mm} factorizable random & 10\:000 & 9\:999 & 1\\ [0.1 ex]
        \hline
        \,4 & \hspace{1mm} full random & 5\:000 & 5\:000 & 0 \\ [0.1 ex]
        \hline
        \,4 & \hspace{1mm} factorizable random & 5\:000 & 4\:996 & 4 \\ [0.1 ex]
        \hline
        \,5 & \hspace{1mm} full random & 2\:500 & 2\:500 & 0\\ [0.1 ex]
        \hline
        \,5 & \hspace{1mm} factorizable random & 2\:500 & 2\:500 & 0\\ [0.1 ex]
        \hline
        \,6 & \hspace{1mm} full random & 1\:000 & 1\:000 & 0 \\ [0.1 ex]
        \hline
        \,6 & \hspace{1mm} factorizable random & 1\:000 & 1\:000 & 0 \\ [0.1 ex]
        \hline
    \end{tabular}
    \label{projective vs not projective}
\end{table*}

\end{document}